\documentclass[final,nocopyrightspace]{sigplanconf}

\usepackage[english]{babel}

\usepackage{amsmath}
\usepackage{amsfonts,mathrsfs,amssymb}
\usepackage{amsthm}
\usepackage{stmaryrd}

\usepackage{multirow}

\usepackage[inline]{enumitem}
\setlist[itemize]{label=-}

\setlist[enumerate]{label=\small(\roman*), labelwidth=3pt}

\newlist{asparaenum}{enumerate}{3}
\setlist[asparaenum]{label=(\roman*), align=left, leftmargin=0pt, labelindent=!, listparindent=\parindent, labelwidth=0pt, itemindent=!}

\newlist{inparaenum}{enumerate*}{3}
\setlist[inparaenum]{label=(\roman*)}

\newlist{asparaitem}{enumerate}{3}
\setlist[asparaitem]{label=-, align=left, leftmargin=0pt, labelindent=!, listparindent=\parindent, labelwidth=0pt, itemindent=!}

\usepackage{booktabs}

\theoremstyle{plain}
\newtheorem{lemma}{Lemma}
\newtheorem{theorem}{Theorem}

\newtheorem{corollary}{Corollary}
\theoremstyle{definition}
\newtheorem{definition}{Definition}
\newtheorem{example}{Example}

\newtheorem*{correctness-assumption}{Correctness Assumption}
\theoremstyle{remark}
\newtheorem{remark}{Remark}

\newenvironment{proof*}
    {\begin{proof}[Proof \textup(Sketch\textup)]}
    {\end{proof}}

\usepackage[ final,
             bookmarks,
             bookmarksopen,
             colorlinks,
             breaklinks,
             linkcolor=blue,
             citecolor=black!50!brown,
             pdfstartview=FitH ]%
{hyperref}
\usepackage[hyphenbreaks]{breakurl}

\input{macros}

\usepackage[final]{listings}

\usepackage{xcolor}
\colorlet{keyword}{blue!50!black}
\colorlet{atom}{red!50!black}
\colorlet{module}{green!30!black}
\colorlet{comment}{black!70}
\colorlet{coderules}{black!50}
\colorlet{lineno}{black!50}

\makeatletter
\newcommand{\srcsize}{\@setfontsize{\srcsize}{8pt}{8pt}}
\makeatother

\lstdefinelanguage{CoreErlang}
  {morekeywords={fun,and,case,letrec,let,in,catch,div,end,exit,export,halt,%
      if,import,link,make_ref,module,monitor,of,or,receive,self,send,spawn,throw,to,%
      unlink%
      },%
   morekeywords={[2]error,false,nil,ok,true,undefined,pid},%
   otherkeywords={!},%
   morecomment=[l]\%,%
   morestring=[b]"%
  }[keywords,comments,strings]%

\lstdefinestyle{sans}{
    xleftmargin=20pt,
    tabsize=4,
    showstringspaces=false,
    columns=[l]flexible,
    breaklines,
    fontadjust,
    numbers=left,
    numberstyle={\tiny\color{lineno}},
    literate={
        {->}{{$\hspace{-1pt}\rightarrow$}}2
        {...}{{$\ldots$}}3
    },
    basicstyle={\sffamily\srcsize}, 
    keywordstyle={\color{keyword}\bf},
    keywordstyle={[2]\color{atom}},
    commentstyle={\scriptsize\color{comment}},
    emphstyle={\color{atom}},
    emphstyle={[2]\color{module}},
    moredelim=[is][emphstyle]{\#}{\#},
    moredelim=[is][emphstyle2]{@}{@},
    escapechar=§,mathescape
}
\lstdefinestyle{boxed}{frame=single,frameround=tttt,backgroundcolor=\color{keyword!5}}
\lstdefinestyle{head}{style=boxed, frame=tlr, frameround=tfft}
\lstdefinestyle{middle}{style=boxed, frame=lr, firstnumber=last}
\lstdefinestyle{tail}{style=boxed, frame=lrb, frameround=fttf, firstnumber=last}

\lstdefinestyle{inl}{columns=[c]flexible, basicstyle={\sffamily\footnotesize}}

\lstset{language=CoreErlang, style=sans}
\newcommand{\erl}[2][]{\ifmmode\expandafter\text\fi{\lstinline[style=inl,#1]{#2}}}

\lstnewenvironment{erlang}[1][]{
 \lstset{language=CoreErlang, style=sans, #1}
}{}
\lstnewenvironment{erlang*}[1][]{
	\lstset{language=CoreErlang, style=sans, numbers=none, #1}
}{}
\lstnewenvironment{erlango}[1][]{
  \lstset{language=CoreErlang, style=sans, style=boxed, frame=tlr,%
  frameround=ttff, #1} }{}


\allowdisplaybreaks

\usepackage{tikz}
\usetikzlibrary{shapes,arrows,positioning,fit,matrix,automata}

\begin{document}

\title{Automatic Verification of Erlang-Style Concurrency}

\authorinfo{Emanuele D'Osualdo}{University of Oxford}{emanuele.dosualdo@cs.ox.ac.uk}
\authorinfo{Jonathan Kochems}{University of Oxford}{jonathan.kochems@cs.ox.ac.uk}
\authorinfo{C.-H.~Luke~Ong}{University of Oxford}{luke.ong@cs.ox.ac.uk}

\maketitle

\begin{abstract}
This paper presents an approach to verify safety properties of Erlang-style,
higher-order concurrent programs \emph{automatically}. Inspired by {Core
Erlang}, we introduce \Lang, a prototypical functional language with
pattern-matching algebraic data types, augmented with process creation and
asynchronous message-passing primitives. We formalise an abstract model of
\Lang\ programs called \emph{Actor Communicating System} (ACS) which has a
natural interpretation as a vector addition system, for which some verification
problems are decidable. We give a parametric abstract interpretation framework
for \Lang{} and use it to build a polytime computable, flow-based, abstract
semantics of \Lang\ programs, which we then use to bootstrap the ACS
construction, thus deriving a more accurate abstract model of the input program.
We have constructed \Soter, a tool implementation of the verification method,
thereby obtaining the first fully-automatic, infinite-state model checker for a
core fragment of Erlang. We find that in practice our abstraction technique is
accurate enough to verify an interesting range of safety properties. Though the
ACS coverability problem is \expspace-complete, \Soter\ can analyse these
verification problems surprisingly efficiently.

\end{abstract}

\keywords Verification, Infinite-State Model Checking, Static Analysis, Petri Nets, Erlang

\section{Introduction}\label{sec:intro}

This paper concerns the verification of concurrent programs written in Erlang.
Originally designed to program fault-tolerant distributed systems at Ericsson in
the late 80s, Erlang is now a widely used, open-sourced language with support
for higher-order functions, concurrency, communication, distribution, on-the-fly
code reloading, and multiple platforms~\cite{Armstrong:93,Armstrong:10}.
Largely because of a runtime system that offers highly efficient process
creation and message-passing communication, Erlang is a natural fit for
programming multicore CPUs, networked servers, parallel databases, GUIs, and
monitoring, control and testing tools.

The sequential part of Erlang is a higher order, dynamically typed,
call-by-value functional language with pattern-matching algebraic data types.
Following the \emph{actor model}~\cite{Agha:86}, a concurrent Erlang
computation consists of a dynamic network of processes that communicate by
message passing. Every process has a unique process identifier (pid), and is
equipped with an unbounded mailbox. Messages are sent asynchronously in the
sense that send is non-blocking. Messages are retrieved from the mailbox, not
FIFO, but First-In-First-Firable-Out (FIFFO) via pattern-matching. A process may
block while waiting for a message that matches a certain pattern to arrive in
its mailbox. For a quick and highly readable introduction to Erlang, see
Armstrong's \emph{CACM} article~\cite{Armstrong:10}.

\paragraph{Challenges.}

Concurrent programs are hard to write. They are just as hard to verify. In the case of Erlang programs, the inherent complexity of the verification task can be seen from several diverse sources of infinity in the state space.
\begin{enumerate}[label={($\infty$\,\arabic*)},leftmargin=3.5em]
    \item \label{infty:nontail}
        General recursion requires a (process local) call-stack.

    \item \label{infty:order}
        Higher-order functions are first-class values;
        closures can be
        passed as parameters or returned.

	\item \label{infty:domain}
		Data domains, and hence the message space, are unbounded: functions may return, and variables may be bound to, terms of an arbitrary size.

	\item \label{infty:spawn}
		An unbounded number of processes can be spawned dynamically.

	\item \label{infty:mailbox}
		Mailboxes have unbounded capacity.

\end{enumerate}
The challenge of verifying Erlang programs is that one must reason about the asynchronous communication of an unbounded set of messages, across an unbounded set of Turing-powerful processes.

Our goal is to verify safety properties of Erlang-like programs \emph{automatically}, using a combination of static analysis and infinite-state model checking. To a large extent, the key decision of which causes of infinity to model as accurately as possible and which to abstract is forced upon us: the class consisting of a fixed set of context-free (equivalently, first-order) processes, each equipped with a mailbox of size one and communicating messages from a finite set, is already Turing powerful \cite{DOsualdo:11}. Our strategy is thus to abstract \ref{infty:nontail}, \ref{infty:order} and \ref{infty:domain}, while seeking to analyse message-passing concurrency, assuming \ref{infty:spawn} and \ref{infty:mailbox}.

We consider programs of \Lang, a prototypical functional language with actor-style concurrency. \Lang\ is essentially \emph{Core Erlang} \cite{Carlsson:01}---the official intermediate representation of Erlang code, which exhibits in full the higher-order features of Erlang, with asynchronous message-passing concurrency and dynamic process creation.

With decidable infinite-state model checking in mind, we introduce \emph{Actor Communicating System} (ACS), which models the interaction of an unbounded set of communicating processes. An ACS has a finite set of control states $Q$, a finite set of \emph{pid classes} $P$, a finite set of messages $M$, and a finite set of transition rules. An ACS transition rule has the shape $\ecsrule[\pid]{q}{\ell}{q'}$, which means that a process of pid class $\pid$ can transition from state $q$ to state $q'$ with (possible) \emph{communication side effect} $\ell$, of which there are four kinds, namely,
\begin{inparaenum}
\item the process makes an internal transition
\item it extracts and reads a message $m$ from its mailbox
\item it sends a message $m$ to a process of pid class $\pid'$
\item it spawns a process of pid class $\pid'$.
\end{inparaenum}
ACS models are infinite state: the mailbox of a process has unbounded capacity,
and the number of processes in an ACS may grow arbitrarily large. However the
set of pid classes is fixed, and processes of the same pid class are not
distinguishable.

An ACS can be interpreted naturally as a \emph{vector addition system} (VAS), or
equivalently Petri net, using counter abstraction. Recall that a VAS of
dimension $n$ is given by a set of $n$-long vectors of integers regarded as
transition rules. A VAS defines a state transition graph whose states are just
$n$-long vectors of non-negative integers. There is a transition from state
$\vec{v}$ to state $\vec{v}'$ just if $\vec{v}' = \vec{v} + \vec{r}$ for some
transition rule $\vec{r}$. It is well-known that the decision problems
\emph{Coverability} and \emph{LTL Model Checking} for VAS are EXPSPACE-complete;
\emph{Reachability} is decidable but its complexity is open.
We consider a particular counter abstraction of ACS, called \emph{VAS
semantics}, which models an ACS as a VAS of dimension $|P| \times (|Q| + |M|)$,
distinguishing two kinds of counters. A counter named by a pair $(\pid, q)$
counts the number of processes of pid class $\pid$ that are currently in state
$q$; a counter named by $(\pid, m)$ counts the sum total of occurrences of a
message $m$ currently in the mailbox of $p$, where $p$ ranges over processes of
pid class $\pid$.
Using this abstraction, we can conservatively decide properties of the ACS using
well-known decision procedures for VAS.

\paragraph{Parametric, Flow-based Abstract Interpretation.}
The starting point of our verification pathway is the abstraction of the sources
of infinity \ref{infty:nontail}, \ref{infty:order} and \ref{infty:domain}.
Methods such as \kCFA\ \cite{Shivers:91} can be used to abstract higher-order
recursive functions to a finite-state system. Rather than `baking in' each type
of abstraction separately, we develop a general abstract interpretation
framework which is \emph{parametric} on a number of basic domains. In the style
of Van Horn and Might \cite{VanHorn:10}, we devise a machine-based operational
semantics of \Lang\ that uses \emph{store-allocated continuations}. The
advantage of such an indirection is that it enables the construction of a
machine semantics which is `generated' from the basic domains of \Time,
\Mailbox\ and \Data. We show that there is a simple notion of \emph{sound
abstraction of the basic domains} whereby every such abstraction gives rise to a
sound abstract semantics of \Lang\ programs (Theorem~\ref{thm:CFAsound}).
Further if a given sound abstraction of the basic domains is finite and the
associated auxiliary operations are computable, then the derived abstract
semantics is finite and computable.

\paragraph{Generating an Actor Communicating System.}
We study the abstract semantics derived from a particular \zCFA-like abstraction of the basic domains. However we do not use it to verify properties of \Lang\ programs directly, as it is too coarse an abstraction to be useful. Rather, we show that a sound ACS (Theorem~\ref{thm:ACSsound}) can be constructed in polynomial time by \emph{bootstrapping} from the \zCFA-like abstract semantics. Further, the dimension of the resulting ACS is polynomial in the length of the input \Lang\ program. The idea is that the \zCFA-like abstract (transition) semantics constitutes a sound but rough analysis of the control-flow of the program, which takes higher-order computation into account but communicating behaviour only minimally. The bootstrap construction consists in constraining these rough transitions with guards of the form `receive a message of this type' or `send a message of this type' or `spawn a process', thus resulting in a more accurate abstract model of the input \Lang\ program in the form of an ACS.

\paragraph{Evaluation.}
To demonstrate the feasibility of our verification method, we have constructed a
prototype implementation called \Soter. Our empirical results show that the
abstraction framework is accurate enough to verify an interesting range of
safety properties of non-trivial Erlang programs.

\paragraph{Outline.}
In Section~\ref{sec:erlang} we define the syntax of \Lang\ and informally
explain its semantics with the help of an example program.
In Section~\ref{sec:acs}, we introduce Actor Communicating System and its VAS
semantics.
In Section~\ref{sec:concrete} we present a machine-based operational semantics
of \Lang. In Section~\ref{sec:cfa} we develop a general abstract interpretation
framework for \Lang\ programs, parametric on a number of basic domains. In
Section~\ref{sec:generateACS}, we use a particular instantiation of the abstract
interpretation to bootstrap the ACS construction.
In Section~\ref{sec:evaluation} we present the experimental results based on our
tool implementation \Soter, and discuss the limitations of our approach.

\paragraph{Notation.}
We write $A^*$ for the set of finite sequences of elements of the set $A$, and
$\eseq$ for the null sequence. Let $a\in A$ and $l,l'\in A^*$, we overload
`$\cons$' so that it means insertion at the top $a\cons l$, at the bottom
$l\cons a$ or concatenation $l\cons l'$.
We write $l_i$ for the $i$-th element of $l$.
The set of finite partial functions from $A$ to $B$ is denoted $A\finmap B$.
Given $f\colon A\finmap B$ we define
$f[a\mapsto b]\is
	(\lambda x.\,\textbf{if }(x\!=\!a)\textbf{ then } b \textbf{ else }f(x))$
and write $[]$ for the everywhere undefined function.

\section{A Prototypical Fragment of Erlang}
\label{sec:erlang}

In this section we introduce \emph{\Lang}, a prototypical untyped functional
language with actor concurrency. \Lang\ is essentially single-node \emph{Core
Erlang}~\cite{Carlsson:01}---the official intermediate representation of Erlang
code---without built-in functions and fault-tolerant features. It exhibits in
full the higher-order features of Erlang, with message-passing concurrency and
dynamic process creation.

\paragraph{Syntax}
\label{sec:syntax}

The syntax of \Lang\ is defined as follows:
\begin{syntax}
    e\in\dom{Exp} \prod &  x
        |	\erl{#c#(}\lstc{e}{n}\erl{)}
        |	e_0\erl{(}\lstc{e}{n}\erl{)}
        | \fun\\
        |& \erl{letrec}\ x_1 \erl{=} \fun_1. \cdots
                          x_n \erl{=} \fun_n.\
                \erl{in}\ e\\
        |&	\erl{case}\ e\ \erl{of}
                    \ \pat_1\to e_1; \ldots;
                      \pat_n\to e_n
          \ \erl{end}\\
        |&	\erl{receive}\ pat_1\to e_1; \ldots; pat_n\to e_n\ \erl{end}\\
        |&	\erl{send(} e_1, e_2 \erl{)}
        |	\erl{spawn(} e \erl{)}
        |	\erl{self()}\\
    \fun \prod& \erl{fun(}\lstc{x}{n}\erl{)} \to e\\
    \pat \prod& x \mid \erl{#c#(}\lstc{pat}{n}\erl{)}
\end{syntax}
where \erl{#c#} ranges over a finite set $\Constr$ of constructors which we
consider fixed thorough out the paper.

For ease of comparison we keep the syntax close to Core Erlang and use uncurried
functions, delimiters, \erl{fun} and \erl{end}.
We write~\mbox{`\erl{_}'} for an unnamed unbound variable; using symbols
from $\Constr$, we write $n$-tuples as
    \mbox{\erl{\{}$e_1$, \dots, $e_n$\erl{\}},}
the list constructors as cons \mbox{\erl{[\_|\_]}} and the empty list
as \mbox{\erl{[]}.} Sequencing \mbox{$(e_1$\erl{,}\ $e_2)$} is a shorthand for
\mbox{\erl{(fun(_)->}$e_2$\erl{)(}$e_1$\erl{)}} and we we omit brackets for
nullary constructors. The character `\erl{\%}' marks the
start of a line of comment. Variable names begin with an uppercase letter,
except when bound by \mbox{\erl{letrec}.}
The free variables $\freevars(e)$ of an expression are defined as usual. A
\Lang\ program \Prog\ is just a closed \Lang\ expression.

\paragraph{Labels}\label{sec:labels}
For ease of reference to program points, we associate a unique
label to each sub-expression of a program. We write $\ell\colon e$ to mean that
$\ell$ is the label associated with $e$, and we often omit the label altogether.
Take a term $\ell\colon(\ell_0\colon e_0(\ell_1\colon e_1,\ldots,\ell_n\colon
e_n))$, we define $\larg{\ell}{i}\is\ell_i$ and $\largn{\ell}\is n$.

\newcommand\onestepred{\longrightarrow}

\paragraph{Semantics}\label{sec:informal-semantics}
The semantics of \Lang\ is defined in Section~\ref{sec:semantics}, but
we informally present a small-step reduction semantics here to give an intuition of its model of concurrency.
The rewrite rules for the cases of function application and $\lambda$-abstraction are the standard ones for call-by-value $\lambda$-calculus; we write evaluation contexts as $E[\;]$.

A state of the computation of a \Lang\ program is a set $\Pi$ of processes
running in parallel. A process $\proc{e}{\pid}{\mailbox}$, identified by the pid
$\pid$, evaluates an expression $e$ with mailbox $\mailbox$ holding the messages
not yet consumed. Purely functional reductions with no side-effect take place in
each process, independently interleaved.
A \erl{spawn} construct, $ \erl{spawn(fun()->} e \erl{)} $, evaluates to a fresh
pid $\pid'$ (say), with the side-effect of the creation of a new process,
$\proc{e}{\pid'}{\emptymb}$, with pid~$\pid'$:
\[
    \proc{ E[ \erl{spawn(fun()->} e \erl{)} ] }{\pid}{\mailbox}
        \parallel \Pi
   \quad \onestepred \quad
    \proc{ E[ \pid' ] }{\pid}{\mailbox} \parallel \proc{e}{\pid'}{\emptymb}
        \parallel \Pi
\]
A \erl{send} construct, $\erl{send(} \pid, v \erl{)}$, evaluates to the message $v$ with the side-effect of appending it to the mailbox of the receiver process $\pid$; thus send is non-blocking:
\[
    \proc{ E[ \erl{send(} \pid, v \erl{)} ] }{\pid'}{\mailbox'} \parallel
    \proc{ e }{ \pid }{ \mailbox }
        \parallel \Pi
    \quad \onestepred \quad
    \proc{ E[ v ] }{\pid'}{\mailbox'} \parallel
    \proc{ e }{ \pid }{ \mailbox \cons v}
        \parallel \Pi
\]
The evaluation of a \erl{receive} construct, $ \erl{receive}\ p_1\to e_1\dots p_n\to e_n ~\erl{end}$, will block
if the mailbox of the process in question contains no message that matches any of the patterns $p_i$. Otherwise, the first message $m$ that matches a pattern, say $p_i$, is consumed by the process, and the computation continues with the evaluation of $e_i$. The pattern-matching variables in $e_i$ are bound by $\theta$ to the corresponding matching subterms of the message $m$; if more than one pattern matches the message, then (only) the first in textual order is fired.
\begin{multline*}
    \proc{ E[ \erl{receive}\ p_1\to e_1\dots p_n\to e_n~\erl{end} ] }
            {\pid}{\mailbox\cons m\cons \mailbox'}
        \parallel \Pi
    \\\onestepred
    \proc{ E[ \theta e_i ] }{\pid'}{\mailbox\cons\mailbox'}
        \parallel \Pi ;
\end{multline*}

Note that message passing is \emph{not} First-In-First-Out but rather
First-In-First-Fireable Out (FIFFO): incoming messages are queued at the end of
the mailbox but the message that matches a receive construct, and is
subsequently extracted,
is not necessarily the first in the queue.

\begin{figure}[t]
\begin{center}
\begin{erlang}[emph={lock,acquired,do,req,ans,unlock,write,read,succ,zero}]
letrec
res_start= fun(Res) -> spawn(fun() -> res_free(Res)).
res_free= fun(Res) ->
    receive {lock, P} ->
        send(P, {acquired, self()}), res_locked(Res, P)
    end.
res_locked= fun(Res, P) ->
    receive
        {req, P, Cmd} ->
            case Res(P, Cmd) of
                {NewRes, ok} ->
                    res_locked(NewRes, P);
                {NewRes, {reply, A}} ->
                    send(P, {ans, self(), A}),
                    res_locked(NewRes, P)
            end;
        {unlock, P} -> res_free(Res)
    end.

res_lock= fun(Q) -> send(Q, {lock, self()}),
                    receive {acquired, Q} -> ok end.
res_unlock= fun(Q) -> send(Q, {unlock, self()}).
res_request= fun(Q, Cmd) ->
    send(Q, {req, self(), Cmd}),
    receive {ans, Q, X} -> X end.
res_do= fun(Q, Cmd) -> send(Q, {req, self(), Cmd}).

cell_start= fun() -> res_start(cell(zero)).
cell= fun(X) ->
        fun(_P, Cmd) ->
            case Cmd of
                {write, Y} -> {cell(Y), ok};
                 read      -> {cell(X), {reply, X}}
            end.

cell_lock  = fun(C) -> res_lock(C).
cell_unlock= fun(C) -> res_unlock(C).
cell_read  = fun(C) -> res_request(C, read).
cell_write = fun(C, X) -> res_do(C, {write, X}).

inc = fun(C) -> cell_lock(C),
                cell_write(C, {succ, cell_read(C)}),§\label{code:criticalsec}§
                cell_unlock(C).
add_to_cell = fun(M, C) ->
    case M of   zero        ->  ok;
                {succ, M'}  ->  spawn(fun() -> inc(C)),
                                add_to_cell(M', C)
    end.
in  C = cell_start(), add_to_cell(N, C).
\end{erlang}
  \caption{Locked Resource (running example)}
  \label{fig:reslock}
\end{center}
\end{figure}

\begin{example}[Locked Resource]
\label{ex:reslock}
Figure~\ref{fig:reslock} shows an example \Lang\ program.
The code has three logical parts, which would constitute three modules
in Erlang.
The first part defines an Erlang \emph{behaviour}\footnote{I.e.~a module
implementing a general purpose protocol, parametrised over another module containing the code specific to a particular instance.} that governs the
lock-controlled, concurrent access of a shared resource by a number of clients.
A resource is viewed as a function implementing a protocol that reacts to
requests; the function is called only when the lock is acquired.
Note the use of higher-order arguments and return values.
The function \erl{res_start} creates a new process that runs an unlocked
(\erl{res_free}) instance of the resource. When unlocked, a resource
waits for a \erl{\{#lock#, P\}} message to arrive from a client \erl{P}. Upon
receipt of such a message, an acknowledgement message is sent back to the client
and the control is yielded to \erl{res_locked}. When locked (by a client
\erl{P}), a resource can accept requests \erl{\{#req#,P,Cmd\}} from
\erl{P}---and from \erl{P} only---for an unspecified command \erl{Cmd} to be
executed.

After running the requested command, the resource is expected to return the
updated resource handler and an answer, which may be the atom \erl{#ok#}, which
requires no additional action, or a couple \mbox{\erl{\{#reply#, Ans\}}} which
signals that the answer \erl{Ans} should be sent back to the client.
When an unlock message is received from \erl{P} the control is given back to
\erl{res_free}.
Note that the mailbox matching mechanism allows multiple locks and requests
to be sent asynchronously to the mailbox of the locked resource without causing
conflicts: the pattern matching in the locked state ensures that all the
pending lock requests get delayed for later consumption once the
resource gets unlocked. The functions \erl{res_lock}, \erl{res_unlock},
\erl{res_request}, \erl{res_do} encapsulate the locking protocol, hiding it from
the user who can then use this API as if it was purely functional.

The second part implements a simple `shared memory cell' resource that holds
a natural number, which is encoded using the constructors \erl{#zero#} and
\erl{\{#succ#, \_\}}, and allows a client to read its value (the command
\erl{#read#}) or overwrite it with a new one (the \erl{\{#write#, X\}} command).
Without locks, a shared resource with such a protocol easily leads to race
conditions.

The last part defines the function \erl{inc} which accesses a locked cell
to increment its value. The function \erl{add_to_cell} adds \erl{M} to the
contents of the cell by spawning \erl{M} processes incrementing it concurrently.
Finally the entry-point of the program sets up a process with a shared locked
cell and then calls \erl{add_to_cell}. Note that \erl{N} is a free variable; to
make the example a program we can either close it by setting \erl{N} to a
constant or make it range over all natural numbers with the extension described
in Section~\ref{sec:openterms}.

An interesting correctness property of this code is the mutual exclusion of the
lock-protected region (i.e. line~\ref{code:criticalsec}) of the concurrent
instances of \erl{inc}.
\end{example}

\begin{remark}\label{rem:LangVsCoreErlang}
The following Core Erlang features are not captured by \Lang.
\begin{inparaenum}
    \item Module system, exception handling, arithmetic
    primitives, built-in data types and I/O can be straightforwardly translated or integrated into our framework. They are not treated here because they are tied to the inner workings of the Erlang runtime system.
    \item Timeouts in receives, registered processes and type guards can be supported using suitable abstractions.
     \item A proper treatment of monitor / link primitives and the multi-node semantics will require a major extension of the concrete (and abstract) semantics.
\end{inparaenum}
\end{remark}

\section{Actor Communicating Systems}
\label{sec:acs}

In this section we explore the design space of abstract models of Erlang-style
concurrency. We seek a model of computation that should capture the core
concurrency and asynchronous communication features of \Lang\ and yet enjoys the
decidability of interesting verification problems.  In the presence of
pattern-matching algebraic data types, the (sequential) functional fragment of
\Lang\ is already Turing powerful \cite{Ong:11}. Restricting it to a pushdown
(equivalently, first-order) fragment but allowing concurrent execution would
enable, using very primitive synchronization, the simulation of a
Turing-powerful finite automaton with two stacks.
A single finite-control process equipped with a mailbox (required for
asynchronous communication) can encode a Turing-powerful queue automaton in the
sense of Minsky. Thus constrained, we opt for a model of concurrent computation
that has finite control, a finite number of messages, and a finite number of
\emph{process classes}.

\begin{definition}\label{def:acs}
    An \emph{Actor Communicating System} (ACS) $\Acs$ is a tuple
    $
        \tuple{P, Q, M, R, \pid_0, q_0}
    $
    where $P$ is a finite set of \emph{pid-classes}, $Q$ is a finite set of
    control-states, $M$ is a finite set of messages, $\pid_0\in P$ is the
    pid-class of the initial process, $q_0\in Q$ is the initial state of the
    initial process, and $R$ is a finite set of rules of the form
    $\ecsrule[\pid]{q}{\ell}{q'}$ where $\pid\in P$, $q,q'\in Q$ and $\ell$
    is a label that can take one of four possible forms:
    \begin{itemize}
	    \item $\tau$, which represents an internal (sequential) transition of
              a process of pid-class $\pid$
	    \item $?m$ with $m\in M$: a process of pid-class $\pid$ extracts
              (and reads) a message $m$ from its mailbox
	    \item $\pid'!m$ with $\pid'\in P$, $m\in M$: a process of pid-class
              $\pid$ sends a message $m$ to a process of pid-class $\pid'$
	    \item $\nu \pid'.\: q''$ with $\pid'\in P$ and $q''\in Q$: a process of
              pid-class $\pid$ spawns a new process of pid-class $\pid'$ that
              starts executing from $q''$
    \end{itemize}
\end{definition}

Now we have to give ACS a semantics, but interpreting the ACS mailboxes as FIFFO queues would yield a Turing-powerful model.
Our solution is to apply a \emph{counter abstraction} on mailboxes: disregard
the ordering of messages, but track the number of occurrences of every message
in a mailbox. Since we bound the number of pid-classes, but wish to model
dynamic (and hence unbounded) spawning of processes, we apply a second counter
abstraction on the control states of each pid-class: we count, for each
control-state of each pid-class, the number of processes in that pid-class that
are currently in that state.

It is important to make sure that such an abstraction contains all the
behaviours of the semantics that uses FIFFO mailboxes: if there is a term in the
mailbox that matches a pattern, then the corresponding branch is
non-deterministically fired. To see the difference, take the ACS that has one
process (named~$\pid$), three control states $q$, $q_1$ and $q_2$, and two rules
$\eRec[q--?a-->q_1]$, $\eRec[q--?b-->q_2]$. When equipped with a FIFFO mailbox
containing the sequence $c\,a\,b$, the process can only evolve from $q$ to $q_1$
by consuming $a$ from the mailbox, since it can skip $c$ but will find a
matching message (and thus not look further into the mailbox) before reaching
the message $b$. In contrast, the VAS semantics would let $q$ evolve
non-deterministically to both $q_1$ and $q_2$, consuming $a$ or $b$
respectively: the mailbox is abstracted to $[a\mapsto 1, b\mapsto 1, c\mapsto
1]$ with no information on whether $a$ or $b$ arrived first.
However, the abstracted semantics does contain the traces of the FIFFO
semantics.

The VAS semantics of an ACS is a state transition system equipped with
counters (with values in $\Nat$) that support increment and decrement (when
non-zero) operations. Such infinite-state systems are known as \emph{vector
addition systems} (VAS), which are equivalent to Petri~nets.

\newcommand{\VasRules}{\mathbf{R}}
\begin{definition}[Vector Addition System]\label{def:vas}
    \begin{asparaenum}
	\item A \emph{vector addition system} (VAS) $\VAS$ is a pair $(I,R)$ where
		  $I$ is a finite set of indices (called the \emph{places} of the VAS)
		  and $R\subseteq\Int^I$ is a finite set of \emph{rules}.
		  Thus a rule is just a vector of integers of dimension $|I|$,
	      whose components are indexed (i.e.~named) by the elements of $I$.
	\item The \emph{state transition system} $\sem{\VAS}$ induced by a VAS $\VAS=(I,R)$ has state-set $\Nat^I$ and transition relation
	      \[\makeset{(\vec{v}, \vec{v}+\vec{r}) \mid
	      		\vec{v} \in \Nat^{I}, \vec{r} \in R, \vec{v} + \vec{r} \in \Nat^{I}}.\]
    \end{asparaenum}
	\noindent We write $\vec{v}\leq\vec{v}'$ just if
	for all $i$ in $I$, $\vec{v}(i)\leq\vec{v}'(i)$.
\end{definition}

The semantics of an ACS can now be given easily in terms of a corresponding
underlying vector addition system:

\begin{definition}[VAS semantics]
    \label{def:acs2vas}\label{def:parikhsem}
    The semantics of an ACS $\Acs=(P, Q, M, R, \pid_0, q_0)$
    is the transition system induced by the VAS $\VAS = (I, \VasRules)$ where
    $I = P \times (Q \dunion M)$ and $\VasRules = \makeset{\vec{r} \mid r \in R})$.
    The transformation $r \mapsto \vec{r}$ is defined as follows.
    \footnote{All unspecified components of the vectors $\vec{r}$ as defined
    in the table are set to zero.}
    \begin{center}
    \begin{tabular}{ll}
    	\toprule
    	\multicolumn{1}{c}{{\bf ACS Rules}: $r$} &
    	\multicolumn{1}{c}{{\bf VAS Rules}: $\vec{r}$}\\
    	\midrule
    	$\eTau\pid[q --> q']$ &
    		$[(\pid,q)\mapsto -1, (\pid,q')\mapsto 1]$\\
    	$\eRec\pid[q --?m--> q']$ &
    		$[(\pid,q)\mapsto -1, (\pid,q')\mapsto 1,(\pid,m)\mapsto -1]$\\
    	$\eSend\pid[q --\pid'!m--> q']$ &
    		$[(\pid,q)\mapsto -1, (\pid,q')\mapsto 1, (\pid',m)\mapsto 1]$\\
    	$\eSpawn\pid[q--v \pid'.\: q''--> q']$ &
    		$[(\pid,q)\mapsto -1, (\pid,q')\mapsto 1, (\pid',q'')\mapsto 1]$\\
    	\bottomrule
    \end{tabular}
    \end{center}
    Given a $\sem{\VAS}$-state $\vec{v} \in \Nat^{I}$, the component
    $\vec{v}(\pid,q)$ counts the number of processes in the pid-class $\pid$
    currently in state $q$, while the component $\vec{v}(\pid,m)$ is the sum of the
    number of occurrences of the message $m$ in the
    mailboxes of the processes of the pid-class $\pid$.
\end{definition}

While infinite-state, many non-trivial properties are decidable on VAS including reachability, coverability and place boundedness; for more details see~\cite{Finkel:01}.
In this paper we focus on coverability, which is \expspace-complete~\cite{Rackoff:78}: given two states $s$ and $t$, is it possible to reach
from $s$ a state $t'$ that covers $t$ (i.e.~$t'\leq t$)?

\medskip

Which kinds of correctness properties of \Lang\ programs can one specify by
coverability of an ACS?
We will be using ACS to \emph{over-approximate} the semantics of a \Lang\
program, so if a state of the ACS is not coverable, then it is not reachable in
any execution of the program.
It follows that we can use coverability to express safety properties such as:
\begin{inparaenum}
  \item unreachability of error program locations
  \item mutual exclusion
  \item boundedness of mailboxes: is it possible to reach a state where the mailbox of pid-class $\pid$ has more than $k$ messages? If not we can allocate just $k$ memory cells for that mailbox.
\end{inparaenum}

\section{An Operational Semantics for \Lang}
\label{sec:concrete}
In this section, we define an operational semantics for \Lang\ using a
\emph{time-stamped CESK* machine}, following a methodology advocated by Van Horn
and Might \cite{VanHorn:10}.
An unusual feature of such machines are \emph{store-allocated continuations}
which allow the
recursion in a programs's control flow and data structure to be separated from the recursive structure in its state space.
As we shall illustrate in Section~\ref{sec:cfa}, such a formalism is key to a \emph{transparently sound} and \emph{parametric} abstract interpretation.

\paragraph{A Concrete Machine Semantics.}
\label{sec:cesk}\label{sec:semantics}
Without loss of generality, we assume that in a \Lang\ program, variables are
distinct, and constructors and cases are only applied to (bound) variables. The \Lang\ machine defines a transition system on \emph{(global) states},
which are elements of the set \State
\begin{align*}
	s\in\State       & \is \Procs \times \Mailboxes \times \Store\\
	\pi\in\Procs     & \is \Pid\finmap\ProcState\\
	\mu\in\Mailboxes & \is \Pid\finmap\Mailbox
\end{align*}
An element of \Procs\ associates a process with its \emph{(local) state}, and an element of \Mailboxes\ associates a process with its mailbox.
We split the $\Store$ into two partitions
\[
    \sigma\in\Store \is (\AddrVar \finmap \Val) \times (\AddrKont \finmap \Kont)
\]
each with its address space, to separate \emph{values}
and \emph{continuations}. By abuse of notation $\sigma(x)$ shall mean the application of the first component when $x\in\AddrVar$ and of the second when $x\in\AddrKont$.

The \emph{local state} of a process
\[
	q\in\ProcState \is
        (\ProgLoc\dunion\Pid) \times \Env \times \AddrKont \times \Time\\
\]
is a tuple, consisting of
\begin{inparaenum}
    \item a pid, or a \emph{program location}%
          \footnote{Precisely a program location is a node in the abstract
                    syntax tree of the program being analysed.}
          which is a subterm of the program, labelled with its occurrence;
          whenever it is clear from the context, we shall omit the label;

    \item an environment, which is a map from variables to pointers to values $\rho \in \Env \is \Var \finmap \AddrVar$;

    \item a pointer to a continuation, which indicates what to evaluate next
          when the current evaluation returns a value;

    \item a time-stamp, which will be described later.
\end{inparaenum}

\emph{Values} are either closures or pids:
\[
	d \in \Val \is \Closure \dunion \Pid \qquad
		  \Closure \is \ProgLoc \times \Env
\]
Note that, as defined, closures include both functions (which is standard) as
well as constructor terms.

All the domains we define are naturally partially ordered:
$\ProgLoc$ and $\Var$ are discrete partial orders, all the others are defined by the appropriate pointwise extensions.

\paragraph{Mailbox and Message Passing}

A \emph{mailbox} is just a finite sequence of values:
$\mailbox\in\Mailbox \is \Val^*$.
We denote the empty mailbox by $\emptymb$.
A mailbox is supported by two operations:
\begin{align*}
	\mailmatch
        & \colon \pat^\ast \times \Mailbox \times \Env \times \Store\to    \\
		& \quad\qquad {(\Nat\times(\Var\finmap\Val)\times\Mailbox)}_{\bot} \\
	\enqueue &\colon \Val \times \Mailbox \to \Mailbox
\end{align*}
The function $\mailmatch$ takes a list of patterns, a mailbox, the current
environment and a store (for resolving pointers in the values stored in the
mailbox) and returns the index of the matching pattern, a substitution
witnessing the match, and the mailbox resulting from the extraction of the
matched message. To model \emph{Erlang-style} FIFFO mailboxes we set $\enqueue(d, \mailbox) := \mailbox\cons d$ and define:
\[	\mailmatch(\lst{p}{n},\mailbox,\rho,\sigma)\is
		(i,\theta,\mailbox_1\cons\mailbox_2)\]
such that
\[\begin{array}{lr}
		\mailbox = \mailbox_1\cons d\cons \mailbox_2
		&
		\forall d'\in \mailbox_1 \, .  \,\forall j \, .\,
			{\undef{\match_{\rho,\sigma}}{p_j, d'}}\\
		\theta = \match_{\rho,\sigma}(p_i, d)
		&
		\forall j < i \, . \,
			{\undef{\match_{\rho,\sigma}}{p_j, d}}
\end{array}\]
where $\match_{\rho,\sigma}(p, d)$ seeks to match the term $d$ against the pattern $p$,  following the pointers $\rho$ to the store $\sigma$ if necessary, and returning the witnessing substitution if matchable, and $\bot$ otherwise.

\paragraph{Evaluation Contexts as Continuations.}

Next we represent (in an inside-out manner) evaluation contexts as
continuations. A continuation consists of a \emph{tag} indicating the shape of
the evaluation context, a pointer to a continuation representing the enclosing
evaluation context, and, in some cases, a program location and an environment.
Thus $\kont\in\Kont$ consists of the following constructs:
\begin{asparaitem}
  \item $\kStop$ represents the empty context.
  \item $\kArg{i}{\ell, v_{0}\ldots v_{i-1}, \rho, a}$
        represents the context
        \[ E[v_0(\lstc{v}{i-1},[\,],\lstc[i+1]{e'}{n})] \]
        where
        $e_0(\lstc{e}{n})$ is the subterm located at $\ell$;
        $\rho$ closes the terms $\lstc[i+1]{e}{n}$ to  $\lstc[i+1]{e'}{n}$ respectively;
        the address $a$ points to the continuation representing the enclosing
        evaluation context $E$.
\end{asparaitem}

\paragraph{Addresses, Pids and Time-Stamps.}

While the machine supports arbitrary concrete representations of time-stamps,
addresses and pids, we present here an instance based on
\emph{contours}~\cite{Shivers:91} which shall serve as the reference semantics
of \Lang, and the basis for the abstraction of Section~\ref{sec:cfa}.

A way to represent a \emph{dynamic} occurrence of a symbol is the history of the
computation at the point of its creation. We record history as \emph{contours}
which are strings of program locations
\[
    \cntr\in\Time \is \ProgLoc^*
\]
The initial contour is just the empty sequence $\cntr_0\is\eseq$, while the
$\tick$ function updates the contour of the process in question by prepending
the current program location, which is always a function call
(see rule \ref{rule:apply}):
\[
    \tick \colon \ProgLoc \times \Time \to \Time\qquad
    \tick (\ell,\cntr) \is \ell \cons \cntr
\]

Addresses for values ($b\in\AddrVar$) are represented by tuples comprising the current pid, the variable in question, the bound value and the current time stamp.
Addresses for continuations ($a, c\in\AddrKont$) are represented by tuples comprising the current pid, program location, environment and time (i.e.~contour); or $\StopAddr$ which is the address of the initial continuation ($\kTop$).
\begin{align*}
    \AddrVar  &\is \Pid\times\Var\times\Data\times\Contour\\
    \AddrKont &\is (\Pid\times\ProgLoc\times\Env\times\Contour)
                    \dunion\{\StopAddr\}
\end{align*}
The \emph{data domain} ($\data\in\Data$) is the set of closed \Lang\ terms; the function $\resolve\colon \Store\times\Val\to\Data$ resolves all the pointers of a value through the store $\sigma$, returning the corresponding closed term:
\begin{align*}
    \resolve(\sigma, \pid)&\is \pid\\
    \resolve(\sigma, (e,\rho)) &
        \is e[x\mapsto \resolve(\sigma, \sigma(\rho(x)))\mid x\in\freevars(e)]
\end{align*}

New addresses are allocated by extracting the relevant components from the context at that point:
\begin{align*}
    \newKPush & \colon \Pid \times \ProcState \to \AddrKont\\
    \newKPush & (\pid,\tuple{\ell,\rho,\_, \cntr})
        \is (\pid,\larg{\ell}{0},\rho,\cntr)\\
    \newKPop  & \colon \Pid \times \Kont \times \ProcState \to \AddrKont\\
    \newKPop & (\pid,\kont,\tuple{\_,\_,\_,\cntr})
        \is (\pid,\larg{\ell}{i+1},\rho,\cntr)\\
            & \text{ where }
                \kont=\kArg{i}{\ell,\dots,\rho,\_}\\
    \newVaddr & \colon \Pid \times \Var \times \Data \times \ProcState
        \to \AddrVar\\
    \newVaddr & (\pid,x,\data,\tuple{\_,\_,\_,\cntr})
        \is (\pid,x,\data,\cntr)
\end{align*}

\begin{remark}
    To enable data abstraction in our framework, the address of a
    value contains the data to which the variable is bound: by making
    appropriate use of the embedded information in the abstract semantics, we
    can fine-tune the data-sensitivity of our analysis, as we shall illustrate
    in Section~\ref{sec:kCFA}. However when no data abstraction is intended,
    this data component can safely be discarded.
\end{remark}

Following the same scheme, pids ($\pid\in\Pid$) can be identified with the
contour of the \erl{spawn} that generated them:
$\Pid \is (\ProgLoc\times\Contour)$.
Thus the generation of a new pid is defined as
\begin{align*}
    \newPid & \colon \Pid \times \ProgLoc \times \Time \to \Pid\\
    \newPid & ((\ell',\cntr'),\ell,\cntr) \is
        (\ell, \tick^*(\cntr, \tick(\ell',\cntr'))
\end{align*}
where $\tick^*$ is just the simple extension of $\tick$ that prepends a whole
sequence to another. Note that the new pid contains the pid that created it as a
sub-sequence: it is indeed part of its history (dynamic context). The pid
$\pid_0\is(\ell_0,\emptycntr)$ is the pid associated with the starting process,
where $\ell_0$ is just the root of the program.

\begin{remark}\label{rem:findomains}
\begin{inparaenum}

\item
    Note that the only sources of infinity for the state space are time,
    mailboxes and the data component of value addresses. If these domains are
    finite then the state space is finite and hence reachability is  decidable.

\item
    It is possible to present a more general version of the concrete machine
    semantics. We can reorganise the machine semantics so that components such
    as \Time, \Pid, \Mailbox, \AddrKont\ and \AddrVar\ are presented as
    \emph{parameters} (which may be instantiated as the situation requires).
    In this paper we present a contour-based machine, which is general enough to illustrate our method of verification.

\end{inparaenum}
\end{remark}

\begin{definition}[Concrete Semantics]
Now that the state space is set up, we define a (non-deterministic) transition relation on states $(\semTo)\subseteq\State\times\State$.
In~Figure~\ref{fig:concrete-rules} we present the rules for application,
message passing and process creation; we omit the other rules (letrec, case and treatment of pids as returned value) since they follow the same shape. The transition $s\semTo s'$ is defined by a case analysis of the shape of $s$.
\end{definition}


\begingroup

\def\arraystretch{1}
\def\smallsep{}
\def\midspace{\hspace{2mm}}

\begin{figure*}[t]
\noindent
\begingroup
\centering%
\small
\setlength{\colwd}{6.25cm}
\begin{tabular}[t]{@{}l@{}}
    \toprule
    \RulesSection{Functional reductions}\\
    \midrule

    \RuleName{FunEval}\label{rule:funeval}\\
    \CondCell{
        \pi(\pid) =  & \tuple{\ell\colon(e_0(\lstc{e}{n})), \rho, a, \cntr}\\
               b \is & \newKPush(\pid, \pi(\pid))
    }\\\smallsep
    \ResCell{
           \pi' &= \pi[\pid\mapsto\tuple{e_0,\rho,b,\cntr}]\\
        \sigma' &= \sigma[b\mapsto\kArg{0}{\ell, \eseq, \rho,a}]
    }\\\midrule

    \RuleName{ArgEval}\label{rule:argeval}\\
    \CondCell{
        \pi(\pid) =  & \tuple{v, \rho, a, \cntr}\\
        \sigma(a) =  & \kont = \kArg{i}{\ell,\lst[0]{d}{i-1},\rho',c}\\
             d_i \is & (v,\rho)\\
               b \is & \newKPop(\pid, \kont, \pi(\pid))
    }\\\smallsep
    \ResCell{
           \pi' &= \pi[\pid\mapsto\tuple{\larg{\ell}{i+1},\rho',b,\cntr}]\\
        \sigma' &= \sigma[b\mapsto\kArg{i+1}{\ell,\lst[0]{d}{i},\rho',c}]
    }\\\midrule

    \RuleName{Apply}\label{rule:apply}\\
    \CondCell{
        \pi(\pid) =  & \tuple{v,\rho,a,\cntr}, \ \largn{\ell} = n\\
        \sigma(a) =  & \kont = \kArg{n}{\ell,\lst[0]{d}{n-1},\rho',c}\\
              d_0 =  & (\erl{fun(}\lst{x}{n}\erl{)}\to e, \rho_0) \quad
             d_n \is   (v, \rho) \\
             b_i \is & \newVaddr(\pid, x_i, \resolve(\sigma, d_i), \pi(\pid))\\
          \cntr' \is & \tick(\ell, \pi(\pid))
    }\\\smallsep
    \ResCell{
        \pi' &= \pi[\pid\mapsto
                        \tuple{e,\rho'[x_1\to b_1\ldots x_n\to b_n],c,\cntr'}]\\
     \sigma' &= \sigma[b_1\mapsto d_1\ldots b_n\mapsto d_n]\\
    }\\\bottomrule
\end{tabular}
\midspace%
\setlength{\colwd}{6.4cm}
\begin{tabular}[t]{@{}l@{}}
    \toprule
    \RuleName{Vars}\label{rule:vars}\\[4pt]
    \CondCell{
        &      \pi(\pid) = \tuple{x,\rho,a,\cntr}\\
        &\sigma(\rho(x)) = (v,\rho')
    }\\\smallsep
    \ResCell{
        \pi'&=\pi[\pid\mapsto\tuple{v,\rho',a,\cntr}]
    }\\[5pt]
    \midrule[\heavyrulewidth]
    \RulesSection{Communication}\\
    \midrule
    \RuleName{Receive}\label{rule:receive}\\
    \CondCell{
        & \pi(\pid) =\tuple{
            \erl{receive}\ p_1 \to e_1\ldots p_n\to e_n\ \erl{end},
            \rho,a,\cntr}\\
        & \mailmatch(\lst{p}{n},\mu(\pid),\rho,\sigma) = (i,\theta,\mailbox)\\
        & \theta = [x_1\mapsto d_1\ldots x_k\mapsto d_k]\\
        & b_j \is \newVaddr(\pid, x_j, \resolve(\sigma, d_j), \pi(\pid))\\
        &\rho'\is \rho\;\![x_1\mapsto b_1\ldots x_k\mapsto b_k]
    }\\\smallsep
    \ResCell{
           \pi'&= \pi[\pid\mapsto\tuple{e_i, \rho',a,\cntr}]\\
           \mu'&= \mu[\pid\mapsto \mailbox]\\
        \sigma'&=\sigma[b_1\mapsto d_1\ldots b_k\mapsto d_k]
    }\\
    \midrule
    \RuleName{Send}\label{rule:dispatch}\label{rule:send}\\
    \CondCell{
        \pi(\pid) =& \tuple{v,\rho,a,\cntr}\\
        \sigma(a) =& \kont = \kArg{2}{\ell,d,\iota',\_,c}\\
                d =& (\erl{send}, \_ )
    }\\\smallsep
    \ResCell{
        \pi' &= \pi[\pid\mapsto\tuple{v, \rho, c,\cntr}]\\
        \mu' &= \mu[\pid'\mapsto\enqueue((v,\rho),\mu(\pid'))]%
    }\\
    \bottomrule
\end{tabular}
\midspace%
\setlength{\colwd}{4.25cm}
\begin{tabular}[t]{@{}l@{}}
    \toprule
    \RulesSection{Process creation}\\
    \midrule
    \RuleName{Spawn}\label{rule:proccr}\label{rule:spawn}\\
    \CondCell{ 
        \pi(\pid) =& \tuple{\erl{fun()}\to e,\rho,a,\cntr}\\
        \sigma(a) =& \kArg{1}{\ell, d, \rho',c}\\
        d=& (\erl{spawn},\_)\\
        \pid'    \is & \newPid(\pid, \ell, \timest)
    }\\\smallsep
    \ResCell[then\\\null\quad]{
        \pi' &= \pi\left[
        \begin{aligned}
            \pid  & \mapsto\tuple{\pid', \rho', c,\cntr},\\
            \pid' & \mapsto\tuple{e,\rho,\StopAddr,\cntr_0}
        \end{aligned}\right]\\[10pt]
        \mu' &= \mu[\pid'\mapsto\emptymb]
    }\\\midrule

    \RuleName{Self}\label{rule:self}\\
    \CondCell{
        \pi(\pid)=\tuple{\erl{self()},\rho,a,\cntr}
    }\\\smallsep
    \ResCell{
        \pi'=\pi[\pid\mapsto\tuple{\pid,\rho,a,\cntr}]
    }\\
    \midrule[\heavyrulewidth]
    \RulesSection{Initial state}\\
    \midrule
    \parbox[t]{\colwd}{\rule[9pt]{0pt}{0pt}%
    \RuleName{Init}\label{rule:init}
        The initial state associated~with a program $\Prog$ is
        $s_\Prog\is\tuple{\pi_0,\mu_0,\sigma_0}$
        where
        $\begin{array}[t]{@{}r@{\;}l@{}}
            \pi_0 &= [\pid_0\mapsto \tuple{\Prog,[],\StopAddr,\cntr_0}]\\
            \mu_0 &= [\pid_0\mapsto\emptymb]\\
            \sigma_0 &= [\StopAddr\mapsto\kTop]
        \end{array}$\\[2pt]
    }\\
    \bottomrule
\end{tabular}
\endgroup

\nocaptionrule
\caption{\label{fig:concreteSemantics}
    \textbf{Operational Semantics Rules}.
    The tables define the transition relation
    $s=\tuple{\pi,\mu,\sigma,\timest} \semTo\tuple{\pi',\mu',\sigma',\timest'}=s'$ by cases;
    the primed components of the state are identical to
    the non-primed components, unless indicated otherwise in the ``then'' part of the rule.
    The meta-variable $v$ stands for terms that cannot be further rewritten such
    as $\lambda$-abstractions, constructor applications and un-applied primitives.
}
\label{fig:concrete-rules}

\end{figure*}
\endgroup


The rules for the purely functional reductions are a simple lifting of the
corresponding rules for the sequential CESK* machine: when the currently
selected process is evaluating a variable~\ref{rule:vars} its address is looked
up in the environment and the corresponding value is fetched from the store and
returned. \ref{rule:apply}: When evaluating an application, control is given to
each argument---including the function to be applied---in turn;
\ref{rule:funeval} and \ref{rule:argeval} are then applied, collecting the
values in the continuation. After all arguments have been evaluated, new values
are recorded in the environment (and the store), and control is given to the
body of the function to be applied. The rule \ref{rule:receive} can only fire if
$\mailmatch$ returns a valid match from the mailbox of the process. In case
there is a match, control is passed to the expression in the matching clause,
and the substitution $\theta$ witnessing the match is used to generate the
bindings for the variables of the pattern. When applying a send \ref{rule:send},
the recipient's pid is first extracted from the continuation, and $\enqueue{}$
is then called to dispatch the evaluated message to the designated mailbox. When
applying a spawn \ref{rule:spawn}, the argument must be an evaluated nullary
function; a new process with a fresh pid is then created whose code is the body
of the function.

One can easily add rules for run-time errors such as wrong arity in function
application, non-exhaustive patterns in cases, sending to a non-pid and spawning
a non-function.

\section{Parametric Abstract Interpretation}
\label{sec:cfa}
\label{sec:absmachine}

We aim to abstract the concrete operational semantics of Section~\ref{sec:concrete} isolating the least set of domains that need to be made finite in order for the abstraction to be decidable. We then state the conditions on these abstract domains that are sufficient for soundness.

In Remark~\ref{rem:findomains} we identify $\Time$, $\Mailbox$ and $\Data$ as responsible for the unboundedness of the state space. Our abstract semantics is thus parametric on the abstraction of these basic domains.

\begin{definition}[Basic domains abstraction]\label{def:bda}
\begin{asparaenum}
    \item A \emph{data abstraction} is a triple
        $\DataAbs = \tuple{
            \abs\Data,
            \absf[d],
            \abs\resolve
        }$
        where $\abs\Data$ is a flat {(i.e.~discretely ordered)} domain of abstract data values,
        $\absf[d]\colon \Data\to\abs\Data$ and
        $\abs\resolve \colon
            \abs\Store\times\abs\Value \to \powerset(\abs\Data)$.

    \item A \emph{time abstraction} is a tuple
        $\TimeAbs = \tuple{
            \abs\Time,
            \absf[t],
            \abs\tick,
            \abs\cntr_0
        }$
        where $\abs\Time$ is a flat domain of abstract contours,
        $\absf[t] \colon \Time \to \abs\Time$, $\abs\cntr_0\in\abs\Time$, and
        $\abs\tick \colon \ProgLoc \times \abs\Time \to \abs\Time$.

    \item A \emph{mailbox abstraction} is a tuple
        $\MailboxAbs = \tuple{
            \abs\Mailbox,
            \leqmb,
            \join_{\text{m}},
            \absf[m],
            \abs\enqueue,
            \abs\emptymb,
            \abs\mailmatch
        }$
        where $(\abs\Mailbox, \leqmb, \join_{\text{m}})$ is a join-semilattice
        with least element $\abs\emptymb\in\abs\Mailbox$,
        $\absf[m] \colon \Mailbox \to \abs\Mailbox$ and
        $\abs\enqueue\colon\abs\Val \times \abs\Mailbox \to \abs\Mailbox$
        are monotone in mailboxes.
        \begin{multline*}
            \abs\mailmatch \colon
                \pat^\ast \times \abs\Mailbox \times \abs\Env \times \abs\Store
                    \to\\
                \powerset(\Nat\times(\Var\finmap\abs\Val)\times\abs\Mailbox)
        \end{multline*}

    \item A \emph{basic domains abstraction} is a triple
        $\AbsInt = \tuple{\DataAbs, \TimeAbs, \MailboxAbs}$
        consisting of a data, a time and a mailbox abstraction.
\end{asparaenum}
\end{definition}

An abstract interpretation of the basic domains determines an interpretation of the other abstract domains as follows.
\begin{align*}
    \abs\State
        & \is \abs\Procs \times \abs\Mailboxes \times \abs\Store \\
    \abs\Procs
        & \is \abs\Pid\to\powerset(\abs\ProcState) \\
    \abs\ProcState
        & \is (\ProgLoc\dunion\abs\Pid) \times
              \abs\Env \times \abs\AddrKont \times
              \abs\Time \\
    \abs\Store
        & \is (\!\abs\AddrVar  \to \powerset(\abs\Val))
                  \times
              (\abs\AddrKont \to \powerset(\abs\Kont)) \\
    \abs\Mailboxes &\is \abs\Pid\to\abs\Mailbox \qquad
    \abs\Val \is \abs\Closure \dunion \abs\Pid\\
    \abs\Closure
        &\is \ProgLoc \times \abs\Env\qquad
    \abs\Env
         \is \Var \finmap \abs\AddrVar\\
    \abs\Pid
         &\is (\ProgLoc\times\abs\Time)\dunion\{\abs\pid_0\} \qquad
    \abs\pid_0\is \abs\cntr_0
\end{align*}
each equipped with an abstraction function defined by an appropriate pointwise extension. We will call all of them $\absf$ since it will not introduce ambiguities. The abstract domain $\abs\Kont$ is the pointwise abstraction of $\Kont$, and we will use the same tags as those in the concrete domain.
The abstract functions $\abs\newKPush$, $\abs\newKPop$, $\abs\newVaddr$ and $\abs\newPid$, are defined exactly as their concrete versions, but on the abstract domains.

When $B$ is a flat domain, the abstraction of a partial map $C = A\finmap B$ to $\abs C = \abs A\to\powerset(\abs B)$ \label{sec:absmachine:partialmap}
is defined as
\begin{equation*}\alpha_C(f)\is
    \lambda \abs a\in \abs A.
        \ \{\alpha_B(b) \mid (a,b)\in f \text{ and } \alpha_A(a)=\abs a\}
\end{equation*}
where the preorder on $\abs C$ is
$\abs f\leq_{\abs C} \abs g \iif
    \forall \abs a.\ \abs f(\abs a) \subseteq g(\abs a)$.

The operations on the parameter domains need to `behave' with respect to the
abstraction functions: the standard correctness conditions listed below must be
satisfied by their instances. These conditions amount to requiring that what we
get from an application of a concrete auxiliary function is adequately represented
by the abstract result of the application of the abstract counterpart of that
auxiliary function. The partial orders on the domains are standard pointwise
extensions of partial orders of the parameter domains.

\begin{definition}[Sound basic domains abstraction]\label{def:soundAbstraction}
    A basic domains abstraction $\AbsInt$ is \emph{sound}
    just if the following conditions are met by the auxiliary operations:
    \begin{gather}
        \absf[t](\tick(\ell,\cntr)) \leq \abs\tick(\ell, \absf[t](\cntr))
        \label{eq:monotonetick}\\
        \abs\sigma \leq \abs\sigma' \wedge \abs d\leq \abs d'
            \implies
        \absresolve(\abs\sigma, \abs d)\leq\absresolve(\abs\sigma', \abs d')
        \label{eq:monotoneres}\\
        \forall \abs\sigma \geq \absf(\sigma).\;
            \absf[d](\resolve(\sigma, d))\in \absresolve(\abs\sigma, \absf(d))
        \label{eq:soundres}\\
        \absf[m](\enqueue(d, \mailbox)) \leq
            \abs\enqueue(\absf(d), \absf[m](\mailbox)) \quad
        \absf[m](\emptymb)=\abs\emptymb
        \label{eq:soundenq}\\
    \intertext{
    if $\mailmatch(\vect{p},\mailbox,\rho,\sigma)=(i,\theta,\mailbox')$ then
    {$\forall \abs\mailbox \geq \absf(\mailbox)$, $\forall \abs\sigma \geq \absf(\sigma)$,} $\exists \abs\mailbox'\geq\absf(\mailbox')$
    such that}
    (i,\absf(\theta),\abs\mailbox') \in
            \abs\mailmatch(\vect{p},{\abs\mailbox},\absf(\rho),{\abs\sigma})
    \label{eq:soundmmatch}
    \end{gather}

\end{definition}
Following the Abstract Interpretation framework, one can exploit the soundness
constraints to derive, by algebraic manipulation, the definitions of the
abstract auxiliary functions which would then be correct by
construction~\cite{Midtgaard:08}.

\begin{definition}[Abstract Semantics]
    Once the abstract domains are fixed, the rules that define the abstract
    transition relation are straightforward abstractions of the original ones.
    In Figure~\ref{fig:abstract-rules}, we present the abstract counterparts of
    the rules for the operational semantics in Figure~\ref{fig:concrete-rules},
    defining the non-deterministic abstract transition relation on abstract
    states
    $ (\cfaTo)\subseteq\abs\State\times\abs\State. $
    When referring to a particular program $\Prog$, the abstract semantics is
    the portion of the graph reachable from $s_{\Prog}$.
\end{definition}


\begingroup

\def\arraystretch{1}
\def\smallsep{}
\def\midspace{\hspace{2mm}}

\begin{figure*}[t]
\noindent
\begingroup
\centering%
\small
\setlength{\colwd}{6.3cm}
\begin{tabular}[t]{@{}l@{}}
    \toprule
    \RulesSection{Functional abstract reductions}\\
    \midrule

    \AbsRuleName{FunEval}\label{absrule:funeval}\\
    \CondCell{
        \abs\pi(\abs\pid)&\ni\abs q = \tuple{\ell\colon(e_0(\lstc{e}{n})),\abs\rho,\abs a,\abs t\:}\\
        \abs b \is& \abs\newKPush(\abs\pid,\abs q)
    }\\\smallsep
    \ResCell{
        \abs\pi' &= \abs\pi\join
            [\abs\pid\mapsto\makeset{\tuple{e_0,\abs \rho,\abs b,\abs t\:}}]\\
        \abs\sigma' &= \abs\sigma\join
            [\abs b\mapsto\makeset{\kArg{0}{\ell, \eseq, \abs\rho,\abs a}}]
    }\\\midrule

    \AbsRuleName{ArgEval}\label{absrule:argeval}\\
    \CondCell{
        \abs\pi(\abs\pid)\ni&\tuple{v,\abs\rho,\abs a,\abs t\:}\\
        \abs\sigma(\abs a)\ni& \abs\kont = \kArg{i}{\ell,\lst[0]{\abs d}{i-1},\abs\rho',\abs c}\\
        \abs d_i\is& (v,\abs\rho)\\
        \abs b \is& \abs\newKPop(\abs\pid, \abs\kont, \abs q)
    }\\\smallsep
    \ResCell{
        \abs\pi' &= \abs\pi\join
            [\abs\pid\mapsto\makeset{\tuple{\larg{\ell}{i+1},\abs\rho',\abs b,\abs t\:}}]\\
        \abs\sigma' &= \abs\sigma\join
            [\abs b\mapsto\makeset{
                \kArg{i+1}{\ell,\lst[0]{\abs d}{i},\abs\rho',\abs c}}]
    }\\\midrule

    \AbsRuleName{Apply}\label{absrule:apply}\\
    \CondCell{
        \abs\pi(\abs\pid)\ni& \abs q = \tuple{v,\abs\rho,\abs a,\abs t\:},\,
        \largn{\ell} = n\\
        \abs\sigma(\abs a) \ni&
            \kArg{n}{\ell,\lst[0]{\abs d}{n-1},\abs\rho',\abs c}\\
        \abs d_0 =& (\erl{fun(}\lst{x}{n}\erl{)}\to e, \abs\rho_0)\quad
        \abs d_n \is (v, \abs\rho)\\
        \abs\data_i \in& \abs\resolve(\abs\sigma,\abs d_i,)\\
        \abs b_i\is&\abs\newVaddr(\abs\pid,x_i,\abs \data_i,\abs q\:)\\
        \abs \rho''\is&\abs\rho'[x_1\mapsto \abs b_1\ldots x_n\mapsto \abs b_n]
    }\\\smallsep
    \ResCell{
    \abs\pi' &= \abs\pi\join
        [\abs\pid\mapsto\makeset{
            \tuple{e,\abs\rho'',\abs c\:,\abs \tick(l,\abs t\:)}}]\\
    \abs\sigma' &= \abs\sigma\join
        [\abs b_1\mapsto\makeset{ \abs d_1}\ldots \abs b_n\mapsto\makeset{\abs d_n}]\\
        \vspace{-0.6725mm}
    }\\\bottomrule
\end{tabular}
\midspace%
\setlength{\colwd}{6.3cm}
\begin{tabular}[t]{@{}l@{}}
    \toprule
    \AbsRuleName{Vars}\label{absrule:vars}\\[2pt]
    \CondCell{
        &\abs\pi(\abs\pid)\ni\tuple{x,\abs\rho,\abs a,\abs t\:}\\
        &\abs\sigma(\abs\rho(x))\ni(v,\abs\rho')
    }\\\smallsep
    \ResCell{
        \abs\pi'&=\abs\pi\join[\abs\pid\mapsto\makeset{\tuple{v,\abs\rho',\abs a,\abs t\:}}]
    }\\[0.5em]
    \midrule[\heavyrulewidth]
    \RulesSection{Abstract communication}\\
    \midrule
    \AbsRuleName{Receive}\label{absrule:receive}\\[2pt]
    \CondCell{ 
        &\abs\pi(\abs\pid)\ni \abs q = \tuple{
            e,
            \abs\rho,\abs a,\abs t\:}\\
        & e = \erl{receive}\ p_1 \to e_1\ldots p_n\to e_n\ \erl{end}\\
        &\abs\mailmatch(\lst{p}{n},\abs\mu(\abs\pid),\abs\rho,\abs\sigma) \ni
            (i,\abs\theta,\abs\mailbox)\\
        &\abs\theta = [x_1\mapsto\abs d_1\ldots x_k\mapsto\abs d_k]\\
        &\abs\data_j \in \abs\resolve(\abs\sigma,\abs d_j)\\
        &\abs b_j \is \abs\newVaddr(\abs\pid, x_j,\abs \data_j,\abs q\:)\\
        &\abs\rho'\is \abs\rho
            [x_1\mapsto\abs b_1\ldots x_k\mapsto\abs b_k]
    }\\\smallsep
    \ResCell{
        \abs\pi' &= \abs\pi\join
            [\abs\pid\mapsto\makeset{\tuple{e_i, \abs\rho',\abs a,\abs t\:}}]\\
        \abs\mu' &= \abs\mu
            [\abs\pid\mapsto\abs\mailbox]\\
        \abs\sigma'&=\abs\sigma\join
            [\abs b_1\mapsto\makeset{\abs d_1}\ldots\abs b_k\mapsto\makeset{\abs d_k}]\\[3pt]
    }\\
    \midrule

    \AbsRuleName{Send}\label{absrule:dispatch}\label{absrule:send}\\[2pt]
    \CondCell{
        \abs\pi(\abs\pid) \ni& \tuple{v,\abs\rho,\abs a,\abs t\:}\\
        \abs\sigma(\abs a) \ni& \kArg{2}{\ell,\abs d,\abs\iota',\_,\abs c}\\
        \abs d =& (\erl{send}, \_ )
    }\\\smallsep
    \ResCell{
        \abs\pi' &= \abs\pi\join[\abs\pid \mapsto\makeset{\tuple{v, \abs\rho,\abs c,\abs t\:}}]\\
        \abs\mu' &= \abs\mu[\abs\pid'\mapsto\abs\enqueue((v,\abs\rho),\abs\mu(\abs\pid'))]\\[3pt]
        \vspace{-3.5mm}
    }\\
    \bottomrule
\end{tabular}
\midspace%
\setlength{\colwd}{4.3cm}
\begin{tabular}[t]{@{}l@{}}
    \toprule
    \RulesSection{Abstract process creation}\\
    \midrule

    \AbsRuleName{Spawn}\label{absrule:proccr}\label{absrule:spawn}\\[4pt]
    \CondCell{
        \abs\pi(\abs\pid) \ni & \tuple{\erl{fun()}\to e,\abs\rho,\abs a,\abs t\:}\\
        \abs\sigma(\abs a) \ni & \kArg{1}{\ell, \abs d, \abs\rho',\abs c}\\
        \abs d=& (\erl{spawn},\_)\\
        \abs{\pid'} \is \null&\abs\newPid(\abs\pid, \ell, \abs t\:)
    }\\\smallsep
    \ResCell[then\\\null\quad]{
        \abs\pi' &= \abs\pi\hspace{-0.175mm}\join\hspace{-0.175mm}\left[
        \begin{aligned}
            \abs\pid &\mapsto\makeset{\tuple{\abs\pid', \abs\rho', \abs c, \abs t\:}},\\
            \abs\pid'&\mapsto\makeset{\tuple{e,\abs\rho,\StopAddr, \abs t_0}}
        \end{aligned}\right]\\[10pt]
        \abs\mu' &= \abs\mu\join[\abs\pid'\mapsto\abs\emptymb\:]\\[5pt]
    }\\\midrule

    \AbsRuleName{Self}\label{absrule:self}\\[4pt]
    \CondCell{
        \abs\pi(\abs\pid)\ni\tuple{\erl{self()},\abs\rho,\abs a,\abs t\:}
    }\\\smallsep
    \ResCell{
        \abs\pi'=\abs\pi\join[\abs\pid\mapsto\makeset{\tuple{\abs\pid,\abs\rho,\abs a,\abs t\:}}]
    }\\[0.45em]
    \midrule[\heavyrulewidth]
    \RulesSection{Initial abstract state}\\
    \midrule
    \parbox[t]{\colwd}{\rule[12pt]{0pt}{0pt}%
    \RuleName{AbsInit}\label{absrule:init}
        The initial state associated with a program $\Prog$ is
        \[\abs s_{\Prog} \is
            \absf(s_\Prog) = \tuple{\abs\pi_0,\abs\mu_0,\abs\sigma_0}\]
        where
        $\begin{array}[t]{@{}r@{\;}l@{}}
            \abs\pi_0 &= [\abs\pid_0\mapsto \makeset{\tuple{\Prog,[],\StopAddr,\abs t_0}}]\\
            \abs\mu_0 &= [\abs\pid_0\mapsto {\abs\emptymb\:}]\\
            \abs\sigma_0 &= [\StopAddr\mapsto\{\kTop\}]
        \end{array}$\\[4pt]
        \vspace{-0.25mm}
    }\\
    \bottomrule
\end{tabular}
\endgroup

\nocaptionrule
\caption{\label{fig:abstractSemantics}
    \textbf{Rules defining the Abstract Semantics}.
    The tables describe the conditions under which a transition
    $\abs s=\tuple{\abs\pi,\abs\mu,\abs\sigma}
        \cfaTo
    \tuple{\abs\pi',\abs\mu',\abs\sigma'}=\abs s'$
    can fire; the primed versions of the components of the states are identical to
    the non-primed ones unless indicated otherwise in the ``then'' part of the
    corresponding rule.
    We write $\sqcup$ for the join operation of the appropriate domain.
}
\label{fig:abstract-rules}
\end{figure*}
\endgroup


\begin{theorem}[Soundness of Analysis]
    \label{thm:CFAsound}
    Given a sound abstraction of the basic domains,
    if $s\semTo s'$ and $\absf[cfa](s) \leq u$, then
    there exists $u' \in \abs\State$ such that
        $\absf[cfa](s') \leq u'$ and
        $u \cfaTo u'$.
\end{theorem}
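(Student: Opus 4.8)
The plan is to prove the statement by case analysis over the concrete rule used to derive $s \semTo s'$, after isolating two reusable facts whose composition yields the claim. Write $\hat s$ for $\absf[cfa](s)$, the pointwise state-level abstraction. The first fact is a \emph{monotonicity} property of $\cfaTo$: if $\hat s \cfaTo w$ via some abstract rule and $\hat s \le u$, then there is $u' \ge w$ with $u \cfaTo u'$ via the same rule. The second is a \emph{local soundness} property: for each concrete rule $\mathsf R$ firing at $s$, the matching abstract rule can fire at the \emph{exact} abstraction $\hat s$, producing some $w$ with $\absf[cfa](s') \le w$. Granting these, the theorem is immediate: local soundness gives $\hat s \cfaTo w$ with $\absf[cfa](s') \le w$; since $\hat s \le u$, monotonicity gives $u \cfaTo u'$ with $u' \ge w \ge \absf[cfa](s')$, so $u'$ is the required witness.

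For monotonicity I would observe that every premise of an abstract rule in Figure~\ref{fig:abstract-rules} is an \emph{upward-closed} condition on the store and process components: the applicability tests all have the form $\abs\pi(\abs\pid) \ni \cdots$ or $\abs\sigma(\abs a) \ni \cdots$, and the orders on $\abs\Procs$, $\abs\Store$ (and on $\abs\Mailboxes$, via $\leqmb$) are the pointwise set-inclusion / semilattice orders, so any membership witnessed in $\hat s$ persists in every $u \ge \hat s$. The auxiliary operations read in the premises are monotone in their store/mailbox arguments --- for $\abs\resolve$ this is exactly \eqref{eq:monotoneres}, while the allocators only inspect program locations, variables and contours --- so the same (or larger) witnesses remain available at $u$. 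Finally, each abstract rule constructs its successor by \emph{joining} update cells onto the incoming state, and $\join$ is monotone; hence re-firing the rule at the larger $u$ yields a successor $u'$ dominating $w$. This settles the first fact.

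The bulk of the work is local soundness, proved by a case analysis over the eight concrete rules of Figure~\ref{fig:concrete-rules}. The recurring ingredients are: (a) that $\absf$ is a homomorphism on the structured domains and turns a concrete functional update into a dominated abstract join, i.e.\ $\absf(f[a \mapsto b]) \le \absf(f) \join [\absf(a) \mapsto \{\absf(b)\}]$ for the partial-map domains of Section~\ref{sec:absmachine}; and (b) that the allocators commute with abstraction, $\absf(\newKPush(\pid,q)) = \abs\newKPush(\absf(\pid),\absf(q))$ and similarly for $\newKPop$, $\newVaddr$, $\newPid$, which holds because they are defined identically on the abstract domains and $\absf$ is the identity on program locations and variables, contours being handled by $\absf[t]$ and \eqref{eq:monotonetick}. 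With these in hand each case unfolds mechanically: the purely functional rules (\ref{rule:funeval}, \ref{rule:argeval}, \ref{rule:vars}, \ref{rule:self}) only rearrange closures and continuations and need only (a)--(b); \ref{rule:apply} additionally invokes soundness of $\tick$ \eqref{eq:monotonetick} and of resolution \eqref{eq:soundres} to justify the abstract binding addresses $\abs b_i$; \ref{rule:send} invokes soundness of $\enqueue$ \eqref{eq:soundenq}; and \ref{rule:receive} invokes soundness of $\mailmatch$ \eqref{eq:soundmmatch} together with \eqref{eq:soundres}.

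I expect \ref{rule:receive} to be the main obstacle, for two interlocking reasons. First, it is the only rule that consumes from a \emph{mailbox} while simultaneously binding fresh pattern variables, so its soundness rests on chaining \eqref{eq:soundmmatch} (to obtain an abstract match $(i, \absf(\theta), \abs\mailbox')$ with $\absf[m](\mailbox') \le \abs\mailbox'$) with \eqref{eq:soundres} (to over-approximate the resolved data indexing the new value addresses) and with fact (a) for the induced store and environment extensions. Second --- and this is precisely why the conditions of Definition~\ref{def:soundAbstraction} are stated with a universal quantifier over all $\abs\mailbox \ge \absf[m](\mailbox)$ and $\abs\sigma \ge \absf(\sigma)$ --- local soundness must be proved in a form robust enough to survive the subsequent application of monotonicity; checking that the witnesses produced by \eqref{eq:soundmmatch} and \eqref{eq:soundres} are exactly of this upward-stable shape is the delicate point that makes the two facts compose. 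The remaining cases reduce to routine verifications once this pattern is established.
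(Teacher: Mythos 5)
Your overall skeleton---case analysis on the concrete rule, commutation of the allocators with abstraction, and the partial-map fact (your ingredient (a) is essentially the paper's Lemma~\ref{apx:pcfasound:lem1}, and (b) its Corollary~\ref{apx:pcfasound:cor1})---matches the paper's proof. The genuine gap is your Fact 1, the claimed monotonicity of $\cfaTo$. Your justification assumes that every premise of an abstract rule is an upward-closed membership in a state component and that every auxiliary operation read in a premise is monotone. That holds for $\absresolve$ (this is condition~\eqref{eq:monotoneres}) and for $\abs\enqueue$ (required monotone in Definition~\ref{def:bda}), but it fails for $\abs\mailmatch$: the premise of \ref{absrule:receive} is a membership in $\abs\mailmatch(\vect{p},\abs\mu(\abs\pid),\abs\rho,\abs\sigma)$, and \emph{neither} Definition~\ref{def:bda} \emph{nor} Definition~\ref{def:soundAbstraction} requires $\abs\mailmatch$ to be monotone in its abstract mailbox or store arguments. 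Condition~\eqref{eq:soundmmatch} relates only \emph{concrete} matches to abstract ones. Hence a perfectly sound instantiation may exhibit an abstract match at $\absf[cfa](s)$ with no dominating counterpart at a larger $u$, so Fact 1 is not provable from the theorem's hypotheses (and is false for some sound instantiations); note also that the mailbox component of the successor in \ref{absrule:receive} is a destructive update $\abs\mu[\abs\pid\mapsto\abs\mailbox]$, not a join, so you genuinely need the match found at $u$ to dominate the one found at $\absf[cfa](s)$.

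The repair is not to patch monotonicity but to drop the two-step decomposition: prove the replication directly at an \emph{arbitrary} $u \geq \absf[cfa](s)$, which is exactly what the paper does and exactly why conditions~\eqref{eq:soundres} and~\eqref{eq:soundmmatch} are stated with universal quantification over all $\abs\sigma \geq \absf(\sigma)$ and $\abs\mailbox \geq \absf[m](\mailbox)$: given the concrete transition $s \semTo s'$, one instantiates \eqref{eq:soundmmatch} at the mailbox and store of $u$ itself, obtaining the abstract match there, instead of transporting a match from $\absf[cfa](s)$ up to $u$. (Incidentally, this universal form is needed even for your Fact 2: at the exact abstraction, the mailbox $\absf[cfa](s)$ assigns to $\abs\pid$ is a join over \emph{all} concrete pids collapsing to $\abs\pid$, hence already strictly above $\absf[m](\mu(\pid))$ in general.) You flag this as the ``delicate point'' in your last paragraph, but it cannot be discharged as stated; once you are forced to invoke \eqref{eq:soundmmatch} at $u$ using the concrete match, the monotonicity step does no work and your argument collapses into the paper's single-pass proof.
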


\noindent
See Appendix~\ref{apx:proofCFAsound} for a proof of the Theorem.

Now that we have defined a sound abstract semantics we give sufficient
conditions for its computability.

\begin{theorem}[Decidability of Analysis]
If a given (sound) abstraction of the basic domains is finite, then the derived
abstract transition relation defined in Figure~\ref{fig:abstract-rules} is
finite; it is also decidable if the associated auxiliary operations (in
Definition~\ref{def:soundAbstraction}) are computable.

\end{theorem}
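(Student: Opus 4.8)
The plan is to prove the two claims of the theorem—finiteness of the abstract transition relation, and its decidability under computability of the auxiliary operations—in sequence, since the second depends on the first.

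\medskip

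\noindent\textbf{Finiteness.} First I would show that every abstract domain in the construction following Definition~\ref{def:bda} is finite, proceeding by structural induction on the domain definitions. The base cases are the three parameter domains $\abs\Data$, $\abs\Time$ and $\abs\Mailbox$, which are finite by the hypothesis that the basic domains abstraction is finite. From these, I would work through the derived domains in dependency order: $\abs\Pid = (\ProgLoc\times\abs\Time)\dunion\{\abs\pid_0\}$ is finite because $\ProgLoc$ is finite (a program has finitely many subterms) and $\abs\Time$ is finite; $\abs\AddrVar$ and $\abs\AddrKont$ are finite because they are built from finite products of $\Pid$, $\Var$, $\ProgLoc$, $\abs\Data$, $\abs\Time$ and $\abs\Env$; $\abs\Env = \Var\finmap\abs\AddrVar$ is finite as a space of finite partial maps between finite sets; and so on up through $\abs\Closure$, $\abs\Val$, $\abs\Kont$, $\abs\ProcState$, and finally $\abs\Store$, $\abs\Procs$, $\abs\Mailboxes$ and $\abs\State$. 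The key observation is that all the constructions are closed under finite products, finite disjoint unions, finite partial maps, and the powerset operation, each of which preserves finiteness. Once $\abs\State$ is shown finite, the transition relation $({\cfaTo})\subseteq\abs\State\times\abs\State$ is a subset of a finite set and hence finite.

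\medskip

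\noindent\textbf{Decidability.} Given finiteness, to decide the relation I must be able to effectively enumerate, for a given $\abs s$, all $\abs s'$ with $\abs s\cfaTo\abs s'$. I would argue rule by rule over Figure~\ref{fig:abstract-rules}: each rule's firing condition is a finite conjunction of membership tests (e.g.~$\abs\pi(\abs\pid)\ni\abs q$, $\abs\sigma(\abs a)\ni\abs\kont$), pattern-shape checks, and applications of the auxiliary operations $\abs\newKPush$, $\abs\newKPop$, $\abs\newVaddr$, $\abs\newPid$, $\abs\resolve$, $\abs\enqueue$ and $\abs\mailmatch$. The membership and shape tests are decidable because every domain is finite and the elements are finitely representable. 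The auxiliary operations $\abs\newKPush$, $\abs\newKPop$, $\abs\newVaddr$, $\abs\newPid$ are defined explicitly as simple tuple manipulations (and $\abs\tick$ is assumed computable), so they are computable; the operations $\abs\resolve$, $\abs\enqueue$ and $\abs\mailmatch$ appearing in the conditions and the ``then'' parts are exactly those assumed computable in the hypothesis. Since each rule involves only finitely many existentially quantified metavariables ranging over finite sets, the successors can be enumerated by a finite search, and hence membership in $(\cfaTo)$ is decidable.

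\medskip

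\noindent The main obstacle I anticipate is not conceptual but bookkeeping: one must verify that every ingredient appearing in the conditions and conclusions of the abstract rules is either drawn from a finite domain or produced by a computable operation, with no hidden source of unboundedness. In particular I would take care over $\abs\mailmatch$, whose codomain is a powerset $\powerset(\Nat\times(\Var\finmap\abs\Val)\times\abs\Mailbox)$; the first component ranges over $\Nat$, but since it is the index $i$ of one of the finitely many patterns $\lst{p}{n}$ in the receive expression under consideration, it is effectively bounded by the arity $n$, so the returned set is finite and computable whenever $\abs\mailmatch$ is computable. Checking this point, together with confirming that the join operations $\join$ used in the ``then'' parts are computable on the finite semilattices, is the only place where the argument requires genuine attention rather than routine propagation of finiteness.
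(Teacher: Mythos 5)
Your proof is correct and takes essentially the same approach as the paper, whose own proof is a one-line appeal to ``simple inspection of the rules: all the individual rules are decidable and the state space is finite.'' You have merely filled in the details that the paper leaves implicit---propagating finiteness through the derived domains and checking rule-by-rule that the premises are effectively testable---which is exactly the inspection the paper has in mind.
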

\begin{proof}
    The proof is by a simple inspection of the rules: all the individual rules are decidable and the state space is finite.
\end{proof}

\paragraph{A Simple Mailbox Abstraction}
\label{sec:mbabs}

Abstract mailboxes need to be finite too in order for the analysis to be
computable. By abstracting addresses (and data) to a finite set, values, and
thus messages, become finite too. The only unbounded dimension of a mailbox
becomes then the length of the sequence of messages. We then abstract mailboxes
by losing information about the sequence and collecting all the incoming
messages in an un-ordered set:
\[
    \MailboxAbs_{\text{set}}\is
        \tuple{
            \powerset(\abs\Value),
            \subseteq,
            \union,
            \absf[set],
            \abs\enqueue_{\text{set}},
            \emptyset,
            \abs\mailmatch_{\text{set}}
        }
\]
where the abstract version of $\enqueue$ is the insertion in the set, as easily
derived from the soundness requirement; the matching function is similarly
derived from the correctness condition: writing $\vect{p} = \lst{p}{n}$
\[\begin{array}{l}
    \absf[set](\mailbox)
        \is \{ \absf(d) \mid \exists i.\ \mailbox_i = d \}\qquad
    \abs\enqueue_{\text{set}}(\abs d, \abs\mailbox)
        \is \makeset{\abs d}\union \abs\mailbox\\
    \abs\mailmatch_{\text{set}}(\vect{p},\abs\mailbox,\abs\rho,\abs\sigma)
        \is \left\{(i,\abs\theta,\abs\mailbox) \left|
            \begin{aligned}
            \abs d     & \in\abs\mailbox,\\
            \abs\theta & \in\abs\match_{\abs\rho,\abs\sigma}(p_i, \abs d)
            \end{aligned}
        \right.\right\}
\end{array}\]

We omit the straightforward proof that this constitutes a sound abstraction.

\paragraph{Abstracting Data.}
We included data in the value addresses in the definition of $\AddrVar$, cutting
contours would have been sufficient to make this domain finite. A simple
solution is to discard the value completely by using the trivial data
abstraction $\Data_0 \is \makeset{\databot}$ which is sound. If more precision
is needed, any finite data-abstraction would do: the analysis would then be able
to distinguish states that differ only because of different bindings in their
frame.

We present here a data abstraction particularly well-suited to languages with
algebraic data-types such as \Lang: the abstraction $\depthD[\abs\sigma, D]{e}$
discards every sub-term of $e$ that is nested at a deeper level than a
parameter $D$.

\begin{align*}
    \depthD[\abs\sigma, 0]{(e,\abs\rho)} & \is \{\databot\}\quad
    \depthD[\abs\sigma, D+1]{(\erl{fun...},\abs\rho)} \is \{\databot\}\\
    \depthD[\abs\sigma, D+1]{(\erl{#c#}(\lst{x}{n}),\abs\rho)} &
        \is\left\{
            \erl{#c#}(\lst{\abs\data}{n})\;\left|\;
            \begin{aligned}
                &\abs d_i\in\abs\sigma(\abs\rho (x_i)),\\
                &\abs\data_i\in\depthD[\abs\sigma, D]{\abs d_i}
            \end{aligned}
            \right.\right\}\\
\end{align*}
where $\databot$ is a placeholder for discarded subterms.

An analogous D-deep abstraction can be easily defined for concrete
values and we use the same notation for both; we use the notation
$\depthD[D]{\data}$ for the analogous function on elements of $\Data$.

We define $\DataAbs_D=\tuple{\Data_D,\absf[D],\absresolve[D]}$ to be the
`depth-D' data abstraction where
\begin{align*}
    \Data_{D+1} &
        \is \makeset{\databot}\union
            \{ \erl{#c#}(\lst{\abs\data}{n})\mid\abs\data_i\in\Data_D \} \\
    {\absf[D](\data)} &\is \depthD[D]{\data}\qquad
    {\absresolve[D](\abs\sigma, \abs d)} \is \depthD[\abs\sigma, D]{\abs d}
\end{align*}

The proof of its soundness is easy and we omit it.

\paragraph{Abstracting Time.}
\label{sec:kCFA}

Let us now define a specific time abstraction that amounts to a concurrent
version of a standard \kCFA. A \kCFA\ is an analysis parametric in $k$, which is
able to distinguish dynamic contexts up to the bound given by $k$.
We proceed as in standard \kCFA\ by truncating contours at
length $k$ to obtain their abstract counterparts:
\[
    \Contour_k \is \textstyle\Union_{0\leq i\leq k} \ProgLoc^i  \qquad
    \absfK[t] (\ell_1\dots\ell_k\cons\cntr) \is \ell_1\dots\ell_k
\]

\label{sec:0CFA}

The simplest analysis we can then define is the one induced by the basic domains abstraction $\tuple{\Data_0, \Time_0, \abs\Mailbox_{\text{set}}}$.
With this instantiation many of the domains collapse in to singletons.
Implementing the analysis as it is would lead however to an exponential
algorithm because it would record separate store and mailboxes for each abstract
state. To get a better complexity bound, we apply a widening following the lines
of~\cite[Section~7]{VanHorn:10}: instead of keeping a separate store and
separate mailboxes for each state we can join them keeping just a global copy of
each. This reduces significantly the space we need to explore: the algorithm
becomes polynomial time in the size of the program (which is reflected in the size of $\ProgLoc$).

Considering other abstractions for the basic domains easily leads to exponential
algorithms; in particular, the state-space grows linearly wrt the size of
abstract data so the complexity of the analysis using $\Data_D$ is exponential
in $D$.

\paragraph{Dealing with open programs.}
\label{sec:openterms}
\label{sec:inputgrammar}

Often it is useful to verify an open expression where its input is taken from a regular set of terms (see~\cite{Ong:11}).
We can reproduce this in our setting by introducing a new primitive
\erl[morekeywords={choice}]{choice} that non-deterministically calls one of its arguments.
For instance, an interesting way of closing \erl{N} in Example~\ref{ex:reslock}
would be by binding it to \erl{any_num()}:
\begin{erlang*}[morekeywords={choice},emph={zero, succ}]
letrec ...
    any_num() = choice(fun() -> zero,
                        fun() -> {succ, any_num()}).
in C = cell_start(), add_to_cell(any_num(), C).
\end{erlang*}
Now the uncoverability of the state where more than one instance of
\erl{inc} is running the protected section would prove that mutual exclusion is
ensured for any number of concurrent copies of \erl{inc}.

\section{Generating the Actor Communicating System}
\label{sec:generateACS}

The CFA algorithm we presented allows us to derive a sound `flat'
representation of the control-flow of the program. The analysis takes into
account higher-order computation and (limited) information about
synchronization. Now that we have this rough scheme of the possible transitions,
we can `guard' those transitions with actions which must take place in their
correspondence; these guards, in the form of `receive a message of this form'
or `send a message of this form' or `spawn this process' cannot be modelled
faithfully while retaining decidability of useful verification problems, as
noted in Section~\ref{sec:acs}. The best we can do, while remaining sound, is to
relax the synchronization and process creation primitives with counting
abstractions and use the guards to restrict the applicability of the
transitions. In other words, these guarded (labelled) rules will form the
definition of an ACS that simulates the semantics of the input \Lang\ program.

\medskip

\noindent\emph{Terminology}.\label{term:actcomp} We identify a common pattern of the rules in Figure~\ref{fig:abstractSemantics}. In each rule \textbf{R}, the premise distinguishes an abstract pid $\abs\pid$ and an abstract process state $\abs{q} = \tuple{e, \abs\rho, \abs a, \abs t\:}$ associated with $\abs\pid\:$~i.e.~$\abs q\in \abs\pi(\abs\pid)$
and the conclusion of the rule associates a new abstract process state---call it $\abs q'$---with $\abs\pid$~i.e.~$\abs{q'} \in \abs\pi'(\abs\pid)$. Henceforth we shall refer to $(\abs\pid, \abs q, \abs q')$ as the \emph{active components} of the rule \textbf{R}.

\begin{definition}[Generated ACS]\label{def:genacs}\label{def:genvas}
    Given a \Lang\ program \Prog, a sound basic domains abstraction
          $\AbsInt=\tuple{ \TimeAbs, \MailboxAbs, \DataAbs}$
    and a sound data abstraction for messages
          $\MsgAbs=\tuple{\Msg,\absf[msg], \absresolve[\text{msg}]}$

    the \emph{Actor communicating system} generated by
                            \Prog, $\AbsInt$ and $\MsgAbs$ is defined as
\[
    \Acs_\Prog \is
        \tuple{\abs\Pid,\abs\ProcState,\Msg,R,
            \absf(\pid_0),
            \absf(\pi_0(\pid_0))}
\]
where $s_{\Prog} = \tuple{\pi_0,\mu_0,\sigma_0,\cntr_0}$
is the initial state (according to \ref{rule:init}) with
$\pi_0=[\pid_0 \mapsto \tuple{\Prog,[],\StopAddr,t_0}]$
and the rules in $R$ are defined by induction over the following rules.
\begin{asparaenum}
\item If $\abs s \cfaTo \abs s'$ is proved by rule \ref{absrule:funeval} or
\ref{absrule:argeval} or \ref{absrule:apply} with active components $(\abs\pid, \abs q, \abs q')$, then
\begin{equation}
	\eTau\abs\pid[\abs q-->\abs q'] \in R
	\tag{AcsTau}\label{acsrule:tau}
\end{equation}

\item If $\abs s \cfaTo \abs s'$ is proved by \ref{absrule:receive} with active components $(\abs\pid, \abs q, \abs q')$ where $\abs d=(p_i,\abs \rho')$ is the abstract message matched by $\abs\mailmatch$ and $\abs m \in \absresolve[\text{msg}](\abs\sigma, \abs d)$, then
\begin{equation}
	\eRec\abs\pid[\abs q--?\abs m-->\abs q'] \in R
	\tag{AcsRec}\label{acsrule:rec}
\end{equation}

\item If $\abs s \cfaTo \abs s'$ is proved by \ref{absrule:send} with active
components $(\abs\pid, \abs q, \abs q')$ where $\abs d$ is the
abstract value that is sent and
$\abs{m} \in \absresolve[\text{msg}](\abs\sigma, \abs d)$, then
\begin{equation}
	\eSend\abs\pid[
		\abs q --\abs{\pid'}!\abs m--> \abs q'
	] \in R
	\tag{AcsSend}\label{acsrule:send}
\end{equation}

\item If $\abs s \cfaTo \abs s'$ is proved by \ref{absrule:spawn}  with active component $(\abs\pid, \abs q, \abs q')$ where $\abs\pid'$ is the new abstract pid that is generated in the premise of the rule, which gets associated with the process state $\abs q''=\tuple{e,\abs\rho,\StopAddr}$ then
\begin{equation}
	\eSpawn\abs\pid[
		\abs q--v\abs{\pid'}.\abs q''-->\abs q'
	] \in R
	\tag{AcsSp}\label{acsrule:spawn}
\end{equation}
\end{asparaenum}
\end{definition}

As we will make precise later, keeping $\abs\Pid$ and
$\abs\ProcState$ small is of par\-a\-mount importance for the model checking of
the generated ACS to be feasible. This is the main reason why we keep the
message abstraction independent from the data abstraction: this allows us to
increase precision with respect to types of messages, which is computationally
cheap, and keep the expensive precision on data as low as possible. It is
important to note that these two `dimensions' are in fact independent and a more
precise message space enhances the precision of the ACS even when using
$\Data_0$ as the data abstraction.

In our examples (and in our implementation) we use a $\Data_D$ abstraction for
messages where $D$ is the maximum depth of the receive patterns of the program.

\begin{definition}
	\label{def:alphaACS}
The abstraction function
\[
    \absf[acs] \colon \State \to
        {(\abs\Pid\times(\abs\ProcState\dunion\Msg) \to {\mathbb N})}
\]
relating concrete states and states of the ACS is defined as
\[
\absf[acs](s)\is
    \begin{cases}
    (\abs\pid,\abs q) & \mapsto
        \big|\{ \pid \mid \absf(\pid)=\abs\pid,
                          \absf(\pi(\pid))= \abs q \}\big| \\[2mm]
    (\abs\pid,\abs m) & \mapsto
        \left|\left\{ (\pid,i) \left|
        \begin{aligned}
            &\absf(\pid)=\abs\pid,\\
            &\absf[msg](\resolve(\sigma, \mu(\pid)_i))=\abs m
        \end{aligned}
        \right.\right\}\right|
    \end{cases}
\]
where $s=\tuple{\pi,\mu,\sigma}$.
\end{definition}

It is important to note that most of the decidable properties of the generated
ACS are not even expressible on the CFA graph alone: being able to predicate on
the contents of the counters means we can decide boundedness, mutual exclusion
and many other expressive properties. The next example shows one simple way in
which the generated ACS can be more precise than the bare CFA graph.

\begin{example}[Generated ACS]
    Given the following program:

\begin{erlang*}[emph={get,set,init}]
letrec
server= fun() -> receive {init, P, X} ->
                    send(P, ok), do_serve(X)
                 end.
do_serve= fun(X) -> receive
                        {init, _, _} -> error;
                        {set , Y}    -> do_serve(Y);
                        {get , P}    -> send(P,X),
                                        do_serve(X);
                    end.
in  S = spawn(server), send(S, {init, self(), a}),
        receive ok -> send(S, {set, b}) end.
\end{erlang*}
    our algorithm would output the following ACS starting from `main':
    \footnote{Labels are abbreviated to unclutter the picture; for example
        \texttt{\{\atom{init},\databot,\databot\}} is abbreviated with \atom{init}}

\begin{tikzpicture}[
    x=1.3cm,y=1.1cm,
    ->,
    empty state/.style={
        rounded corners,
        draw=black,
        semithick,
        minimum size=10pt,
        font={\scriptsize},
        inner sep=3pt,outer sep=1pt
    },
    state/.style={
        empty state,
        execute at end node={\vphantom{Aghqfi}},
    },
    thick,
    br/.style={looseness=.7,bend right=25pt},
    bl/.style={looseness=.7,bend left=25pt},
    shorten >=1pt,
    lbl/.style={font={\small}},
    ghost/.style={draw=ghostcol,dotted},
    spawn/.style={color=blue!60!black},
    error/.style={fill=red!60},
    ]

\newcommand{\tauskip}{.5}

\node [state, label=left:{$\abs\pid_s\colon$}] (server) {server};
\node [state, right=.8 of server] (doserve) {do\_serve};
\node [state, right=.8 of doserve] (rec) {receive};
\node [state, right=.8 of rec,error] (error) {error};

\draw (server) edge node[lbl,above] {?\atom{init}} (doserve)
      (doserve) edge node[lbl,above]{$\abs\pid_0$!\atom{ok}} (rec)
      (rec) edge[in=-120,out=-40,looseness=5]
                   node[lbl,above right=0pt and 7pt] {?\atom{set}}
      (rec) edge node[lbl,above] {?\atom{init}} (error);

\node [state, label=left:{$\abs\pid_0\colon$}, above=.2 of server] (main) {main};
\node [empty state, right=1 of main] (q1) {};
\node [empty state, right=.8 of q1] (q2) {};
\node [empty state, right=.6 of q2] (q3) {};
\node [empty state, right=.8 of q3] (q4) {};

\draw (main) edge node[lbl,above,spawn] {$\nu\abs\pid_s$.server} (q1)
      (q1) edge node[lbl,above] {$\abs\pid_s$!\atom{init}} (q2)
      (q2) edge node[lbl,above] {?\atom{ok}} (q3)
      (q3) edge node[lbl,above] {$\abs\pid_s$!\atom{set}} (q4)
;

\end{tikzpicture}

    The error state is reachable in the CFA graph but not in its Parikh
    semantics: the token \atom{init} is only sent once and never after \atom{ok}
    is sent back to the main process. Once \atom{init} has been consumed in the
    transition from `server' to `do\_serve' the counter for it will remain set to zero forever.
\end{example}

\begin{theorem}[Soundness of generated ACS]
    \label{thm:ACSsound}
    For all choices of $\AbsInt$ and $\MsgAbs$,
    for all concrete states $s$ and $s'$,
    if $s\semTo s'$ and
       $\absf[acs](s)\leq\vec{v}$
    then there exists $\vec{v'}$
    such that $\absf[acs](s')\leq \vec{v'}$,
    and $\vec{v}\acsTo\vec{v'}$.
\end{theorem}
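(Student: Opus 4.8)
The plan is to prove this one-step simulation by reducing to the soundness of the control-flow analysis (Theorem~\ref{thm:CFAsound}) and then checking that the counting abstraction $\absf[acs]$ changes exactly as prescribed by the VAS rule generated from the matching CFA transition. Concretely, I would argue by a case analysis on which concrete rule of Figure~\ref{fig:concrete-rules} justifies $s \semTo s'$. The crucial structural observation, used uniformly, is that every concrete step modifies the local state of exactly one process $\pid$ (the \emph{active} process), and in the communication and creation cases additionally removes one message occurrence from, appends one message to, or adds one process to a single named pid-class. This locality is what makes the change in the counters finite and predictable.

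First I would invoke (the proof of) Theorem~\ref{thm:CFAsound} to obtain a matching abstract step $\absf[cfa](s) \cfaTo u'$ with $\absf[cfa](s') \leq u'$, choosing it so that its active components are exactly $(\absf(\pid), \absf(\pi(\pid)), \absf(\pi'(\pid)))$. The source $\abs q = \absf(\pi(\pid))$ can be taken to be the abstracted enabling state itself. The target $\abs q' = \absf(\pi'(\pid))$ then follows because the membership condition \eqref{eq:soundres} lets us take the witnessing abstract data in the rule to be the abstraction of the concrete datum, while the inequality \eqref{eq:monotonetick} collapses to an equality on the flat, discretely ordered domains $\abs\Time$ and $\abs\Data$, so that the abstract contour produced by $\abs\tick$ coincides with $\absf[t]$ of the concrete contour. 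Since this CFA step contributes, via Definition~\ref{def:genacs}, a rule $r \in R$ of the matching kind ($\tau$, $?\abs m$, $\abs\pid'!\abs m$, or $\nu\abs\pid'.\abs q''$), I obtain a VAS rule $\vec r \in \VasRules$ whose non-zero entries sit precisely at the places touched by the concrete step. For the communication cases I would additionally use the message-abstraction soundness (the analogue of \eqref{eq:soundres} for $\MsgAbs$, together with \eqref{eq:soundmmatch}) to guarantee that the consumed or sent concrete message abstracts to the very $\abs m$ recorded in $r$, and the pid matching of $\abs\newPid$ to guarantee $\absf(\pid') = \abs\pid'$ in the send and spawn cases.

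Next I would compute the difference $\absf[acs](s') - \absf[acs](s)$ directly from Definition~\ref{def:alphaACS} and match it against $\vec r$ (Definition~\ref{def:acs2vas}) place by place. Because only $\pid$ changes state, the counter $(\absf(\pid), \absf(\pi(\pid)))$ drops by one and $(\absf(\pid), \absf(\pi'(\pid)))$ rises by one; in the receive case the counter of the consumed message decreases by one, in the send case the recipient's message counter increases by one, and in the spawn case the fresh process adds one to $(\absf(\pid'), \absf(q''))$ while its empty mailbox contributes nothing. Here one uses that $\absf[acs]$ of any process depends only on that process's local tuple, not on the store, so the new store bindings introduced by the functional rules leave all other counters untouched. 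Reading $\vec r$ as the sum of the listed updates—so that coincidences among the named places, e.g.\ $\absf(\pi(\pid)) = \absf(\pi'(\pid))$, are absorbed correctly—this yields $\absf[acs](s') = \absf[acs](s) + \vec r$ (only $\leq$ is needed below).

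Finally I would set $\vec{v'} \is \vec v + \vec r$ and conclude. From $\absf[acs](s) \leq \vec v$ and the previous step, $\absf[acs](s') \leq \absf[acs](s) + \vec r \leq \vec v + \vec r = \vec{v'}$. To see that $\vec v \acsTo \vec{v'}$ is a legal VAS transition I must check $\vec v + \vec r \in \Nat^I$, i.e.\ that the decremented places stay non-negative; this holds because the only negative entries of $\vec r$ are at $(\absf(\pid), \absf(\pi(\pid)))$ and, in the receive case, at the consumed message's place, and both are already $\geq 1$ in $\absf[acs](s) \leq \vec v$ (the active process is counted in its own state, and the matched message is present in its mailbox). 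I expect the main obstacle to be exactly the bridge in the second paragraph: pinning down that the generated rule's components coincide with the abstractions of the concrete active process, consumed/sent message and spawned pid, which is where the flatness of the abstract domains and the careful, choice-sensitive reuse of Theorem~\ref{thm:CFAsound} do the real work; the counting bookkeeping that follows is then routine.
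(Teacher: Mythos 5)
Your proposal matches the paper's own proof essentially step for step: the paper likewise proceeds by case analysis on the concrete rule, isolates your ``choice-sensitive reuse of Theorem~\ref{thm:CFAsound}'' as a standalone lemma (proved precisely by inspecting that theorem's proof, so that the abstract step has active components $(\absf[pid](\pid),\absf[ps](q),\absf[ps](q'))$), uses the soundness of $\MsgAbs$ and the commutation of $\newpid$ with abstraction to pin down the rule generated by Definition~\ref{def:genacs}, and then defines $\vec{v'}$ by exactly your increments/decrements, checking non-negativity of the decremented places from $\absf[acs](s)\leq\vec{v}$. If anything, your explicit treatment of coincidences among places (e.g.\ $\absf[ps](q)=\absf[ps](q')$) is slightly more careful than the paper's bookkeeping, which states the counter updates componentwise without remarking on that degenerate case.
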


\noindent
    See Appendix~\ref{apx:ACSsound} for a proof of the Theorem.

\begin{corollary}[Simulation]
Let $\Acs_\Prog$ be the ACS derived from a given \Lang\ program \Prog. We have $\sem{\Acs_\Prog}$ simulates the semantics of $\Prog$: for each $\Prog$-run $s\semTo s_1 \semTo s_2 \semTo \dots$, there exists a $\sem{\Acs_\Prog}$-run $\vec{v} \acsTo \vec{v}_1 \acsTo \vec{v}_2 \acsTo \dots$ such that $\absf[acs](s)=\vec{v}$ and for all $i$, $\absf[acs](s_i)\leq \vec{v}_i$.
\end{corollary}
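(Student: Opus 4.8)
The plan is to lift the single-step soundness result of Theorem~\ref{thm:ACSsound} to entire runs by a straightforward induction on the position in the run. Since all the genuine work---relating one concrete transition to one VAS transition while respecting the counter abstraction $\absf[acs]$---is already carried out in that theorem, the Corollary is essentially a chaining argument with no new structural content.

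First I would set $\vec{v} \is \absf[acs](s)$, so that the required equality $\absf[acs](s) = \vec{v}$ holds by definition and, in particular, $\absf[acs](s) \leq \vec{v}$. This establishes the base case and seeds the induction with a VAS state that dominates the abstraction of the initial concrete state. For the inductive step, suppose that for some $i$ I have already produced VAS states $\vec{v} \acsTo \vec{v}_1 \acsTo \dots \acsTo \vec{v}_i$ together with the invariant $\absf[acs](s_i) \leq \vec{v}_i$ (using the conventions $\vec{v}_0 = \vec{v}$ and $s_0 = s$). Because $s_i \semTo s_{i+1}$ is a concrete step and $\absf[acs](s_i) \leq \vec{v}_i$, I can invoke Theorem~\ref{thm:ACSsound} directly: it yields a VAS state $\vec{v}_{i+1}$ with $\vec{v}_i \acsTo \vec{v}_{i+1}$ and $\absf[acs](s_{i+1}) \leq \vec{v}_{i+1}$, which extends the run by one step and re-establishes the invariant. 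Concatenating these single steps produces the desired $\sem{\Acs_\Prog}$-run, and the construction applies uniformly to finite and infinite $\Prog$-runs.

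The point worth emphasising---and the only place where the argument could conceivably fail---is that the hypothesis of Theorem~\ref{thm:ACSsound} is the \emph{inequality} $\absf[acs](s_i) \leq \vec{v}_i$ rather than an equality. This is precisely what makes the induction close: although $\absf[acs](s_i)$ may lie strictly below $\vec{v}_i$ once slack has accumulated over several steps, the theorem still guarantees a matching VAS transition out of $\vec{v}_i$. This in turn reflects the monotonicity (strong compatibility) of vector addition systems, namely that a rule enabled at a marking remains enabled at any larger marking and firing it preserves domination. Had Theorem~\ref{thm:ACSsound} been stated with equality in its hypothesis, the chaining would break at the first strictly dominating step; since it is stated with $\leq$, no obstacle remains and the induction goes through without further work.
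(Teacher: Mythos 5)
Your proposal is correct and is exactly the argument the paper intends: the Corollary is stated as an immediate consequence of Theorem~\ref{thm:ACSsound}, obtained by seeding the run with $\vec{v} \is \absf[acs](s)$ and chaining the theorem's single-step guarantee inductively, which is precisely what you do. Your observation that the inequality $\absf[acs](s_i)\leq\vec{v}_i$ (rather than equality) in the theorem's hypothesis is what allows the induction to close is also the right point to emphasise.
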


Simulation preserves all paths so reachability (and coverability) is preserved.

\begin{corollary}
	If there is no $\vec{v}\geq \absf[acs](s')$ such that
    $\absf[acs](s)\acsTo^*\vec{v}$ then $s\not\to^* s'$.
\end{corollary}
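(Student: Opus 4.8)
The plan is to prove the contrapositive of the statement, reducing it directly to the Simulation Corollary. So suppose, towards a contradiction, that $s \to^* s'$; I will exhibit a reachable VAS state that covers $\absf[acs](s')$, contradicting the hypothesis that no such state exists.

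Since $s \to^* s'$, concrete reachability is witnessed by a \emph{finite} run $s \semTo s_1 \semTo \dots \semTo s_n$ with $s_n = s'$ (the degenerate case $n = 0$, where $s = s'$, is handled by taking $\vec{v} = \absf[acs](s)$, which trivially covers itself). First I would apply the Simulation Corollary to this finite run. It yields a $\sem{\Acs_\Prog}$-run $\vec{v} \acsTo \vec{v}_1 \acsTo \dots \acsTo \vec{v}_n$ such that $\absf[acs](s) = \vec{v}$ and, crucially, $\absf[acs](s_i) \leq \vec{v}_i$ for every index $i$.

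The key step is then to read off the witness at the last index: taking $\vec{v}_n$, we have $\absf[acs](s) \acsTo^* \vec{v}_n$ along the run above, and $\vec{v}_n \geq \absf[acs](s_n) = \absf[acs](s')$ by the covering guarantee of the simulation. Thus $\vec{v}_n$ is a VAS state reachable from $\absf[acs](s)$ that covers $\absf[acs](s')$, contradicting the hypothesis. Hence $s \not\to^* s'$.

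The statement carries essentially no independent difficulty: all of the real work---establishing step-wise soundness and propagating the covering relation $\leq$ along transitions---is already discharged in Theorem~\ref{thm:ACSsound} and its iteration into the Simulation Corollary. The only point that requires a moment's care is that concrete reachability $s \to^* s'$ unwinds to a genuinely finite sequence of $\semTo$-steps, so that the simulating VAS run terminates at a well-defined index $n$ whose state $\vec{v}_n$ provides the cover; this is immediate from reading $\to^*$ as the reflexive-transitive closure of $\semTo$.
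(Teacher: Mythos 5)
Your proposal is correct and takes essentially the same route as the paper, which discharges this corollary with the single remark that ``simulation preserves all paths'': you simply make that explicit by applying the Simulation Corollary to the finite run witnessing $s\to^* s'$ and reading off the final state $\vec{v}_n \geq \absf[acs](s')$ as the covering witness. Nothing further is needed.
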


\newcommand{\pidc}{\ensuremath{\abs\pid_{\text{c}}}}
\newcommand{\pidi}{\ensuremath{\abs\pid_{\text{i}}}}

\begin{example}[ACS Generated from Example~\ref{ex:reslock}]
A (simplified) pictorial representation of the ACS generated by our procedure
from the program in Example~\ref{ex:reslock} (with the parametric entry point
of Section~\ref{sec:inputgrammar}) is shown in Figure~\ref{fig:acsgen}, using a
\zCFA\ analysis.
The three pid-classes correspond to the starting process $\abs\pid_0$ and the
two static calls of spawn in the program, the one for the shared cell process
$\pidc$ and the other, $\pidi$, for all the processes running \erl{inc}.

The first component of the ACS, the starting one, just spawns a shared cell and
an arbitrary number of concurrent copies of the third component; these actions
increment the counter associated with states `res\_free' and `inc$_0$'.
The second component represents the intended protocol quite closely; note that
by abstracting messages they essentially become tokens and do not have a payload
anymore.
The rules of the third component clearly show its sequential behaviour.
The entry point is $(\abs\pid_0,\text{cell\_start})$.

The VAS semantics is accurate enough in this case to prove mutual exclusion
of, say, state `inc$_2$', which is protected by locks.
Let's say for example that $n>0$ processes of pid-class $\pidi$ reached state
`inc$_1$'; each of them sent a \atom{lock} message to the cell; note that now
the message does not contain the pid of the requester so all these messages are
indistinguishable; moreover the order of arrival is lost, we just count them.
Suppose that $\pidc$ is in state `res\_free'; since the counter for \atom{lock}
is $n$ and hence not zero, the rule labeled with ?\atom{lock} is enabled;
however, once fired the counter for `res\_free' is zero and the rule is
disabled. Now exactly one \atom{ack} can be sent to the `collective' mailbox
of pid-class \pidi\ so the rule receiving the ack is enabled; but as long as it
is fired, the only \atom{ack} message is consumed and no other \pidi\ process
can proceed. This holds until the lock is released and so on. Hence only one
process at a time can be in state `inc$_2$'.
This property can be stated as a coverability problem: can inc$_2=2$ be
covered? Since the VAS semantics is given in terms of a VAS, the property is
decidable and the answer can be algorithmically calculated.
As we saw the answer is negative and then, by soundness, we can infer it holds
in the actual semantics of the input program too.
\end{example}

\begin{figure}[tb]
\colorlet{ghostcol}{black!70}
\begin{tikzpicture}[
    x=1.3cm,y=1.1cm,
    ->,
    state/.style={rounded corners,
                draw=black,
                semithick,
                minimum size=10pt,
                font={\scriptsize},
                execute at end node={\vphantom{Aghqfi}},
                inner sep=3pt,outer sep=1pt},
    thick,
    br/.style={looseness=.7,bend right=25pt},
    bl/.style={looseness=.7,bend left=25pt},
    shorten >=1pt,
    lbl/.style={font={\small}},
    ghost/.style={draw=ghostcol,dotted},
    spawn/.style={color=blue!60!black},
    ]
\newcommand{\tauskip}{.5}
\node [state,label=left:{$\abs\pid_0\colon$}] (start0) {cell\_start};
\node [state, right=\tauskip of start0] (start1) {res\_start};
\node [state, right=1.4 of start1] (start2) {sp\_inc};
\node [state, right=\tauskip of start2] (start-stop) {stop};

\draw (start0) edge node[lbl,above] {$\tau$} (start1)
      (start1) edge node[lbl,above,spawn]{$\nu\pidc$.res\_free} (start2)
      (start2) edge[in=120,out=40,looseness=5]
                   node[lbl,above,spawn]{$\nu\pidi$.inc$_0$}
      (start2) edge node[lbl,above] {$\tau$} (start-stop);

\node [state,label=left:{$\pidc\colon$}, below=1.1 of start0]
    (cell0) {res\_free};
\node [state, above right=.1 and .5 of cell0] (cell1) {ack};
\node [state, below right=.1 and .5 of cell1] (cell1b) {res\_locked};
\node [state, below right=.1 and 1 of cell1b] (cell2) {Res};
\node [state, above right=.1 and 1 of cell1b] (cell3) {cell};

\draw (cell0)  edge[bl] node[lbl,above left] {?\atom{lock}} (cell1)
      (cell1)  edge[bl] node[lbl,above]{\pidi!\atom{ack}} (cell1b)
      (cell1b) edge[br] node[lbl,below] {?\atom{req}} (cell2)
               edge[bl] node[lbl,below] {?\atom{unlock}} (cell0)
      (cell2)  edge[br] node[lbl,right] {$\tau$} (cell3)
      (cell3)  edge[br] node[lbl,above] {\pidi!\atom{ans}} (cell1b)
      (cell3)  edge[bl] node[lbl,below right] {$\tau$} (cell1b)
;

\node [state,label=left:{$\pidi\colon$}, below=1.1 of cell0]
    (inc0) {inc$_0$};
\node [state, right=.8 of inc0] (inc1) {inc$_1$};
\node [state, below=.5 of inc1] (inc2) {inc$_2$};
\node [state, right=.6 of inc2] (inc3) {inc$_3$};
\node [state, right=.5 of inc3] (inc4) {inc$_4$};
\node [state, above=.5 of inc4] (inc5) {inc$_5$};
\node [state, right=1 of inc5] (inc-stop) {stop};

\draw (inc0) edge node[lbl,above] {\pidc!\atom{lock}} (inc1)
      (inc1) edge node[lbl,left] {?\atom{ack}} (inc2)
      (inc2) edge node[lbl,above] {\pidc!\atom{req}} (inc3)
      (inc3) edge node[lbl,above] {?\atom{ans}} (inc4)
      (inc4) edge node[lbl,right] {\pidc!\atom{req}} (inc5)
      (inc5) edge node[lbl,above] {\pidc!\atom{unlock}} (inc-stop);
\end{tikzpicture}
    \caption{ACS generated by the algorithm from Example~\ref{ex:reslock}}
    \label{fig:acsgen}
\end{figure}
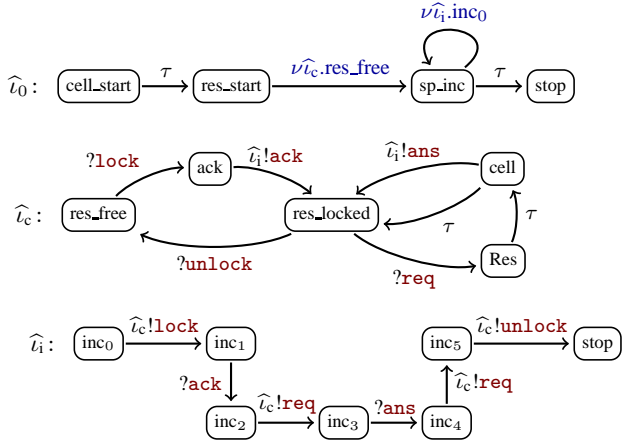

\paragraph{Complexity of the Generation.}
\label{sec:vasgen-compl}

Generating an ACS from a program amounts to calculating the analysis of Section~\ref{sec:cfa} and aggregating the relevant ACS rules for each transition of the analysis. Since we are adding \bigO{1} rules to $R$ for each transition, the complexity of the generation is the same as the complexity of the analysis itself. The only reason for adding more than one rule to $R$ for a single transition is the cardinality of $\Msg$ but since this costs only a constant overhead, increasing the precision with respect to message types is not as expensive as adopting more precise data abstractions.

\paragraph{Dimension of the Abstract Model.}
\label{sec:acsdim}

The complexity of coverability on VAS is \expspace\ in
the dimension of the VAS; hence for the approach to be practical,
it is critical to keep the number of components of the VAS underlying the
generated ACS small; in what follows we call \emph{dimension} of an
ACS the dimension of the VAS underlying its VAS semantics.

\begin{figure}[b]
\begin{center}
\begin{erlang}[emph={a,b}]
letrec no_a = fun(X)-> case X of a -> error; b -> ok end.
       send_b = fun(P)->send(P, b), send_a(P).
       send_a = fun(P)->send(P, a), send_b(P).
       stutter= fun(F)->receive _ -> unstut(F) end.
       unstut = fun(F)->receive X -> F(X), stutter(F) end.
in P = spawn(fun()->stutter(no_a)), send_a(P).
\end{erlang}
  \caption{A program that \Soter\ cannot verify because of the sequencing in mailboxes}
  \label{fig:stutter}
\end{center}
\end{figure}

Our algorithm produces an ACS with dimension $(|\abs\ProcState|+|\Msg|)\times
|\abs\Pid|$. With the \zCFA\ abstraction described at the end of
Section~\ref{sec:0CFA}, $\abs\ProcState$ is polynomial in the size of the
program and $\abs\Pid$ is linear in the size of the program so, assuming
$|\Msg|$ to be a constant, the dimension of the generated ACS is polynomial in
the size of the program, in the worst case.
Due to the parametricity of the abstract interpretation we can adjust for the right levels of precision and
speed. For example, if the property at hand is not sensitive to pids, one
can choose a coarser pid abstraction.
It is also possible to greatly reduce
$\abs\ProcState$: we observe that many of the control states
result from intermediate functional reductions; such reductions performed by different processes are independent, thanks to the actor model paradigm.
This allows for the use of preorder reductions. In our
prototype, as described in Section~\ref{sec:implementation}, we implemented a
simple reduction that safely removes states which only represent internal
functional transitions, irrelevant to the property at hand. This has proven to
be a simple yet effective transformation yielding a significant speedup. We
conjecture that, after the reduction, the cardinality of $\abs\ProcState$ is
quadratic only in the number of \erl{send}, \erl{spawn} and \erl{receive} of the
program.

\section{Evaluation, Limitations and Extensions}
\label{sec:discussion}\label{sec:evaluation}

\newcommand{\exprog}[1]{\textsf{#1}}

\paragraph{Empirical Evaluation.}
\label{sec:implementation}


\begin{table*}[tb]
\newcommand{\HDR}{\textbf}
\newcommand{\s}{\phantom{*}}
\centering
\small
\begin{tabular}{lccccccrcccc}
\toprule

\multirow{2}{*}{\HDR{Example}}     &
\multirow{2}{*}{\HDR{LOC}}         &
\multirow{2}{*}{\HDR{PRP}}         &
\multirow{2}{*}{\HDR{SAFE?}}       &
\multicolumn{2}{c}{\HDR{ABSTR}}    &
\multicolumn{2}{c}{\HDR{ACS SIZE}} &
\multicolumn{4}{c}{\HDR{TIME}}     \\

                        &      &    &      &  D & M  & Places & Ratio & Analysis  & Simpl & \BFC   & Total   \\
\midrule
\exprog{reslock}        & 356  & 1  & yes  & 0  & 2  & 40   &  10\%  & 0.56  & 0.08  & 0.82   & 1.48   \\
\exprog{sieve}          & 230  & 3  & yes  & 0  & 2  & 47   &  19\%  & 0.26  & 0.03  & 2.46   & 2.76   \\
\exprog{concdb}         & 321  & 1  & yes  & 0  & 2  & 67   &  12\%  & 1.10  & 0.16  & 5.19   & 6.46   \\
\exprog{state\_factory} & 295  & 2  & yes  & 0  & 1  & 22   &   4\%  & 0.59  & 0.13  & 0.02   & 0.75   \\
\exprog{pipe}           & 173  & 1  & yes  & 0  & 0  & 18   &   8\%  & 0.15  & 0.03  & 0.00   & 0.18   \\
\exprog{ring}           & 211  & 1  & yes  & 0  & 2  & 36   &   9\%  & 0.55  & 0.07  & 0.25   & 0.88   \\
\exprog{parikh}         & 101  & 1  & yes  & 0  & 2  & 42   &  41\%  & 0.05  & 0.01  & 0.07   & 0.13   \\
\exprog{unsafe\_send}   & 49   & 1  & no\s & 0  & 1  & 10   &  38\%  & 0.02  & 0.00  & 0.00   & 0.02   \\
\exprog{safe\_send}     & 82   & 1  & no*  & 0  & 1  & 33   &  36\%  & 0.05  & 0.01  & 0.00   & 0.06   \\
\exprog{safe\_send}     & 82   & 4  & yes  & 1  & 2  & 82   &  34\%  & 0.23  & 0.03  & 0.06   & 0.32   \\
\exprog{firewall}       & 236  & 1  & no*  & 0  & 2  & 35   &  10\%  & 0.36  & 0.05  & 0.02   & 0.44   \\
\exprog{firewall}       & 236  & 1  & yes  & 1  & 3  & 74   &  10\%  & 2.38  & 0.30  & 0.00   & 2.69   \\
\exprog{finite\_leader} & 555  & 1  & no*  & 0  & 2  & 56   &  20\%  & 0.35  & 0.03  & 0.01   & 0.40   \\
\exprog{finite\_leader} & 555  & 1  & yes  & 1  & 3  & 97   &  23\%  & 0.75  & 0.07  & 0.86   & 1.70   \\
\exprog{stutter}        & 115  & 1  & no*  & 0  & 0  & 15   &  19\%  & 0.04  & 0.00  & 0.00   & 0.05   \\
\exprog{howait}         & 187  & 1  & no*  & 0  & 2  & 29   &  14\%  & 0.19  & 0.02  & 0.00   & 0.22   \\
\bottomrule
\end{tabular}

\smallskip
\nocaptionrule
\caption{
    \Soter\ Benchmarks.
    The number of lines of code refers to the compiled Core~Erlang.
    The PRP column indicates the number of properties which need to be proved.
    The columns D and M indicate the data and message abstraction depth
    respectively. In the ``Safe?'' column, ``no*'' means that the program satisfies
    the properties but the verification was inconclusive; ``no'' means
    that the program is not safe and \Soter\ finds a genuine counterexample.
    ``Places'' is the number of places of the underlying Petri net after the
    simplification; ``Ratio'' is the ratio of the number of places of the
    generated Petri net before and after the simplification.
    All times are in seconds.
}
\label{tab:benchmarks}
\end{table*}


To evaluate the feasibility of the approach, we have constructed \Soter, a prototype implementation of our method for verifying Erlang programs. Written in Haskell, \Soter\ takes as input a single Erlang module annotated with safety properties in the form of simple assertions. \Soter\ supports the full higher-order fragment and the (single-node) concurrency and communication primitives of Erlang; features not supported by \Soter\ are described in Remark~\ref{rem:LangVsCoreErlang}.
For more details about the tool see~\cite{Soter:AGERE}.
The annotated Erlang module is first compiled to Core Erlang by the Erlang compiler. A \zCFA-like analysis, with support for the $\Data_D$ data and message abstraction, is then performed on the compile; subsequently an ACS is generated. The ACS is simplified and then fed to the backend model-checker along with coverability queries translated from the annotations in the input Erlang program. \Soter's backend is the tool \BFC~\cite{Kaiser:12} which features a fast coverability engine for a variant of VAS.
At the end of the verification pathway, if the answer is YES then the program is safe with respect to the input property, otherwise the analysis is inconclusive.

In Table~\ref{tab:benchmarks} we summarise our experimental results.
Many of the examples are higher-order and use dynamic (and unbounded) process creation and non-trivial synchronization. Example~\ref{ex:reslock} appears as \exprog{reslock} and \Soter\ proves mutual exclusion of the clients' critical section.
\exprog{concdb} is the example program of~\cite{Huch:99} for which we prove mutual exclusion.
\exprog{pipe} is inspired by the `pipe' example of~\cite{Kobayashi:95}; the property proved here is boundedness of mailboxes. \exprog{sieve} is a dynamically spawning higher-order concurrent implementation of Erathostene's sieve inspired by a program by Rob Pike;%
\footnote{%
    see ``Concurrency and message passing in Newsqueak'',
    \url{http://youtu.be/hB05UFqOtFA}%
}
\Soter\ can prove all the mailboxes are bounded.

All example programs, annotated with coverability queries, can be viewed and verified using \Soter\  at \SoterUrl.

\paragraph{Limitations}
There are programs and properties that cannot be proved using any of the presented abstractions.
\begin{inparaenum}
    \item The program in Figure~\ref{fig:stutter} defines a simple function that
    discards a message in the mailbox and feeds the next to its functional
    argument and so on in a loop. Another process sends a `bad argument' and a
    good one in alternation such that only the good ones are fed to the
    function. The property is that the function is never called with a bad
    argument. This cannot be proved because sequential information of the
    mailboxes, which is essential for the verification, is lost in the counter
    abstraction.

    \item The program in Figure~\ref{fig:howait} defines a
    higher-order combinator that spawns a number of identical workers, each
    applied to a different task in a list. It then waits for all the workers to
    return a result before collecting them in a list which is subsequently
    returned. The desired property is that the combinator only returns when
    every worker has sent back its result. Unfortunately to prove this property,
    stack reasoning is required, which is beyond the capabilities of an ACS.
\end{inparaenum}

\begin{figure}[tb]
\begin{center}
\begin{erlang}[emph={req,reply,ans,wait_over,worker_finished,task1,task2}]
letrec
worker= fun(Task) -> ...

spawn_wait= fun(F, L) ->  spawn_wait'(F, fun()->[], L).
spawn_wait'= fun(F, G, L) ->
    case L of
        []     -> G();
        [T|Ts] ->
            S = self(),
            C = spawn(fun() ->
                        send(S, {ans, self(), F(T) })),
            F' = fun() ->
                    receive
                        {ans, C, R} -> [ R | G() ]
                    end,
            spawn_wait'(F, F', Ts)
    end.

in spawn_wait(worker, [task1, task2, ...]).
\end{erlang}
 \caption{A program that \Soter\ cannot verify because of the stack}
 \label{fig:howait}
\end{center}
\end{figure}

\paragraph{Refinement and Extensions.}
\label{sec:refinement}\label{sec:cegar}

Our parametric definition of the abstract semantics allows us to tune the
precision of the analysis when the abstraction is too coarse for the property to
be proved. For safety properties, the counter-example witnessing a no-instance
is a finite run of the abstract model. We conjecture that, given a spurious
counter-example, it is possible to compute a suitable refinement of the basic
domains abstraction so that the counter-example is no longer a run of the
corresponding abstract semantics. However a na\"{i}ve implementation of the
refinement loop would suffer from state explosion. A feasible CEGAR loop will
need to utilise sharper abstractions: it is possible for example to pinpoint a
particular pid or call or mailbox for which the abstract domains need to be more
precise while coarsely abstracting the rest. The development of a fully-fledged
CEGAR loop is a topic of ongoing research.

The general architecture of our approach, combining static analysis and abstract
model generation, can be adapted to accommodate different language features and
different abstract models. By appropriate decoration of the analysis, it is
possible to derive even more complex models for which semi-decision verification
procedures have been developed~\cite{Bouajjani:03,Long:12}.

\section{Related Work}\label{sec:relatedwork}

\paragraph{Static Analysis.}

Verification or bug-finding tools for Erlang%
~\cite{Marlow:97,Nystrom:03,Lindahl:06,Christakis:10,Christakis:11,Carlsson:06}
typically rely on static analysis.
The information obtained, usually in the form of a call graph, is then used to extract type constraints or infer runtime properties.
Examples of static analyses of Erlang programs in the literature include data-flow~\cite{Carlsson:06}, control-flow~\cite{Nystrom:03,Lindahl:06} and escape~\cite{Christakis:10} analyses.

Van Horn and Might~\cite{Might:11} derive a CFA for a multithreaded extension of
Scheme, using the same methodology~\cite{VanHorn:10} that we follow. The
concurrency model therein is thread-based, and uses a compare-and-swap
primitive. Our contribution, in addition to extending the methodology to Actor
concurrency, is to use the derived parametric abstract interpretation to
bootstrap the construction of an infinite-state abstract model for automated
verification.

Reppy and Xiao \cite{Reppy:07} and Colby \cite{Colby:95} analyse the channel
communication patterns of Concurrent ML (CML). CML is based on typed channels
and synchronous message passing, unlike the Actor-based concurrency model of
Erlang.

Venet~\cite{Venet:96} proposed an abstract interpretation framework for the sanalysis of $\pi$-calculus, later extended to other process algebras by Feret~\cite{Feret:05} and applied to CAP, a process calculus based on the Actor model, by Garoche~\cite{Garoche:06}. In particular, Feret's non-standard semantics can be seen as an alternative to Van Horn and Might's methodology, but tailored for process calculi.

\paragraph{Model Checking.}
Huch \cite{Huch:99} uses abstract interpretation and model checking to verify LTL-definable properties of a restricted fragment of Erlang programs:
\begin{inparaenum}
	\item order-one
	\item tail-recursive (subsequently relaxed in a follow-up paper~\cite{Huch:02}),
	\item mailboxes are bounded
	\item programs spawn a fixed, statically computable, number of processes.
\end{inparaenum}
Given a data abstraction function, his method transforms a program to an abstract, finite-state model; if a path property can be proved for the abstract model, then it holds for the input Erlang program.
In contrast, our method can verify Erlang programs of every finite order, with no restriction on the size of mailboxes, or the number of processes that may be spawned. Since our method of verification is by transformation to a \emph{decidable infinite-state system} that simulates the input program, it is capable of greater accuracy.

{McErlang} is a model checker for Erlang programs developed by Fredlund and Svensson~\cite{Fredlund:07}. Given a program, a B\"uchi automaton, and an abstraction function, McErlang explores on-the-fly a product of an abstract model of the program and the B\"{u}chi automaton encoding a property.
When the abstracted model is infinite-state, McErlang's exploration may not terminate.
McErlang implements a fully-fledged Erlang runtime system, and it supports a substantial part of the language, including distributed and fault-tolerant features.

ACS can be expressed as processes in a suitable variant of CCS~\cite{Milner:80}.
Decidable fragments of process calculi have been used in the literature to
verify concurrent systems. Meyer~\cite{Meyer:08} isolated a rich fragment of the
$\pi$-calculus called \emph{depth-bounded}. For certain patterns of
communication, this fragment can be the basis of an abstract model that avoids
the ``merging'' of mailboxes of the processes belonging to the same pid-class.
Erlang programs however can express processes which are not depth bounded.
We plan to address the automatic abstraction of arbitrary Erlang programs as depth-bounded process elsewhere.

\paragraph{Bug finding.}
Dialyzer~\cite{Lindahl:06,Christakis:10,Christakis:11} is a popular bug finding tool, included in the standard Erlang / OTP distribution.
Given an Erlang program, the tool uses flow and escape~\cite{Park:92} analyses to detect specific error patterns.
Building on top of Dialyzer's static analysis, \emph{success types} are derived. Lindahl and Sagonas' success types~\cite{Lindahl:06} `never disallow the use of a function that will not result in a type clash during runtime' and thus never generate false positives. Dialyzer puts to good use the type annotations that programmers do use in practice; it scales well and is effective in detecting `discrepancies' in Erlang code. However, success typing cannot be used to \emph{verify} program correctness.


\paragraph{Conclusion.} We have defined a generic analysis for \Lang, and a way of extracting from the analysis a simulating infinite-state abstract model in the form of an ACS, which can be automatically verified for coverability: if a state of the abstract model is not coverable then the corresponding concrete states of the input \Lang\ program are not reachable. Our constructions are parametric on the abstractions for \Time, \Mailbox\ and \Data, thus enabling different analyses (implementing varying degrees of precision with different complexity bounds) to be easily instantiated. In particular, with a \zCFA-like specialisation of the framework, the analysis and generation of the ACS are computable in polynomial time. Further, the dimension of the resulting ACS is polynomial in the length of the input \Lang\ program, small enough for the verification problem to be tractable in many useful cases. The empirical results using our prototype implementation \Soter\ are encouraging. They demonstrate that the abstraction framework can be used to prove interesting safety properties of non-trivial programs automatically. We believe that the proposed technique can easily be adapted to accommodate other languages and other abstract models. The level of generality at which the algorithm is defined seems to support the definition of a CEGAR loop readily, the formalisation of which is a topic for future work.

\bibliographystyle{plain}
\bibliography{refs}

\begin{thebibliography}{10}

\bibitem{Agha:86}
G.~Agha.
\newblock {\em Actors: a model of concurrent computation in distributed
  systems}.
\newblock MIT Press, Cambridge, MA, USA, 1986.

\bibitem{Armstrong:10}
J.~Armstrong.
\newblock Erlang.
\newblock {\em CACM}, 53(9):68, 2010.

\bibitem{Armstrong:93}
J.~Armstrong, R.~Virding, and M.~Williams.
\newblock {\em Concurrent programming in {Erlang}}.
\newblock Prentice Hall, 1993.

\bibitem{Bouajjani:03}
A.~Bouajjani, J.~Esparza, and T.~Touili.
\newblock A generic approach to the static analysis of concurrent programs with
  procedures.
\newblock In {\em {ACM} {SIGPLAN} Notices}, volume~38, pages 62--73, 2003.

\bibitem{Carlsson:01}
R.~Carlsson.
\newblock An introduction to {Core Erlang}.
\newblock In {\em Proceedings of the {PLI'01} {Erlang} Workshop}, 2001.

\bibitem{Carlsson:06}
R.~Carlsson, K.~Sagonas, and J.~Wilhelmsson.
\newblock Message analysis for concurrent programs using message passing.
\newblock {\em {ACM} TOPLAS}, 2006.

\bibitem{Christakis:10}
M.~Christakis and K.~Sagonas.
\newblock Static detection of race conditions in erlang.
\newblock {\em PADL}, pages 119--133, 2010.

\bibitem{Christakis:11}
M.~Christakis and K.~Sagonas.
\newblock Detection of asynchronous message passing errors using static
  analysis.
\newblock {\em PADL}, pages 5--18, 2011.

\bibitem{Colby:95}
C.~Colby.
\newblock Analyzing the communication topology of concurrent programs.
\newblock In {\em PEPM}, pages 202--213, 1995.

\bibitem{DOsualdo:11}
E.~D'Osualdo, J.~Kochems, and C.-H.~L. Ong.
\newblock Verifying {Erlang}-style concurrency automatically.
\newblock Technical report, University of Oxford DCS Technical Report, 2011.
\newblock \url{http://mjolnir.cs.ox.ac.uk/soter/cpmrs.pdf}.

\bibitem{Soter:AGERE}
E.~D'Osualdo, J.~Kochems, and C.-H.~L. Ong.
\newblock Soter: an automatic safety verifier for {Erlang}.
\newblock In {\em Proceedings of the 2nd edition on Programming systems,
  languages and applications based on actors, agents, and decentralized control
  abstractions}, AGERE! '12, pages 137--140. ACM, 2012.

\bibitem{Feret:05}
J.~Feret.
\newblock Abstract interpretation of mobile systems.
\newblock {\em Journal of Logic and Algebraic Programming}, 63(1):59–130,
  2005.

\bibitem{Finkel:01}
A.~Finkel and P.~Schnoebelen.
\newblock Well-structured transition systems everywhere!
\newblock {\em Theoretical Computer Science}, 256(1-2):63--92, 2001.

\bibitem{Fredlund:07}
L.~Fredlund and H.~Svensson.
\newblock {McErlang}: a model checker for a distributed functional programming
  language.
\newblock In {\em ICFP}, pages 125--136, 2007.

\bibitem{Garoche:06}
P.~Garoche, M~Pantel, and X.~Thirioux.
\newblock Static safety for an actor dedicated process calculus by abstract
  interpretation.
\newblock In {\em FMOODS}, pages 78--92, 2006.

\bibitem{Huch:99}
F.~Huch.
\newblock Verification of {Erlang} programs using abstract interpretation and
  model checking.
\newblock In {\em ICFP}, pages 261--272, 1999.

\bibitem{Huch:02}
F.~Huch.
\newblock Model checking {Erlang} programs - abstracting recursive function
  calls.
\newblock 64:195--219, 2002.

\bibitem{Kaiser:12}
A.~Kaiser, D.~Kroening, and T.~Wahl.
\newblock Efficient coverability analysis by proof minimization.
\newblock In {\em CONCUR}, 2012.
\newblock \url{www.cprover.org/bfc/}.

\bibitem{Kobayashi:95}
N.~Kobayashi, M.~Nakade, and A.~Yonezawa.
\newblock Static analysis of communication for asynchronous concurrent
  programming languages.
\newblock {\em Static Analysis}, pages 225--242, 1995.

\bibitem{Lindahl:06}
T.~Lindahl and K.~Sagonas.
\newblock Practical type inference based on success typings.
\newblock In {\em PPDP}, pages 167--178, 2006.

\bibitem{Long:12}
Z.~Long, G.~Calin, R.~Majumdar, and R.~Meyer.
\newblock {Language-Theoretic} abstraction refinement.
\newblock In {\em FASE}, pages 362--376, 2012.

\bibitem{Marlow:97}
S.~Marlow and P.~Wadler.
\newblock A practical subtyping system for {Erlang}.
\newblock In {\em ICFP}, pages 136--149, 1997.

\bibitem{Meyer:08}
R.~Meyer.
\newblock On boundedness in depth in the $\pi$-calculus.
\newblock In {\em Fifth Ifip International Conference On Theoretical Computer
  Science}, pages 477--489, 2008.

\bibitem{Midtgaard:08}
J.~Midtgaard and T.~Jensen.
\newblock A calculational approach to control-flow analysis by abstract
  interpretation.
\newblock {\em Static Analysis}, pages 347--362, 2008.

\bibitem{Might:11}
M.~Might and D.~Van~Horn.
\newblock A family of abstract interpretations for static analysis of
  concurrent higher-order programs.
\newblock {\em Static Analysis}, pages 180--197, 2011.

\bibitem{Milner:80}
R.~Milner.
\newblock {\em A calculus of communicating systems}, volume~92.
\newblock Springer-Verlag Germany, 1980.

\bibitem{Nystrom:03}
S.~Nystr{\"o}m.
\newblock A soft-typing system for {Erlang}.
\newblock In {\em {ACM Sigplan Erlang Workshop}}, pages 56--71, 2003.

\bibitem{Ong:11}
C.-H.~L. Ong and S.~J. Ramsay.
\newblock Verifying higher-order functional programs with pattern-matching
  algebraic data types.
\newblock In {\em POPL}, pages 587--598, 2011.

\bibitem{Park:92}
Y.~G. Park and B.~Goldberg.
\newblock Escape analysis on lists.
\newblock In {\em {ACM} {SIGPLAN} Notices}, volume~27, pages 116--127, 1992.

\bibitem{Rackoff:78}
C.~Rackoff.
\newblock The covering and boundedness problems for vector addition systems.
\newblock {\em Theoretical Computer Science}, 6:223--231, 1978.

\bibitem{Reppy:07}
J.~H. Reppy and Y.~Xiao.
\newblock Specialization of {CML} message-passing primitives.
\newblock In {\em POPL}, pages 315--326, 2007.

\bibitem{Shivers:91}
O.~Shivers.
\newblock {\em Control-Flow Analysis of Higher-Order Languages}.
\newblock PhD thesis, Carnegie Mellon University, 1991.

\bibitem{VanHorn:10}
D.~Van~Horn and M.~Might.
\newblock Abstracting abstract machines.
\newblock In {\em ICFP}, pages 51--62, 2010.

\bibitem{Venet:96}
Arnaud Venet.
\newblock Abstract interpretation of the pi-calculus.
\newblock In {\em LOMAPS}, pages 51--75, 1996.

\end{thebibliography}

\clearpage
\appendix
\section{Abstract Domains, Orders, Abstraction Functions and Abstract Auxiliary Functions}\label{apx:abstractdomains}

\subparagraph{Abstract Domains, Orders and Abstraction Functions:}
\begin{align*}
    \abs\Pid
         &\is \ProgLoc \times \abs\Time \\
         &\begin{aligned}
             \leqx[pid]\; &\is\;\; = \times \leqx[t]\\
             \absf[pid]\, &\is\; \id \times \absf[t]
         \end{aligned}\\
    \abs\AddrVar &\is \abs\Pid \times \Var \times \abs\Data \times \abs\Time\\
        &\begin{aligned}
            \leqx[va]\; &\is\; \leqx[pid] \times = \times \leqx[d] \times \leqx[t]\\
            \absf[va]\; &\is \absf[pid]\; \times \id \times\; \absf[d]\; \times \absf[t]
        \end{aligned}\\
    \abs\Env
         &\is \Var \finmap \abs\AddrVar\\
         &\begin{aligned}
             &\abs\rho \leqx[env] \abs\rho' \iff \forall x \in \Var\,.\, \abs\rho(x) \leqx[va] \abs\rho'(x)\\
             &\absf[env](\rho)(x) \is \absf[va](\rho(x))
         \end{aligned}\\
    \abs\AddrKont &\is \abs\Pid \times \ProgLoc \times \abs\Env \times \abs\Time\\
        &\begin{aligned}
            \leqx[ka]\; &\is\; \leqx[pid] \times = \times \leqx[env] \times \leqx[t]\\
            \absf[ka]\; &\is \absf[pid]\; \times \id \times\; \absf[env]\; \times \absf[t]
        \end{aligned}\\
    \abs\Closure
        &\is \ProgLoc \times \abs\Env\\
        &\begin{aligned}
            \leqx[cl]\; &\is\; \;=\; \times \leqx[env]\\
            \absf[cl]\; &\is\; \id\; \times \absf[env]
        \end{aligned}\\
    \abs\Val &\is \abs\Closure \dunion \abs\Pid\\
        &\begin{aligned}
            \leqx[val]\; &\is\; \;\leqx[cl]\; + \leqx[pid]\\
            \absf[val]\; &\is\; \absf[cl]\; + \absf[pid]
        \end{aligned}\\
    \abs\ProcState
        & \is (\ProgLoc\dunion\abs\Pid) \times \abs\Env \times \abs\AddrKont \times \Time\\
        &\begin{aligned}
            \leqx[ps]\; &\is (= + \leqx[pid]) \times \leqx[env] \times \leqx[ka] \times \leqx[t]\\
            \absf[ps]\; &\is (\id + \absf[pid]\,) \times \absf[env] \times \absf[ka] \times \absf[t]
        \end{aligned}\\
    \abs\Procs
        & \is \abs\Pid\to\powerset(\abs\ProcState)\\
        &\begin{aligned}
             &\abs\pi \leqx[proc] \abs\pi' \iff \forall \abs\pid \in \abs\Pid\,.\, \abs\pi(\abs\pid) \subseteq \abs\pi'(\abs\pid)\\
             &\absf[procs](\pi)(\abs\pid ) \is \{\absf[ps](\pi(\pid)) \mid \absf[pid](\pid) = \abs\pid\,\}
         \end{aligned}\\
    \abs\Mailboxes &\is \abs\Pid\to\abs\Mailbox \\
        &\begin{aligned}
             &\abs\mu \leqx[ms] \abs\mu' \iff \forall \abs\pid \in \abs\Pid\,.\, \abs\mu(\abs\pid) \leqx[m] \abs\mu'(\abs\pid)\\
             &\absf[ms](\mu)(\abs\pid ) \is \bigsqcup\{\absf[m](\mu(\pid)) \mid \absf[pid](\pid) = \abs\pid\,\}
         \end{aligned}\\
    \abs\Store
        & \is (\!\abs\AddrVar  \to \powerset(\abs\Val))
                  \times
              (\abs\AddrKont \to \powerset(\abs\Kont))\\
        &\begin{aligned}
             &\begin{aligned}
                \abs\sigma \leqx[st] \abs\sigma' \iff  &\forall \abs b \in \abs\AddrVar\,.\, \abs\sigma(\abs b) \subseteq \abs\sigma'(\abs b)\\
                                                        &\forall \abs a \in \abs\AddrKont\,.\, \abs\sigma(\abs a) \subseteq \abs\sigma'(\abs a)
             \end{aligned}\\
             &\absf[st](\sigma)(\abs b) \is \{\absf[val](\sigma(b)) \mid \absf[va](b) = \abs b\,\}, \abs b \in \abs\AddrVar\\
             &\absf[st](\sigma)(\abs a) \is \{\absf[kont](\sigma(a)) \mid \absf[ka](a) = \abs a\,\}, \abs a \hspace{-0.5mm}\in \hspace{-0.5mm}\abs\AddrKont
         \end{aligned}\\
    \abs\State
        & \is \abs\Procs \times \abs\Mailboxes \times \abs\Store \\
        &\begin{aligned}
            \leqx\; &\is \leqx[procs] \times \leqx[ms] \times \leqx[st]\\
            \absf[cfa]\; &\is (\id + \absf[pid]\,) \times \absf[env]
        \end{aligned}
\end{align*}
where we write $f + g := \{ (x,x') \mid (x,x') \in f \text{ or } (x,x') \in g \}$.
\subparagraph{Abstract Auxiliary Functions:}
\begin{align*}
    \abs\newKPush & \colon \abs\Pid \times \abs\ProcState \to \abs\AddrKont\\
    \abs\newKPush & (\abs\pid,(\ell,\abs\rho,\_,\abs t\:))
        \is (\abs\pid,\larg{\ell}{0},\abs\rho,\abs t\:)\\[.5em]
    \abs\newKPop  & \colon \abs\Pid \times \abs\Kont \times \abs\ProcState \to \abs\AddrKont\\
    \abs\newKPop & (\abs\pid,\abs\kont,\tuple{\_,\_,\_,\abs t\:})
        \is (\abs\pid,\larg{\ell}{i+1},\abs\rho,\abs t\:)\\
            & \text{ where }
               \kont=\kArg{i}{\ell,\dots,\abs\rho,\_}\\[.5em]
    \abs\newVaddr & \colon \abs\Pid \times \Var \times \abs\Data \times \abs\ProcState
        \to \abs\AddrVar\\
    \abs\newVaddr & (\abs\pid,x,\abs\data,\tuple{\_,\_,\_,\abs t\:})
        \is (\abs\pid,x,\abs\data, \abs t\:)\\
    \abs\newPid & \colon \abs\Pid \times \ProgLoc \times \abs\Time \to \abs\Pid\\
    \abs\newPid & ((\ell',\abs\cntr'),\ell,\abs\cntr\:) \is
        (\ell, \abs\tick^*(\,\abs\cntr, \abs\tick(\ell',\abs\cntr'))
\end{align*}
\subparagraph{Concrete and Abstract Match Function:}
\begin{align*}
    &\match_{\rho,\sigma}(p_i, (x,\rho')) = \match_{\rho,\sigma}(p_i, \sigma(\rho'(x)))\\
    &\match_{\rho,\sigma}(x, d) = \{ x \mapsto d\} \text{ if } x \notin \fundom(\rho)\\
    &\match_{\rho,\sigma}(x, d) = \{ x \mapsto d\} \text{ if } \match_{\rho',\sigma}(p', d) \neq \bot\\
    &\quad\text{ where } (p',\rho') = \sigma(\rho(x))\\
    &\match_{\rho,\sigma}(p, (t,\rho')) = \bigotimes_{1 \leq i \leq n}\match_{\rho,\sigma}(p_i, (t_i,\rho'))\\
    &\quad
       \begin{aligned}
            \text{where}&&p &= \erl{#c#(}\lstc{p}{n}\erl{)}\\
                        &&t &= \erl{#c#(}\lstc{t}{n}\erl{)}\\
                        &&\theta \otimes \theta' &= \bot \qquad\quad \text{ if } \exists x\,.\, \theta(x) \neq \theta'(x)\\
                        &&\theta \otimes \theta' &= \theta \union \theta' \quad \text{ otherwise }\\
                        &&\bigotimes_{\emptyset} &= []
        \end{aligned}\\
    &\match_{\rho,\sigma}(p, d) = \bot \qquad \text{ otherwise}\\[14pt]
    &\abs\match_{\abs\rho,\abs\sigma}(p_i, (x,\abs\rho')) = \bigcup_{ \abs d \in \abs\sigma(\abs\rho'(x))} \abs\match_{\abs\rho,\abs\sigma}(p_i, \abs d\:)\\
    &\abs\match_{\abs\rho,\abs\sigma}(x, d) = \{ x \mapsto \abs d\:\}\\
    &\abs\match_{\abs\rho,\abs\sigma}(p, (t,\abs\rho')) = \mathop{\abs\bigotimes}_{1 \leq i \leq n}\abs\match_{\abs\rho,\abs\sigma}(p_i, (t_i,\abs\rho'))\\
    &\quad
       \begin{aligned}
            \text{if }  &p = \erl{#c#(}\lstc{p}{n}\erl{)} \text{ and }\\
                        &t = \erl{#c#(}\lstc{t}{n}\erl{)}\\
            \text{where } &\mathop{\abs\bigotimes} (\{ \Theta_i \hspace{-.33mm}\mid\hspace{-.33mm} 1 \leq i \leq n\})\hspace{-0.66mm}=\hspace{-0.66mm}\left\{
                            \theta \,\left|\,
                            \begin{aligned}
                                &\theta = \bigotimes_{1 \leq i \leq n} \theta_i, \theta \neq \bot,\\
                                &\theta_i \in \Theta_i, 1 \leq i \leq n
                            \end{aligned}
                            \right.\right\}
        \end{aligned}\\
    &\abs\match_{\abs\rho,\abs\sigma}(p, d) = \emptyset \qquad \text{ otherwise}\\
\end{align*}

\begin{lemma}\label{apx:pcfasound:lem1}
Suppose the concrete domain $C = A\finmap B$ of partial functions has abstract domain $\abs C = \abs A\to\powerset(\abs B)$ with
the induced order $\leq$ and abstraction function $\absf[C] : C \to \abs C$ as specified in \ref{sec:absmachine:partialmap}
    then for all $f \in C$ and for all $\absf[C](f) \leq \abs f$
    \begin{align}
        \forall a \in \fundom(f)\,.\, \absf[B](f(a)) \in \abs f(\absf[A](a)). \label{apx:pcfasound:lem1:eqn1}
    \end{align}
    Further suppose $f,f' \in C$ such that $f' = f[a_1\mapsto b_1,\ldots,a_n\mapsto b_n]$ and let $\abs f, \abs f' \in \abs C$ such that $\abs f' = \abs f \join [\abs a_1\mapsto \abs b_1,\ldots, \abs a_n\mapsto \abs b_n]$ with $\absf[C](f) \leq \abs f$ and
    $\absf[A](a_i) = \abs a_i$, $\absf[B](b_i) = \abs b_i$ for $i = 1,\ldots,n$ then
    \begin{align}
        \absf[C](f') \leq \abs f'.    \label{apx:pcfasound:lem1:eqn2}
    \end{align}
\end{lemma}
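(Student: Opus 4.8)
The plan is to prove both parts by directly unfolding the definition of $\absf[C]$ and the pointwise order on $\abs C$ recorded in \ref{sec:absmachine:partialmap}; no induction or fixpoint reasoning is needed, so the whole argument is definitional bookkeeping. The only genuinely subtle ingredient is the interaction between the many-to-one abstraction of keys and the join used to form $\abs f'$, which I isolate below.

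For \eqref{apx:pcfasound:lem1:eqn1} I would fix $a\in\fundom(f)$ and observe that $(a,f(a))\in f$ with $\absf[A](a)=\absf[A](a)$ trivially. Hence, by the defining formula $\absf[C](f)(\abs a)=\{\absf[B](b)\mid (a',b)\in f,\ \absf[A](a')=\abs a\}$, we get $\absf[B](f(a))\in \absf[C](f)(\absf[A](a))$. The hypothesis $\absf[C](f)\leq\abs f$ means exactly that $\absf[C](f)(\abs a)\subseteq\abs f(\abs a)$ for every $\abs a$; instantiating at $\abs a=\absf[A](a)$ yields the claim.

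For \eqref{apx:pcfasound:lem1:eqn2} I would first record two consequences of $\abs f'=\abs f\join[\abs a_1\mapsto\abs b_1,\ldots,\abs a_n\mapsto\abs b_n]$: since a join is an upper bound, $\abs f\leq\abs f'$, and since each binding lies below the join, $\abs b_i\in\abs f'(\abs a_i)$ for every $i$. To establish $\absf[C](f')\leq\abs f'$ I fix $\abs a$ and take an arbitrary element of $\absf[C](f')(\abs a)$, which by definition has the form $\absf[B](f'(a'))$ for some $a'\in\fundom(f')=\fundom(f)\union\{a_1,\ldots,a_n\}$ with $\absf[A](a')=\abs a$, and split on whether $a'$ is one of the updated keys. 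If $a'=a_i$ then $f'(a')=b_i$, so the element is $\absf[B](b_i)=\abs b_i$ and $\abs a=\abs a_i$; membership $\abs b_i\in\abs f'(\abs a_i)$ is exactly the second recorded fact. Otherwise $a'\in\fundom(f)$ is not overridden, so $f'(a')=f(a')$ and the element equals $\absf[B](f(a'))$; applying \eqref{apx:pcfasound:lem1:eqn1} to $f$ (with $\abs f$) gives $\absf[B](f(a'))\in\abs f(\abs a)$, and $\abs f\leq\abs f'$ closes this case.

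The step I would check most carefully is the second case split, because abstraction is many-to-one: distinct concrete keys $a_i$ may be identified to a single abstract key $\abs a_i=\abs a_j$. Consequently the abstract update must \emph{accumulate} images (take their union) rather than let a later binding overwrite an earlier one, so that $\abs b_i\in\abs f'(\abs a_i)$ holds for all $i$ simultaneously; I would make this union reading of the join explicit, at which point the second recorded fact is immediate and the case analysis is exhaustive. In the intended applications the concrete keys $a_i$ are always fresh and pairwise distinct, so no concrete override ever arises and this subtlety affects only the abstract side.
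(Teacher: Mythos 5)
Your proof is correct and follows essentially the same route as the paper's: part one is the identical unfolding of $\absf[C]$ and the pointwise order, and part two is the same definitional case analysis (you split on the concrete preimage key $a'$, the paper on the abstract key $\abs a$, which is immaterial). Your closing remark about key collisions is worth keeping: the paper's own case analysis tacitly writes $\absf[C](f')(\abs a)$ as a singleton $\{\absf[B](b_i)\}$ joined with the untouched part, i.e.\ it assumes a single $a_i$ hits each abstract key, whereas your accumulating (union) reading of the join is exactly what makes $\abs b_i \in \abs f'(\abs a_i)$ hold for all $i$ simultaneously when several $\abs a_i$ coincide.
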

\begin{proof}
Let $f \in C$ and $\abs f \in \abs C$ such that $\absf[C](f) \leq \abs f$.
The definition of $\leq$ implies that for all $\abs a \in \abs A$
\[
	\absf[C](f)(\abs a) \subseteq \abs f(\abs a).
\]
Take $a \in A$ and fix $\abs a = \absf[A](a)$ then we obtain
\[
    \absf[C](f)(\absf[A](a)) \subseteq \abs f(\absf[A](a)).
\]
Expanding the definition of $\absf[C]$ yields
\[
    \{\absf[B](b_0) \mid (a_0,b_0) \in f, \absf[A](a_0) = \absf[A](a))\} \subseteq \abs f(\absf[A](a)).
\]
In particular $\absf[B](f(a)) \in \{\absf[B](b_0) \mid (a_0,b_0) \in f, \absf[A](a_0) = \absf[A](a))\}$ which yields what we set out to prove
\[
	\forall a \in \fundom(f)\,.\, \absf[B](f(a)) \in \abs f(\absf[A](a)).
\]

Turning to equation \ref{apx:pcfasound:lem1:eqn2} we want to show $\absf[C](f') \leq \abs f'$.
Let $\abs a \in \abs A$ then there are several cases to consider
\begin{asparaenum}
    \item $\absf[C](f')(\abs a) = \absf[C](f)(\abs a)$. Then since $\absf[C](f) \leq \abs f \leq \abs f'$ we have
        $\absf[C](f)(\abs a) \subseteq \abs f(\abs a) \subseteq \abs f'(\abs a)$.
    \item $\abs a = \absf[A](a_i)$ for some $1 \leq i \leq n$. Then $\abs a = \abs a_i$ and thus
        \begin{align*}
            \absf[C](f')(\abs a) & = \{ \absf[B](b_i) \} \union \{\absf[B](f(a)) | \absf[A](a) = \abs a, a \neq a_i\} \\
                                 &  \subseteq \{\abs b_i\} \union \absf[C](f)(\abs a) \\
                                 & \subseteq \{\abs b_i\} \union \abs f(\abs a) \subseteq \abs f'(\abs a)
        \end{align*}
    \item otherwise there does \emph{not} exist $(a,b) \in f'$ such that $\absf[A](a) = \abs a$ and hence $\absf[C](f')(a) = \emptyset$ which makes our claim trivially true.
\end{asparaenum}
We can thus conclude that $\absf[C](f') \leq \abs f'$.
\end{proof}
\begin{corollary} \label{apx:pcfasound:cor1}
Let $\pi \in \Procs$ and $\abs\pi \in \abs\Procs$ such that $\absf[proc](\pi) \leq \abs\pi$,
let $\sigma \in \Store$ and $\abs\sigma \in \abs\Store$ such that $\absf[st](\sigma) \leq \abs\sigma$ and
Let $\mu \in \Mailboxes$ and $\abs\mu \in \abs\Mailboxes$ such that $\absf[ms](\mu) \leq \abs\mu$ then
\begin{enumerate}
	\item $\forall \pid \in \Pid\,.\, \absf[ps](\pi (\pid)) \in \abs\pi(\absf[pid](\pid))$ \label{apx:pcfasound:cor1:i}
	\item $\forall b \in \AddrVar\,.\, \absf[val](\sigma(b)) \in \abs\sigma(\absf[va](b))$
	\item $\forall a \in \AddrKont\,.\, \absf[kont](\sigma(a)) \in \abs\sigma(\absf[ka](a))$ \label{apx:pcfasound:cor1:iii}
    \item $\forall \pid \in \Pid\,.\, \absf[m](\mu (\pid)) \leq \abs\mu(\absf[pid](\pid))$ \label{apx:pcfasound:cor1:iv}
    \item $\forall \pid \in \Pid\,.\, \forall x \in \Var\,.\, \forall \delta \in \Data\,.\,
           \forall q \in \ProcState\,.\,$ \label{apx:pcfasound:cor1:v}
           \[
                \absf[va](\newVaddr(\pid,x,\delta,q)) = \abs\newVaddr(\absf[pid](\pid),x,\absf[d](\delta),\absf[ps](q))
           \]

\end{enumerate}
\end{corollary}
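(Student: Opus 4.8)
The plan is to handle the five claims in three groups, according to how directly each reduces to Lemma~\ref{apx:pcfasound:lem1}. Claims~(i), (ii) and~(iii) are all instances of the partial-function abstraction $C = A \finmap B$, $\abs C = \abs A \to \powerset(\abs B)$ treated by that lemma, so each follows at once from equation~\ref{apx:pcfasound:lem1:eqn1} (whose quantifier over $\fundom$ supplies the tacit domain restriction in the corollary) once the domains are matched up. For~(i) I would instantiate the lemma with $A = \Pid$, $B = \ProcState$, $\absf[A] = \absf[pid]$ and $\absf[B] = \absf[ps]$, using the hypothesis $\absf[proc](\pi) \leq \abs\pi$; the conclusion of equation~\ref{apx:pcfasound:lem1:eqn1} is then exactly $\absf[ps](\pi(\pid)) \in \abs\pi(\absf[pid](\pid))$. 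Claims~(ii) and~(iii) concern the two partitions of the store: since $\Store = (\AddrVar \finmap \Val) \times (\AddrKont \finmap \Kont)$ and both $\absf[st]$ and its order act componentwise, the hypothesis $\absf[st](\sigma) \leq \abs\sigma$ restricts to each partition, and I would apply the lemma with $(A,B,\absf[A],\absf[B])$ equal to $(\AddrVar,\Val,\absf[va],\absf[val])$ for~(ii) and to $(\AddrKont,\Kont,\absf[ka],\absf[kont])$ for~(iii).

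Claim~(iv) falls outside the scope of the lemma, because mailboxes abstract into the join-semilattice $\abs\Mailboxes = \abs\Pid \to \abs\Mailbox$ rather than into a powerset; here $\absf[ms]$ merges the mailboxes of all concrete pids of a class by a \emph{join} rather than by collecting them in a set. I would therefore argue directly. From $\absf[ms](\mu) \leqx[ms] \abs\mu$ and the pointwise definition of $\leqx[ms]$, instantiating at $\absf[pid](\pid)$ gives $\absf[ms](\mu)(\absf[pid](\pid)) \leqx[m] \abs\mu(\absf[pid](\pid))$. Unfolding $\absf[ms]$, the left-hand side is $\bigsqcup\{\absf[m](\mu(\pid')) \mid \absf[pid](\pid') = \absf[pid](\pid)\}$, of which $\absf[m](\mu(\pid))$ is one joinand; since a join bounds each of its joinands from above, $\absf[m](\mu(\pid)) \leqx[m] \absf[ms](\mu)(\absf[pid](\pid)) \leqx[m] \abs\mu(\absf[pid](\pid))$, which is the claim.

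Finally, claim~(v) is a pure unfolding of definitions and uses none of the hypotheses. Writing $q = \tuple{\_,\_,\_,\cntr}$, the concrete $\newVaddr$ returns $(\pid,x,\delta,\cntr)$, and $\absf[va] = \absf[pid] \times \id \times \absf[d] \times \absf[t]$ sends this tuple to $(\absf[pid](\pid), x, \absf[d](\delta), \absf[t](\cntr))$. On the other side, $\absf[ps] = (\id + \absf[pid]) \times \absf[env] \times \absf[ka] \times \absf[t]$ leaves the time component as $\absf[t](\cntr)$, so $\abs\newVaddr$ applied to $(\absf[pid](\pid), x, \absf[d](\delta), \absf[ps](q))$ returns the same tuple, giving the equality. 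The proof is essentially bookkeeping; the one step needing any care is~(iv), where the semilattice join replaces set union and the generic lemma must give way to the short upper-bound argument above.
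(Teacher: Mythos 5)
Your proposal is correct and follows essentially the same route as the paper: claims (i)--(iii) are dispatched by Lemma~\ref{apx:pcfasound:lem1} with the appropriate instantiations, claim (iv) is handled by the direct argument that $\absf[m](\mu(\pid))$ is one joinand of $\absf[ms](\mu)(\absf[pid](\pid))$ and hence below it, and claim (v) follows by unfolding $\newVaddr$, $\abs\newVaddr$ and the componentwise abstraction functions. Your observations that the lemma's quantification over $\fundom(f)$ supplies the corollary's tacit domain restriction, and that (iv) escapes the lemma because mailboxes land in a join-semilattice rather than a powerset, make the argument if anything slightly more explicit than the paper's.
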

\begin{proof}
    Cases \ref{apx:pcfasound:cor1:i} - \ref{apx:pcfasound:cor1:iii} follow directly from Lemma \ref{apx:pcfasound:lem1};
    it remains to show the claims of \ref{apx:pcfasound:cor1:iv} and \ref{apx:pcfasound:cor1:v}.
    \begin{asparaenum}
        \item[(iv)] By assumption $\absf[ms](\mu) \leq \abs\mu$ which implies that
            \[
                \absf[ms](\mu)(\absf[pid](\pid)) \leq \abs\mu(\absf[pid](\pid)) = \abs\mu(\abs\pid).
            \]
            Expanding $\absf[ms]$ then gives us that $\absf[m](\mu(\pid)) \leq \absf[ms](\mu)(\absf[pid](\pid))$,
            since $\absf[ms](\mu) = \lambda\abs\pid.\bigsqcup\{\absf[m](\mu(\pid)) \mid \absf[pid](\pid)=\abs\pid\}$,
            which allows us to conclude
            \[
                \absf[m](\mu(\pid)) \leq \abs\mu(\abs\pid).
            \]
        \item[(v)] The claim follows straightforwardly from expanding $\newVaddr$ and $\abs\newVaddr$:
            \begin{align*}
                \absf[va](\newVaddr(\pid, x, \delta, q)) &= (\absf[pid](\pid),x,\absf[d](\delta),\absf[t](t))\\
                & \abs\newVaddr(\absf[pid](\pid), x, \absf[d](\delta), \absf[ps](q))
            \end{align*}
            where $q = \tuple{e, \rho, a, t}$.
    \end{asparaenum}
\end{proof}

\section{Proof of Theorem~\ref{thm:CFAsound}}\label{apx:proofCFAsound}
\begin{proof}[Proof of Theorem~\ref{thm:CFAsound}]
The proof
is by a case analysis of the rule that defines the concrete transition $s \semTo s'$. For each rule, the transition in the concrete system can be replicated in the abstract transition system using the abstract version of the rule, with the appropriate choice of abstract pid, the continuation from the abstract store, the message from the abstract mailbox, etc.

Let $s = \tuple{\pi, \mu, \sigma}
        \to
    \tuple{\pi', \mu', \sigma'} = s'$
and $u = \tuple{\abs\pi, \abs\mu, \abs\sigma}$
such that $\absf[proc](\pi) \leq \abs\pi$, $\absf[mail](\mu) \leq \abs\mu$,
      and $\absf[st](\sigma) \leq \abs\sigma$.
We consider a number of rules for illustration.

\medskip

\noindent\emph{Case}: (\ref{rule:send}).
    {%
    We know that $s \semTo s'$ using rule \ref{rule:send}; we can thus assume
    \begin{align*}
        \pi(\pid) &= \hbox to 6.6cm{$\tuple{v, \rho, a, t}$\hfill}\\
        \sigma(a) &= \kArg{2}{\ell,d,\iota',\_,c}\\
        d         &= (\erl{send}, \_ )
        \intertext{and for $s'$}
        \pi' &= \pi[\pid\mapsto\tuple{v, \rho, c, t}]\\
        \mu' &= \mu[\pid'\mapsto\enqueue((v,\rho),\mu(\pid'))]\\
        \sigma' &= \sigma.
    \end{align*}
    As a first step we will examine $u$ and show that $u \cfaTo u'$ for some $u'$.
    For $\abs\pi$ and $\abs\sigma$, writing $\abs\pid := \absf[pid](\pid)$, Corollary \ref{apx:pcfasound:cor1} gives us
    \begin{align*}
        &\tuple{v, \abs\rho, \abs a, \abs t\:} \in \abs\pi(\abs\pid)\\
        &\kArg{2}{\ell, \abs d, \abs\pid',\_,\abs c} \in \abs\sigma(\abs a)
    \end{align*}
    where $\absf[env](\rho) = \abs\rho$, $\absf[ak](a) = \abs a$, $\abs d = \absf[val](d)$, $\abs t = \absf[t](t)$, $\abs \pid' = \absf[pid](\pid')$ and $\abs c = \absf[ka](c)$.
    Rule \ref{absrule:send} is now applicable and we can set
    \begin{align*}
        \abs\pi' &:= \abs\pi\join[\abs\pid \mapsto\abs q]\\
        \abs q\;\, &:= \tuple{v, \abs\rho,\abs c, \abs t\:}\\
        \abs\mu' &:= \abs\mu[\abs\pid'\mapsto\abs\enqueue((v,\abs\rho),\abs\mu(\abs\pid'))]\\
        u' &:= \tuple{\abs\pi', \abs\mu', \abs\sigma}.
    \end{align*}
    It follows from rule (\ref{absrule:send}) that $u \cfaTo u'$. It remains to show that $\absf[cfa](s') \leq u'$ which follows directly from
    \begin{inparaenum}
        \item $\absf[proc](\pi') \leq \abs\pi'$ and
        \item $\absf[ms](\mu') \leq \abs\mu'$.
    \end{inparaenum}
    \begin{enumerate}
        \item $\absf[proc](\pi') \leq \abs\pi'$
            follows immediately from Lemma \ref{apx:pcfasound:lem1} since $\abs\pid = \absf[pid](\pid)$ and $\absf[ps](q) = \abs q$.
        \item $\absf[ms](\mu') \leq \abs\mu'$.
            It is sufficient to show that $\absf[pid](\pid') = \abs\pid'$, which is immediate, and
            $\absf[m](\mu'(\pid')) \leq \abs\mu(\abs\pid)$. For the latter, since $\absf[env](\rho) = \abs\rho$, a sound basic domain abstraction gives us
            \begin{align*}
                \absf[m](\mu'(\pid'))   &= \absf[m](\enqueue((v,\rho), \mu(\pid'))) \\
                                        &\leq \abs\enqueue ((v, \abs\rho), \abs\mu(\abs\pid')) = \abs\mu(\abs\pid)
            \end{align*}
            provided we can show $\absf[m](\mu(\pid')) \leq \abs\mu(\abs\pid')$; the latter inequality follows Corollary \ref{apx:pcfasound:cor1}.
            Hence we can conclude $\absf[ms](\mu') \leq \abs\mu'$ which completes the proof of this case.
    \end{enumerate}
    }%

\medskip

\noindent\emph{Case}: (\ref{rule:receive}).
    In the concrete $s \semTo s'$ using the \ref{rule:receive}, hence we can make the following assumptions
    \begin{align*}
        \pi(\pid) &= \hbox to 6.6cm{$\tuple{\erl{receive}\ p_1 \to e_1\ldots p_n\to e_n\ \erl{end},\rho,a,t} =: q$\hfill}\\
        (i,\theta,\mailbox) &= \mailmatch(\lst{p}{n},\mu(\pid),\rho,\sigma) \\
        \theta &= [x_1\mapsto d_1\ldots x_k\mapsto d_k]\\
        b_j &= \newVaddr(\pid, x_j, \delta_j, q) \\
        \delta_{j} &= \resolve(\sigma, d_{j})
    \intertext{and for state $s'$}
        \pi' &= \pi[\pid \mapsto q']\\
        q'    &= \tuple{e_i, \rho', a, t}\\
        \rho'&= \rho[x_1 \mapsto b_1\ldots x_k \mapsto b_k]\\
        \mu' &= \mu[\pid \mapsto \mailbox]\\
        \sigma' &= \sigma[b_1 \mapsto d_1 \ldots b_k \mapsto d_k].
    \end{align*}
    As a first step we will look at $u$ to prove there there exists a $u'$
    such that $u \cfaTo u'$ using rule \ref{absrule:receive}.
    We can invoke Corollary \ref{apx:pcfasound:cor1}, since $\absf[proc](\pi) \leq \abs\pi$, to obtain
    \[
        \abs q \is \tuple{\erl{receive}\ p_1 \to e_1\ldots p_n\to e_n\ \erl{end}, \abs\rho,\abs a, \abs t\:}\in \abs\pi(\abs\pid)
    \]
    where we write $\abs\rho \is \absf[env](\rho)$, $\abs a \is \absf[ka](a)$, $\abs t = \absf[t](t)$ and $\abs\pid \is \absf[pid](\pid)$.
    Moreover Corollary \ref{apx:pcfasound:cor1} gives us
    \[
        \absf[m](\mu(\pid)) \leq \abs\mu(\abs\pid).
    \]
    as $\absf[ms](\mu) \leq \abs\mu$ and $\abs\pid = \absf[pid](\pid)$.
    Since the instantiation of the basic domains is sound and $\absf[st](\sigma) \leq \abs\sigma$ we then know that
    \[
        (i,\abs\theta,\abs\mailbox) \in \abs\mailmatch(\vect{p},\abs\mu(\abs\pid),\abs\rho,\abs\sigma)
    \]
    such that $\abs\theta = \absf[sub](\theta)$ and $\abs\mailbox \geq \absf[m](\mailbox)$.
    Turning to the substitution $\abs\theta$ we can see that
    \[
        \abs\theta =  [x_1\mapsto \abs d_1\ldots x_k \mapsto \abs d_k]
    \]
    where $\abs d_i = \absf(d_i)$ for $1 \leq i \leq k$.
    Appealing to the sound basic domain instantiation once more, noting that $\absf[st](\sigma) \leq \abs\sigma$,
    yields that for $j = 1,\ldots,k$ we have $\abs\delta_j := \absf[d](\delta_j) \in \resolve(\abs\sigma, \abs{d_{j}})$;
    to obtain new abstract variable addresses we can now set
    $\abs{b_j} \is \abs\newVaddr(\abs\pid, x_j, \abs{\delta_j}, \abs q\:)$.
    Rule \ref{absrule:receive} is applicable now; we make the following definitions
    \begin{align*}
        \abs\pi'    &\is \abs\pi\join[\abs\pid\mapsto \abs q']\\
        \abs q'     &\is \tuple{e_i, \abs\rho',\abs a, \abs t\:} \\
        \abs\rho'   &\is \abs\rho[x_1\mapsto\abs b_1\ldots x_k\mapsto\abs b_k]\\
        \abs\mu'    &\is \abs\mu[\abs\pid\mapsto \abs\mailbox]\\
        \abs\sigma' &\is \abs\sigma\join[\abs b_1\mapsto\abs d_1\ldots\abs b_k\mapsto\abs d_k]\\
        \abs u'     &\is \tuple{\abs\pi', \abs\mu', \abs\sigma', \abs\timest}
    \end{align*}
    and observe that $u \cfaTo u'$.
    It remains to show $\absf[cfa](s') \leq u'$ which follows directly if we can prove
    \begin{inparaenum}
        \item $\absf[proc](\pi') \leq \abs\pi'$,
        \item $\absf[ms](\mu') \leq \abs{\mu}'$ and
        \item $\absf[st](\sigma') \leq \abs\sigma'$.
    \end{inparaenum}
    \begin{enumerate}
        \item $\absf[proc](\pi') \leq \abs\pi'$. We note that by Corollary \ref{apx:pcfasound:cor1} we know
            $\abs b_i = \absf[va](b_i)$ as
            $\abs\pid = \absf[pid](\pid)$, $\abs\delta_i = \absf[d](\delta_i)$ and $\absf[proc](q) = \abs q$
             for $1 \leq i \leq n$. It follows that $\abs\rho' = \absf[env](\rho')$
            and hence $\abs q' = \absf[ps](q')$. Lemma \ref{apx:pcfasound:lem1} is now applicable, since $\abs\pid = \absf[pid](\pid)$,
            to give $\absf[proc](\pi') \leq \abs\pi'$.
        \item $\absf[ms](\mu') \leq \abs{\mu}'$. It is sufficient to show that $\abs\pid = \absf[pid](\pid)$, which is immediate,
            and $\absf[m](\mailbox) \leq \abs\mailbox$ which we have already established above; hence we can conclude $\absf[ms](\mu') \leq \abs{\mu}'$.
        \item $\absf[st](\sigma') \leq \abs\sigma'$. The observation that $\abs b_i = \absf[va](b_i)$ and $\abs d_i = \absf[val](d_i)$
            allows the application of Lemma \ref{apx:pcfasound:lem1} which gives $\absf[st](\sigma') \leq \abs\sigma'$ as desired.
    \end{enumerate}
    This completes the proof of this case.

\medskip

\noindent\emph{Case}: (\ref{rule:apply}).
{%
    Since $s \semTo s'$ using rule \ref{rule:apply} we can assume that
    \begin{align*}
        \pi(\pid) &= \tuple{v, \rho, a, t}   =: q&                         \\
        \sigma(a) &= \kArg{n}{\ell,\lst[0]{d}{n-1},\rho',c} := \kont\\
        \largn{\ell} &= n\\
        \ensuremath{d_0 &= (\erl{fun(}\lst{x}{n}\erl{)}\to e, \rho_0)}\\
        d_n &= (v, \rho)
        \intertext{and for $i = 1, \ldots, n$}
        \delta_i &= \resolve(\sigma, d_i) \\
        b_i &= \newVaddr(\pid, x_i, \delta_i, q)
        \intertext{additionally for the successor state $s'$}
        \pi' &= \pi[\pid\mapsto q']\\
        & \text{where } q' \is \tuple{e,\rho'[x_1\to b_1\ldots x_n\to b_n],c,\tick(\ell, t)}&\\
        \sigma' &= \sigma[b_1\mapsto d_1\ldots b_n\mapsto d_n]\\
        \mu' &= \mu
    \end{align*}
    As a first step we will examine $u$ and show that there exists a $u'$ such that
    $u \cfaTo u'$ using rule \ref{absrule:apply}.
    From Corollary \ref{apx:pcfasound:cor1}, since $\absf[proc](\pi) \leq \abs\pi$, it follows that
    \[
        \abs q := \tuple{v, \absf[env](\rho), \absf[ka](a), \absf[t](t)} \in \abs\pi(\absf[pid](\pid)).
    \]
    Letting $\abs\rho := \absf[env](\rho)$, $\abs a := \absf[ka](a)$, $\abs t := \absf[t](t)$ and $\abs\pid := \absf[pid](\pid)$
    we can appeal to Corollary \ref{apx:pcfasound:cor1} again, as $\absf[st](\sigma) \leq \abs\sigma$, to obtain
    \[
    	\kArg{n}{\ell,\lst[0]{\abs d}{n-1},\abs\rho',\abs c} \in \abs\sigma(\abs a)
    \]
    where we write $\abs d_i := \absf[val](d_i)$ for $0 \leq i < n$, $\abs\rho' := \absf[env](\rho')$ and $\abs c := \absf[ka](c)$.
    Expanding $\absf[val]$ yields
    \[
        \abs d_0 = (\erl{fun(}\lst{x}{n}\erl{)}\to e, \abs\rho_0)
    \]
    where we write $\abs\rho_0 := \absf[env](\rho_0)$.
    Taking $\abs d_n := (v,\abs\rho)$ we obtain from
    our sound basic domain abstraction
    \[
        \abs \delta_i := \absf[d](\delta_i) \in \abs\resolve(\abs\sigma, \abs d_i) \text{ for } i = 1, \ldots, n
    \]
    as $\absf[st](\sigma) \leq \abs\sigma$ and  $\abs d_i = \absf[val](d_i)$.
    Turning to the abstract variable addresses we define
    \[
      \abs b_i \is \abs\newVaddr(\abs\pid,x_i,\abs\delta_i, \abs q\:) \text{ for } 1 \leq i \leq n.
    \]
    Rule \ref{absrule:apply} is now applicable and we define
    \begin{align*}
    	\abs\pi' &\is \abs\pi\join
		[\abs\pid\mapsto \abs q'
			]\\
        \abs q' &\is \tuple{e,\abs\rho'[x_1\to \abs b_1\ldots x_n\to \abs b_n],\abs c, \abs\tick(\ell,\abs t\:)}\\
	\abs\sigma' &\is \abs\sigma\join
		[\,\abs b_1\mapsto \abs d_1\ldots \abs b_n\mapsto \abs d_n]\\
		u' &\is \tuple{\abs\pi', \abs\mu, \abs\sigma'}.
    \end{align*}
    It is clear from rule \ref{absrule:apply} that $u \cfaTo u'$; it remains to show that $\absf[cfa](s') \leq u'$ to prove this case.
    The latter follows if we can justify
    \begin{inparaenum}
        \item $\absf[proc](\pi') \leq \abs\pi'$ and
        \item $\absf[st](\sigma') \leq \abs\sigma'$.
    \end{inparaenum}
    \begin{enumerate}
        \item $\absf[proc](\pi') \leq \abs\pi'$.
            We can appeal to Lemma \ref{apx:pcfasound:lem1} provided we can show that
            $\abs\pid = \absf[pid](\pid)$ and $\absf[ps](q') = \abs q'$
            where the former is immediate.
            For the latter, first observe that since we have a sound basic domain abstraction we know
            $\absf[t](\tick(\ell,t)) \leq \abs\tick(\ell, \abs t\:)$;
            however as $\Time$ is a flat domain so the above inequality is in fact an equality
            \[
                \absf[t](\tick(\ell, t)) = \abs\tick(\ell, \abs t\:).
            \]
            Moreover $\abs b_i = \absf[va](b_i)$ for $1 \leq i \leq n$ by Corollary \ref{apx:pcfasound:cor1}, hence
            \[
                \absf[env](\rho'[x_1\to b_1\ldots x_n\to b_n]) = \abs\rho'[x_1\to \abs b_1\ldots x_n\to \abs b_n]
            \]
            as $\absf[env](\rho') = \abs\rho'$; in combination with $\abs c = \absf[ka](c)$ we obtain the desired $\absf[ps](q') = \abs q'$. Thus we conclude that
            $\absf[proc](\pi') \leq \abs\pi'$.
        \item $\absf[st](\sigma') \leq \abs\sigma'$.
            Since $\absf[va](b_i) = \abs b_i$ and $\absf[val](d_i) = \abs d_i$ for $1 \leq i \leq n$ Lemma \ref{apx:pcfasound:lem1} is applicable once more and gives us $\absf[st](\sigma') \leq \abs\sigma'$ and completes the proof of this case.
    \end{enumerate}
}%
\end{proof}

\section{Proof of Theorem~\ref{thm:ACSsound}}\label{apx:ACSsound}
\subparagraph{Terminology.}
Analogously to our remark in section \ref{term:actcomp} on active components for rules \textbf{AbsR} in the abstract operational semantics it is possible to identify a similar pattern in the concrete operational semantics. Henceforth we will speak
of the concrete active component $(\pid,q,q')$ of a rule \textbf{R} of the concrete operational semantics and we will say
the abstract active component $(\abs\pid,\abs q,\abs q')$ of a rule \textbf{AbsR} of the abstract operational semantics where \textbf{AbsR} is the abstract counterpart of \textbf{R}. We will omit the adjectives abstract and concrete when there is no confusion.

\begin{lemma}\label{apx:ACSsound:lem1}
    Suppose $s \semTo s'$ using the concrete rule \textbf{R} with concrete active component $(\pid,q,q')$ and $\abs s \geq \absf[cfa](s)$.
    Then $\abs s \cfaTo \abs s'$ with $\abs s' \geq \absf[cfa](s')$ using rule $\textbf{AbsR}$ with abstract active component $(\absf[pid](\pid),\absf[ps](q),\absf[ps](q'))$.
\end{lemma}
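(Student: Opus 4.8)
The plan is to obtain this lemma as a refinement of Theorem~\ref{thm:CFAsound}, by re-examining the case analysis in its proof (Appendix~\ref{apx:proofCFAsound}) and recording, in each case, which abstract rule is fired and what its active component is. Theorem~\ref{thm:CFAsound} already supplies an abstract transition $\abs s\cfaTo\abs s'$ with $\abs s'\geq\absf[cfa](s')$; the only additional content here is the bookkeeping that identifies the firing rule as the abstract counterpart \textbf{AbsR} of \textbf{R} and pins down its active component to exactly $(\absf[pid](\pid),\absf[ps](q),\absf[ps](q'))$, where $q=\pi(\pid)$ and $q'=\pi'(\pid)$.

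First I would observe, by direct inspection of the two rule tables (Figures~\ref{fig:concrete-rules} and~\ref{fig:abstract-rules}), that in each case of the proof of Theorem~\ref{thm:CFAsound} the abstract witness transition is constructed using precisely the abstract counterpart \textbf{AbsR} of the concrete rule \textbf{R} applied at $s$; the cases are in bijective correspondence. Next I would check that the distinguished abstract process state in the premise of \textbf{AbsR} may be taken to be $\absf[ps](q)$: since $\absf[cfa](s)\leq\abs s$ entails $\absf[proc](\pi)\leq\abs\pi$, Corollary~\ref{apx:pcfasound:cor1}(i) gives $\absf[ps](\pi(\pid))\in\abs\pi(\absf[pid](\pid))$, so $\absf[ps](q)$ is a legitimate choice of premise state at abstract pid $\absf[pid](\pid)$. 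The remaining side conditions of \textbf{AbsR} (resolving continuation addresses through the abstract store, resolving the sent or matched message, and $\abs\mailmatch$ returning the appropriate triple) are discharged exactly as in the corresponding case of Appendix~\ref{apx:proofCFAsound}, using the soundness conditions of Definition~\ref{def:soundAbstraction} together with $\absf[st](\sigma)\leq\abs\sigma$ and $\absf[ms](\mu)\leq\abs\mu$.

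The crux is to show that the abstract successor produced in the conclusion of \textbf{AbsR} is not merely an over-approximation of $\absf[ps](q')$ but is equal to it, so that the active component is determined exactly. The proof of Theorem~\ref{thm:CFAsound} already exhibits, in each case, an abstract post-state $\abs q'$ with $\absf[ps](q')=\abs q'$ while establishing $\absf[proc](\pi')\leq\abs\pi'$. I would re-use that computation, emphasising the equalities it relies on: that $\abs\Time$ and $\abs\Data$ are flat, so that the soundness inequality $\absf[t](\tick(\ell,t))\leq\abs\tick(\ell,\absf[t](t))$ collapses to an equality (as already noted in the \ref{rule:apply} case), and that the fresh-address functions commute on the nose with the abstraction maps, i.e.\ $\absf[va](\newVaddr(\pid,x,\delta,q))=\abs\newVaddr(\absf[pid](\pid),x,\absf[d](\delta),\absf[ps](q))$ by Corollary~\ref{apx:pcfasound:cor1}(v), with the analogous identities for $\abs\newKPush$, $\abs\newKPop$ and $\abs\newPid$. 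These equalities propagate through the environment abstraction $\absf[env]$ and hence yield $\absf[ps](q')=\abs q'$, so that the active component of \textbf{AbsR} is the claimed $(\absf[pid](\pid),\absf[ps](q),\absf[ps](q'))$.

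The main obstacle I anticipate is precisely this passage from inequality to equality: Theorem~\ref{thm:CFAsound} is phrased entirely with $\leq$, whereas the active-component statement demands identifying the \emph{exact} abstract states, so the argument must carefully lean on flatness of the basic abstract domains and on the exact commutation of the allocation functions rather than on their mere monotonicity. Once these equalities are in place, the remaining claim $\abs s'\geq\absf[cfa](s')$ is immediate from Theorem~\ref{thm:CFAsound} itself.
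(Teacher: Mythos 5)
Your proposal is correct and takes essentially the same route as the paper: the paper's own proof is literally ``follows from inspection of the proof of Theorem~\ref{thm:CFAsound}'', and your argument is exactly that inspection carried out explicitly. In particular, you correctly identify the one point that makes the inspection work --- that in each case of Appendix~\ref{apx:proofCFAsound} the constructed abstract successor state equals $\absf[ps](q')$ on the nose (via flatness of $\abs\Time$/$\abs\Data$ and the exact commutation of the allocation functions, Corollary~\ref{apx:pcfasound:cor1}), rather than merely dominating it.
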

\begin{proof}
    The claim follows from inspection of the proof of Theorem~\ref{thm:CFAsound}.
\end{proof}

\begin{proof}[Proof of Theorem~\ref{thm:ACSsound}]
    Suppose $s \semTo s'$ using rule \textbf{R} of the concrete operational semantics with active component $(\pid,q,q')$.
    We will prove our claim by case analysis on \textbf{R}.
    \begin{itemize}
        \item \textbf{R} = \ref{rule:funeval}, \ref{rule:argeval}, \ref{rule:apply} or \ref{rule:vars}.
            Take $\abs s = \absf[cfa](s)$; Lemma~\ref{apx:ACSsound:lem1} gives us that
            $\abs s \cfaTo \abs s'$ using abstract rule \textbf{AbsR} = \ref{absrule:funeval}, \ref{absrule:argeval}, \ref{absrule:apply} or \ref{rule:vars} respectively with active component $(\abs\pid, \abs q, \abs q')$ where $\abs\pid = \absf[pid](\pid)$, $\abs q = \absf[ps](q)$ and $\abs q' = \absf[ps](q')$.
            It follows that
            \[
                \vec{r} := \eTau\abs\pid[\abs q-->\abs q'] \in R.
            \]
            Since $s \semTo s'$ with active component $(\pid,q,q')$ it follows that
            \begin{enumerate}
                \item $\absf[acs](s)(\abs\pid, \abs q) \geq 1$,
                \item $\absf[acs](s')(\abs\pid, \abs q) = \absf[acs](s)(\abs\pid, \abs q) - 1$ and
                \item $\absf[acs](s')(\abs\pid, \abs q') = \absf[acs](s)(\abs\pid, \abs q') + 1$
            \end{enumerate}
            as $\abs\pid = \absf[pid](\pid)$, $\abs q = \absf[ps](q)$ and $\abs q' = \absf[ps](q')$.
            We know $\absf[acs](s)\leq\vec{v}$ and thus
            \[
                \vec{v}(\abs\pid, \abs q) \geq 1.
            \]
            If we define
            \[
              \vec{v'} := \vec{v}[(\abs\pid, \abs q) \mapsto \vec{v}(\abs\pid, \abs q) - 1, (\abs\pid, \abs q') \mapsto \vec{v}(\abs\pid, \abs q') + 1 ],
            \]
            then it is clear that
                $\vec{v} \acsTo \vec{v'}$ using rule $\vec{r} \in R$ and the inequalities
                \begin{align*}
                   \absf[acs](s')(\abs\pid, \abs q) = \absf[acs](s)(\abs\pid, \abs q) - 1 &\leq \vec{v}(\abs\pid, \abs q) - 1 = \vec{v'}(\abs\pid, \abs q)\\
                   \absf[acs](s')(\abs\pid, \abs q') = \absf[acs](s)(\abs\pid, \abs q') + 1 &\leq \vec{v}(\abs\pid, \abs q') + 1 = \vec{v'}(\abs\pid, \abs q');
                \end{align*}
                the consequence of the latter two is that $\absf[acs](s') \leq \vec{v'}$, since $\absf[acs](s)\leq\vec{v}$, which completes the proof of this case.
        \item \textbf{R} = \ref{rule:receive}.
            Letting $\abs s = \absf[cfa](s)$ Lemma~\ref{apx:ACSsound:lem1} yields that
            $\abs s \cfaTo \abs s'$ using abstract rule \ref{absrule:receive} with active component $(\abs\pid, \abs q, \abs q')$ where $\abs\pid = \absf[pid](\pid)$, $\abs q = \absf[ps](q)$ and $\abs q' = \absf[ps](q')$.
            We note that $s = \tuple{\pi,\sigma,\mu}$ and $\abs s = \tuple{\abs\pi,\abs\sigma,\abs\mu}$ where
            $\abs\pi = \absf[proc](\pi)$, $\abs\sigma = \absf[st](\sigma)$ and $\abs\mu = \absf[ms](\mu)$.
            Let the message matched by $\mailmatch$ and extracted from $\mu(\pid)$ be $d=(p_i, \rho')$
            then inspecting rule \ref{absrule:receive} we can assume that during $\abs s \cfaTo \abs s'$ message $\abs d=(p_i, \abs\rho')$, where $\abs\rho' = \absf[env](\rho')$, is matched by $\abs\mailmatch$.
            Since the message abstraction is a sound data abstraction we know that
            \[
                \abs m := \absf[msg](\resolve(\sigma, d))\in \absresolve[\text{msg}](\abs\sigma, \abs d\:)
            \]
            and hence we have
            \[
                \vec{r} := \eRec\abs\pid[\abs q--?\abs m-->\abs q'] \in R
            \]
            Additionally we know
            \begin{enumerate}
                \item $\absf[acs](s)(\abs\pid, \abs q) \geq 1$,
                \item $\absf[acs](s)(\abs\pid, \abs m) \geq 1$,
                \item $\absf[acs](s')(\abs\pid, \abs q) = \absf[acs](s)(\abs\pid, \abs q) - 1$,
                \item $\absf[acs](s')(\abs\pid, \abs q') = \absf[acs](s)(\abs\pid, \abs q') + 1$ and
                \item $\absf[acs](s')(\abs\pid, \abs m) = \absf[acs](s)(\abs\pid, \abs m) - 1$
            \end{enumerate}
            since $d$ is the message extracted from $\mu(\pid)$ and $\abs m = \absf[msg](\resolve(\sigma, d))$.
            By assumption we know $\absf[acs](s)\leq\vec{v}$ which implies
            \[
                \vec{v}(\abs\pid, \abs q) \geq 1 \text{ and } \vec{v}(\abs\pid, \abs m) \geq 1
            \]
            and so we can define
            \[
                \vec{v'} := \vec{v}\left[
                \begin{aligned}
                    (\abs\pid, \abs q) &\mapsto \vec{v}(\abs\pid, \abs q\:) - 1, \\
                    (\abs\pid, \abs m) &\mapsto \vec{v}(\abs\pid, \abs m) - 1, \\
                    (\abs\pid, \abs q') &\mapsto \vec{v}(\abs\pid, \abs q') + 1 \\
                \end{aligned}
                \right];
            \]
            it is then clear that, using rule $\vec{r} \in R$,
            $\vec{v} \acsTo \vec{v'}$ and
                \begin{align*}
                   \absf[acs](s')(\abs\pid, \abs q) = \absf[acs](s)(\abs\pid, \abs q) - 1 &\leq \vec{v}(\abs\pid, \abs q) - 1 = \vec{v'}(\abs\pid, \abs q)\\
                   \absf[acs](s')(\abs\pid, \abs m) = \absf[acs](s)(\abs\pid, \abs m) - 1 &\leq \vec{v}(\abs\pid, \abs m) - 1 = \vec{v'}(\abs\pid, \abs m)\\
                   \absf[acs](s')(\abs\pid, \abs q') = \absf[acs](s)(\abs\pid, \abs q') + 1 &\leq \vec{v}(\abs\pid, \abs q') + 1 = \vec{v'}(\abs\pid, \abs q').
                \end{align*}
            Hence, since $\absf[acs](s)\leq\vec{v}$, we can conclude $\absf[acs](s') \leq \vec{v'}$ as desired.
            \item \textbf{R} = \ref{rule:send}.
            Using Lemma~\ref{apx:ACSsound:lem1}, with $\abs s = \absf[cfa](s)$, gives
            $\abs s \cfaTo \abs s'$ with active component $(\abs\pid, \abs q, \abs q')$ for the abstract rule \ref{absrule:send}  where $\abs\pid = \absf[pid](\pid)$, $\abs q = \absf[ps](q)$ and $\abs q' = \absf[ps](q')$.
            Examining the concrete and abstract states we see $s = \tuple{\pi,\sigma,\mu}$ and $\abs s = \tuple{\abs\pi,\abs\sigma,\abs\mu}$ where
            $\abs\pi = \absf[proc](\pi)$, $\abs\sigma = \absf[st](\sigma)$ and $\abs\mu = \absf[ms](\mu)$.
            Let the pid of the recipient be $\pid'$ and let $d$ be the value enqueued to $\pid'$'s mailbox $\mu(\pid')$;
            inspecting the proof of Theorem \ref{thm:CFAsound}
            the pid of the abstract recipient is $\abs\pid' \is \absf[pid](\pid')$ and the sent abstract value is $\abs d = \absf[val](d)$.
            Appealing to the soundness of the message abstraction we obtain
            \[
                \abs m := \absf[msg](\resolve(\sigma, d))\in \absresolve[\text{msg}](\abs\sigma, \abs d\:)
            \]
            and hence we have
            \[
                \vec{r} := \eSend\abs\pid[\abs q --\abs{\pid'}!\abs m--> \abs q'] \in R
            \]
            Additionally we know
            \begin{enumerate}
                \item $\absf[acs](s)(\abs\pid, \abs q) \geq 1$,
                \item $\absf[acs](s')(\abs\pid, \abs q) = \absf[acs](s)(\abs\pid, \abs q) - 1$,
                \item $\absf[acs](s')(\abs\pid, \abs q') = \absf[acs](s)(\abs\pid, \abs q') + 1$ and
                \item $\absf[acs](s')(\abs\pid', \abs m) = \absf[acs](s)(\abs\pid', \abs m) + 1$
            \end{enumerate}
            since $d$ is the message enqueued to $\mu(\pid')$ and $\abs m = \absf[msg](\resolve(\sigma, d))$.
            From our assumption we know $\absf[acs](s)\leq\vec{v}$ and thus
            \[
                \vec{v}(\abs\pid, \abs q) \geq 1;
            \]
            making the definition
            \[
                \vec{v'} := \vec{v}\left[
                \begin{aligned}
                    (\abs\pid, \abs q) &\mapsto \vec{v}(\abs\pid, \abs q\:) - 1, \\
                    (\abs\pid, \abs q') &\mapsto \vec{v}(\abs\pid, \abs q') + 1, \\
                    (\abs\pid', \abs m) &\mapsto \vec{v}(\abs\pid', \abs m) + 1 \\
                \end{aligned}
                \right];
            \]
            we observe that we are able to use rule $\vec{r} \in R$ to make the step
            $\vec{v} \acsTo \vec{v'}$. Further the inequalities
                \begin{align*}
                   \absf[acs](s')(\abs\pid, \abs q) = \absf[acs](s)(\abs\pid, \abs q) - 1 &\leq \vec{v}(\abs\pid, \abs q) - 1 = \vec{v'}(\abs\pid, \abs q)\\
                   \absf[acs](s')(\abs\pid, \abs q') = \absf[acs](s)(\abs\pid, \abs q') + 1 &\leq \vec{v}(\abs\pid, \abs q') + 1 = \vec{v'}(\abs\pid, \abs q')\\
                   \absf[acs](s')(\abs\pid', \abs m) = \absf[acs](s)(\abs\pid', \abs m) - 1 &\leq \vec{v}(\abs\pid', \abs m) - 1 = \vec{v'}(\abs\pid', \abs m).
                \end{align*}
            imply, since $\absf[acs](s)\leq\vec{v}$, that $\absf[acs](s') \leq \vec{v'}$ which concludes the proof of this case.
            \item \textbf{R} = \ref{rule:spawn}.
            Take $\abs s = \absf[cfa](s)$; Lemma~\ref{apx:ACSsound:lem1} gives us that
            $\abs s \cfaTo \abs s'$ using abstract rule \ref{absrule:spawn} with active component $(\abs\pid, \abs q, \abs q')$ where $\abs\pid = \absf[pid](\pid)$, $\abs q = \absf[ps](q)$ and $\abs q' = \absf[ps](q')$.
            We note that $s = \tuple{\pi,\sigma,\mu}$, $s' = \tuple{\pi',\sigma',\mu'}$ and $\abs s = \tuple{\abs\pi,\abs\sigma,\abs\mu}$ where
            $\abs\pi = \absf[proc](\pi)$, $\abs\sigma = \absf[st](\sigma)$ and $\abs\mu = \absf[ms](\mu)$.
            Further we can assume
            \begin{align*}
                \pi(\pid) &= \hbox to 6.6cm{$\tuple{\erl{fun()}\to e,\rho,a,\cntr}$\hfill}\\
                \sigma(a) &= \kArg{1}{\ell, d, \rho',c}\\
                d&= (\erl{spawn},\_)\\
                \pid'    &\is  \newPid(\pid, \ell, \timest)\\
                \pi'(\pid) &= \tuple{\pid', \rho', c,\cntr} = q' \\
                \pi'(\pid') &= \tuple{e,\rho,\StopAddr,\cntr_0} =: q''
            \end{align*}
            Noting that we are replicating the step $s \semTo s'$ in the abstract $\abs s \cfaTo \abs s'$,
            $\absf[cfa](s) = \abs s$ and $\absf[pid] \circ \newPid = \abs\newPid \circ \absf$ we can see that the
            new abstract pid created is $\abs\pid' = \absf[pid](\pid')$ together with its process state $\abs q'' = \absf[ps](q'')$.
            Hence we can conclude that
            \[
                \vec{r} := \eSpawn\abs\pid[\abs q--v\abs{\pid'}.\abs q''-->\abs q'] \in R
            \]
            and we observe that
            \begin{enumerate}
                \item $\absf[acs](s)(\abs\pid, \abs q) \geq 1$,
                \item $\absf[acs](s')(\abs\pid, \abs q) = \absf[acs](s)(\abs\pid, \abs q) - 1$,
                \item $\absf[acs](s')(\abs\pid, \abs q') = \absf[acs](s)(\abs\pid, \abs q') + 1$ and
                \item $\absf[acs](s')(\abs\pid', \abs q'') = \absf[acs](s)(\abs\pid', \abs q'') + 1$.
            \end{enumerate}
            Now the assumption $\absf[acs](s)\leq\vec{v}$ allows us to conclude
            \[
                \vec{v}(\abs\pid, \abs q) \geq 1;
            \]
            so that we can define
            \[
                \vec{v'} := \vec{v}\left[
                \begin{aligned}
                    (\abs\pid, \abs q) &\mapsto \vec{v}(\abs\pid, \abs q\:) - 1, \\
                    (\abs\pid, \abs q') &\mapsto \vec{v}(\abs\pid, \abs q') + 1, \\
                    (\abs\pid', \abs q'') &\mapsto \vec{v}(\abs\pid', \abs q'') + 1\\
                \end{aligned}
                \right];
            \]
            and use rule $\vec{r} \in R$ to make the step
            $\vec{v} \acsTo \vec{v'}$. Further with the inequalities
                \begin{align*}
                   \absf[acs](s')(\abs\pid, \abs q) = \absf[acs](s)(\abs\pid, \abs q) - 1 &\leq \vec{v}(\abs\pid, \abs q) - 1 = \vec{v'}(\abs\pid, \abs q)\\
                   \absf[acs](s')(\abs\pid, \abs q') = \absf[acs](s)(\abs\pid, \abs q') + 1 &\leq \vec{v}(\abs\pid, \abs q') + 1 = \vec{v'}(\abs\pid, \abs q')\\
                   \absf[acs](s')(\abs\pid', \abs q'') = \absf[acs](s)(\abs\pid', \abs q'') - 1 &\leq \vec{v}(\abs\pid', \abs q'') - 1 = \vec{v'}(\abs\pid', \abs q'').
                \end{align*}
            and our assumption $\absf[acs](s)\leq\vec{v}$ we see that $\absf[acs](s') \leq \vec{v'}$ which completes the proof of this case and the theorem.
    \end{itemize}
\end{proof}

\end{document}